\newtheorem{observation}{Observation}
\newcommand{\universalgates}{\mathcal{U}}
\newcommand{\nodes}{\mathit{nodes}}
\newcommand{\arcs}{\mathit{arcs}}
\newcommand{\N}{{\mathbb{N}}}
\newcommand{\ket}[1]{|#1\rangle}
\newcommand{\bra}[1]{\langle #1 |}
\newcommand{\graph}{G}
\newcommand{\graphcal}{\mathcal{G}}
\newcommand{\treewidth}{\mathbf{tw}}
\newcommand{\height}{\mathit{height}}
\newcommand{\valuetensornetwork}{\mathrm{val}}
\newcommand{\valuefeasibilitytensornetwork}{\mathrm{VAL}}
\newcommand{\simulation}{\hat{\lambda}}
\newcommand{\evaluation}{\hat{\lambda}}
\newcommand{\truncatedfeasibilitysimulation}{\hat{\Lambda}}
\newcommand{\cutwidth}{\mathbf{cw}}
\newcommand{\onlinecutwidth}{\mathbf{ow}}
\newcommand{\poly}{\mathit{poly}}
\newcommand{\inputvertices}{\mathit{In}}
\newcommand{\outputvertices}{\mathit{Out}}
\newcommand{\internalvertices}{\mathit{Mid}}
\newcommand{\R}{{\mathbb{R}}}
\newcommand{\atensor}{g}
\newcommand{\tensorset}{\mathcal{F}}
\newcommand{\abstractnetwork}{\mathcal{N}}
\newcommand{\indexset}{\mathcal{I}}
\newcommand{\rank}{\mathit{rank}}
\newcommand{\tensors}{\mathbb{T}}
\newcommand{\carvingwidth}{\mathit{carw}}
\newcommand{\leaves}{\mathit{leaves}}
\newcommand{\width}{\mathbf{w}}
\newcommand{\vertexlabelingfunction}{{\theta}}
\newcommand{\edgelabelingfunction}{{\xi}}
\newcommand{\tensorContraction}{\mathit{Contr}}
\newcommand{\truncation}{{\mathit{Trunc}}}
\newcommand{\tensorSpace}{\mathbb{T}}
\newcommand{\feasibilitySet}{\mathcal{F}}
\newcommand{\operators}{\bm{L}}
\newcommand{\hilbert}{\mathcal{H}}
\newcommand{\trace}{{\mathit{tr}}}
\newcommand{\lang}{\mathcal{L}}
\newcommand{\C}{{\mathbb{C}}} 
\begin{document}

\title{On the Satisfiability of Quantum Circuits of Small 
Treewidth \footnote{This is an extended version of a paper that appeared at CSR 2015 \cite{deOliveiraOliveira2015Satisfiability}.}
\thanks{
This work was supported by the European Research Council, ERC grant agreement 339691, within the context 
of the project Feasibility, Logic and Randomness (FEALORA).}
}
\author{Mateus de Oliveira Oliveira}
\institute{M. de Oliveira Oliveira  \at 
Institute of Mathematics - Czech Academy of Sciences \\ 
\v{Z}itn\'{a} 25,  CZ - 115 67 Praha 1, Czech Republic \\
\email{mateus.oliveira@math.cas.cz} }

\maketitle
\begin{abstract}
It has been known for almost three decades that many $\mathrm{NP}$-hard optimization problems 
can be solved in polynomial time when restricted to structures of constant treewidth. 
In this work we provide the first extension of such results to the quantum setting. 
We show that given a quantum circuit $C$ with $n$ uninitialized inputs, $\mathit{poly}(n)$ gates,
and treewidth $t$, one can compute in time $(\frac{n}{\delta})^{\exp(O(t))}$ 
a classical assignment $y\in \{0,1\}^n$ that maximizes the acceptance 
probability of $C$ up to a $\delta$ additive factor. 
In particular, our algorithm runs in polynomial time if $t$ is constant 
and $1/poly(n) <  \delta < 1$. For unrestricted values of $t$, this problem 
is known to be complete for the complexity class $\mathrm{QCMA}$, a quantum generalization of MA.  
In contrast, we show that the same problem is $\mathrm{NP}$-complete if $t=O(\log n)$ even when $\delta$ is constant. 

On the other hand, we show that given a $n$-input quantum circuit $C$ of treewidth $t=O(\log n)$, and a constant $\delta<1/2$, 
it is $\mathrm{QMA}$-complete to determine whether there exists a quantum state $\ket{\varphi}\in (\C^d)^{\otimes n}$
such that the acceptance probability of $C\ket{\varphi}$ is greater than $1-\delta$, or whether for every 
such state $\ket{\varphi}$, the acceptance probability of $C\ket{\varphi}$ is less than $\delta$. 
As a consequence, under the widely believed assumption that $\mathrm{QMA} \neq \mathrm{NP}$, we have that
quantum witnesses are strictly more powerful than classical witnesses with respect to 
Merlin-Arthur protocols in which the verifier is a quantum 
circuit of logarithmic treewidth. 
\keywords{Treewidth \and Satisfiability of Quantum Circuits \and  Tensor Networks \and Merlin-Arthur Protocols}
\end{abstract}

\section{Introduction}
\label{section:Introduction}

The notions of tree decomposition and treewidth of a graph \cite{RobertsonSeymour1984}  
play a central role in algorithmic theory. On the one hand, many natural classes 
of graphs have small treewidth. For instance, trees have treewidth at most $1$, series-parallel graphs  and 
outer-planar graphs have treewidth at most $2$, Halin graphs have treewidth at most $3$, and 
$k$-outerplanar graphs for fixed $k$ have treewidth $O(k)$. 
On the other hand, many problems that are hard for $\mathrm{NP}$ on general graphs, and even problems that 
are hard for higher levels of the polynomial hierarchy, may be solved in 
polynomial time when restricted to graphs of constant tree-width 
\cite{ArnborgLagergrenSeese1991,ArnborgProskurowski1989,Courcelle1990Monadic}.
In particular, during the last decade, several algorithms 
running in time $2^{O(t)}\cdot n^{O(1)}$ have been proposed for the satisfiability of 
classical circuits\footnote{In the case of classical circuits, it is assumed that each variable labels a unique input  of 
unbounded fan-out.} and boolean constraint satisfaction problems of size $n$ and treewidth $t$ 
\cite{AlekhnovichRazborov2002,AllenderChenLouPeriklisPapakonstantinouTang2014,BroeringLokamSatyanarayana2004,GeorgiouKonstantinosPapakonstantinou2008}.

In this work, we identify for the first time a natural quantum optimization problem that becomes feasible 
when restricted to graphs of constant treewidth. More precisely, we show how to find in polynomial time 
a classical assignment that maximizes, up to an inverse polynomial additive factor, the acceptance probability 
of a quantum circuit of constant treewidth. 
For quantum circuits of unrestricted treewidth this problem is complete for $\mathrm{QCMA}$, a quantum generalization of 
MA \cite{AharonovNaveh2002}. 
Before stating our main result, we fix some notation. If $C$ is a quantum circuit acting on $n$ $d$-dimensional qudits, 
and $\ket{\psi}$ is a quantum state in $(\C^{d})^{\otimes n}$, then we denote by $\mathit{Pr}(C,\ket{\psi})$ the probability 
that the state of the output of $C$ collapses to $\ket{1}$ when the input of $C$ is initialized with 
$\ket{\psi}$ and the output is measured in the standard basis $\{\ket{0},\ket{1},...,\ket{d-1}\}$. If $y$ is a string in 
$\{0,...,d-1\}^{n}$ then we let $\ket{y}=\otimes_{i=1}^{n}\ket{y_i}$ denote the basis state corresponding to 
$y$. We let $\mathit{Pr}^{cl}(C) = \max_{{y\in \{0,...,d-1\}^n}} \mathit{Pr}(C, \ket{y})$ 
denote the maximum acceptance probability of $C$ among all classical input strings in $\{0,...,d-1\}^n$.
The treewidth of a quantum circuit is defined as the treewidth of its underlying undirected graph. 

\begin{theorem}[Main Theorem]
\label{theorem:CircuitSatisfiability}
Let $C$ be a quantum circuit with $n$ uninitialized inputs, $\mathit{poly}(n)$ gates, and treewidth $t$. 
For each $\delta$ with  $1/\mathit{poly}(n)<\delta< 1$, one may find in time $(\frac{n}{\delta})^{\exp(O(t))}$ 
a string $y\in \{0,...,d-1\}^{n}$ 
such that ${|\mathit{Pr}(C,\ket{y}) - \mathit{Pr}^{cl}(C)|\leq \delta}$.
\end{theorem}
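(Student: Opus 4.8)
The plan is to reduce the optimization to the contraction of a tensor network whose underlying graph inherits the bounded treewidth of the circuit, and then to exploit the fact that small-treewidth tensor networks admit contraction orders of small \emph{width}, so each intermediate tensor has bounded rank. Concretely, I would first associate to the circuit $C$ a tensor network $\tensorNetwork$: each gate becomes a tensor, each wire an index, the uninitialized inputs contribute free ``open'' indices $i_1,\dots,i_n$, and the designated output is fixed to $\ket{1}$. The key observation is that for a fixed classical assignment $y$, the full contraction of $\tensorNetwork$ with the input indices pinned to $y$ computes exactly $\langle 1|C\ket{y}$, whose squared modulus is $\mathit{Pr}(C,\ket{y})$; equivalently one works with the doubled network $\tensorNetwork \otimes \overline{\tensorNetwork}$ so that the contraction directly yields the acceptance probability. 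Since the treewidth of the interaction graph of $\tensorNetwork$ is $O(t)$ and index dimensions are $O(d)$ (or $O(d^2)$ after doubling), standard results give a contraction ordering in which every intermediate tensor has at most $\exp(O(t))$ indices, hence $d^{\exp(O(t))}$ entries.

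The difficulty is that we do not want to fix $y$ in advance: we want to \emph{maximize} over $y$. The idea is to keep the $n$ input indices symbolic and carry along, at each stage of the contraction, not a single number but a function of the still-``unresolved'' input variables touched by the partially contracted region. Because the contraction ordering has small width, at any point only $\exp(O(t))$ input variables are simultaneously ``active'' on the boundary, so each partial object is a tensor indexed by (i) the $\exp(O(t))$ open bond indices of the cut and (ii) an assignment to the $\exp(O(t))$ active input variables --- a table of size $d^{\exp(O(t))}$. When an input variable leaves the boundary, we do \emph{not} sum over it; instead we take a pointwise maximum (a max-contraction) over that coordinate, recording also the optimal partial assignment so it can be reconstructed. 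This is the familiar dynamic-programming-over-a-tree-decomposition pattern, lifted from $\{0,1\}$-valued constraint problems to real/complex-valued tensor entries with $\max$ replacing $\sum$ on the input legs and ordinary $\sum$ on the internal legs.

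The remaining --- and genuinely technical --- issue is numerical: probability amplitudes are real numbers, not discrete values, so the intermediate tables cannot be stored exactly. Here I would truncate: round every entry to a grid of resolution roughly $\delta / (\text{network size} \cdot d^{\exp(O(t))})$, which needs only $\mathit{poly}(\log(n/\delta))$ bits per entry. One then shows, by an induction along the contraction ordering, that truncation errors accumulate additively and the total error in the final estimate of each $\mathit{Pr}(C,\ket{y})$ --- and hence in the reported optimum --- stays below $\delta$; the bounded rank of every intermediate tensor is exactly what keeps the per-step error controlled. I expect this error-propagation bookkeeping to be the main obstacle: one must be careful that taking maxima does not amplify errors (it does not, since $|\max_j a_j - \max_j b_j| \leq \max_j |a_j - b_j|$) and that the number of arithmetic operations, $(n/\delta)^{\exp(O(t))}$ in total, matches the claimed running time. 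Finally, reading off the assignment $y$ achieving the truncated optimum and re-evaluating (or tracking back the stored argmax choices) yields a string with $|\mathit{Pr}(C,\ket{y}) - \mathit{Pr}^{cl}(C)| \leq \delta$, as required.
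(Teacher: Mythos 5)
Your reduction to a tensor network and the overall shape of the argument (a bounded-width contraction order plus rounding of intermediate tables) are in the spirit of the paper, but the core dynamic-programming step is unsound, and it is exactly the point where the paper has to introduce a new idea. You propose that when an input variable $y_i$ leaves the boundary you eliminate it by a pointwise maximum over that coordinate. However, the partial contraction of a region, for a fixed partial assignment of the inputs inside it, is a \emph{tensor} on the boundary indices with complex entries, not a scalar: a pointwise maximum over complex amplitudes is undefined, and even over absolute values it would select different values of $y_i$ for different boundary-index configurations, which does not correspond to any single consistent assignment. More fundamentally, which value of $y_i$ maximizes the final quantity $|f(y)|$ depends on the environment (the contraction of the rest of the network), so no local rule can commit to a value of $y_i$ at the moment it is absorbed. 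Note also that inputs whose wires have already been absorbed into a region still influence its boundary tensor, so the set of achievable boundary tensors is a priori exponential in the number of inputs contained in the region; your bookkeeping of ``active'' variables does not capture this dependence. The paper's solution is different: it carries, at each node of the contraction tree, the \emph{set} of all boundary tensors achievable over all partial initializations, and after every contraction collapses this set onto a tensor $\varepsilon$-net of size $(1/\varepsilon)^{\exp(O(r\log d))}$ (Proposition \ref{proposition:EpsilonNet}, Definition \ref{definition:TruncatedSimulation}, Theorem \ref{theorem:TensorNetworkSatisfiability}); the maximization is deferred to the root, where the surviving candidates are scalars, and the witness is recovered by backtracking.

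The second gap is your claim that truncation errors accumulate additively. They do not: contracting two tensors each carrying error $\varepsilon$ produces error of order $\varepsilon\cdot 3d^{2r}$, because the contraction is bilinear and sums over the $d^{2|\indexset_1\cap\indexset_2|}$ configurations of the shared indices (Lemma \ref{lemma:ErrorContraction}), so the inherited error is amplified by a factor $\Theta(d^{2r})$ at every level and the final error is $\varepsilon\cdot(3d^{2r}+1)^{h}$ in the height $h$ of the contraction tree. A generic small-width contraction order (e.g.\ the one extracted from Markov--Shi) can have height linear in $n$, making this blow-up exponential. This is precisely why the paper proves Theorem \ref{theorem:GoodContractionTree}: one can construct a contraction tree of rank $O(\Delta\cdot t)$ \emph{and} height $O(\Delta\cdot t\cdot\log|\abstractnetwork|)$, via the contractive carving decompositions of Lemma \ref{lemma:GoodCarving}, so that $(3d^{2r}+1)^{h}$ is only polynomial in $n$ and the choice $\varepsilon=\delta/|\abstractnetwork|^{O(\Delta^2 t^2\log d)}$ yields the claimed bound. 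Your proposal addresses neither the depth of the contraction order nor the multiplicative error growth, and without both the running time and the accuracy guarantee fail.
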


We note that the algorithm that finds the string $y\in \{0,1\}^n$ in Theorem \ref{theorem:CircuitSatisfiability} 
is completely deterministic.
The use of treewidth in quantum algorithmics was pioneered by Markov and Shi \cite{MarkovShi2008} who showed that 
quantum circuits of logarithmic treewidth 
can be simulated in polynomial time with exponentially high precision. Note that the simulation of quantum circuits 
\cite{Gottesman1998,JozsaLinden2003,MarkovShi2008,Valiant2002,Vidal2003} 
deals with the problem of computing the acceptance probability of a quantum circuit when all inputs 
are already initialized, and thus may be regarded as a generalization of the classical P-complete problem CIRCUIT-VALUE. 
On the other hand, Theorem \ref{theorem:CircuitSatisfiability} deals with the problem of finding a classical assignment 
that maximizes the acceptance probability of a quantum circuit with uninitialized inputs, and thus may be regarded as a generalization 
of the classical $\mathrm{NP}$-complete problem CIRCUIT-SAT. In this sense, Theorem \ref{theorem:CircuitSatisfiability} is 
the first result showing that a quantum generalization of CIRCUIT-SAT can be solved in polynomial time when restricted to circuits of 
constant treewidth.

It is interesting to determine whether the time complexity of our algorithm can be substantially improved. To 
address this question, we first introduce the {\em online-width} of a circuit, a width measure for DAGs that is at least 
as large as the treewidth of their underlying undirected graphs.
If $G=(V,E)$ is a directed graph and $V_1,V_2\subseteq V$ are two subsets of vertices of $V$ with $V_1\cap V_2 = \emptyset$ then 
we let $E(V_1,V_2)$ be the set of all edges with one endpoint in $V_1$ and another endpoint in $V_2$. If $\omega=(v_1,v_2,...,v_n)$ 
is a total ordering of the vertices in $V$, then we let $\cutwidth(G,\omega) = \max_{i}|E(\{v_1,...,v_i\},\{v_{i+1},...,v_n\})|$. 
The {\em cutwidth} of $G$  is defined as $\cutwidth(G) = \min_{\omega}\cutwidth(G,\omega)$ where the 
minimum is taken over all possible total orderings of the vertices of $G$ \cite{BodlaenderThilikosSerna2000}.  
If $G$ is a DAG, then the {\em online-width} of $G$ is defined as  $\onlinecutwidth(G) = \min_{\omega}\cutwidth(G,\omega)$ where 
the minimum is taken only among the {\em topological orderings} of $G$. Treewidth, cutwidth and online-width 
are compared as follows\footnote{A proof that $\treewidth(G)\leq \cutwidth(G)$ can be found in \cite{Bodlaender1988}.}. 

\vspace{-10pt}
\begin{equation}
\label{equation:ComparisonMeasures}
\treewidth(G)  \leq \cutwidth(G) \leq \onlinecutwidth(G) 
\end{equation}

Theorem \ref{theorem:LogarithmicTreewidthNP} below states that finding a classical assignment that maximizes 
the acceptance probability of a quantum circuit of logarithmic online-width is $\mathrm{NP}$-complete even when $\delta$ is constant. 
We note that the same completeness result holds with respect to circuits of logarithmic treewidth.

\begin{theorem}
\label{theorem:LogarithmicTreewidthNP}
For any constant $\delta$ with $0<\delta < 1$, the following problem is $\mathrm{NP}$-complete:
Given a quantum circuit $C$ of online-width $O(\log n)$ with $n$ uninitialized inputs and $\mathit{poly}(n)$ gates, 
determine whether $Pr^{cl}(C)=1$ or whether $Pr^{cl}(C)\leq \delta$. 
\end{theorem}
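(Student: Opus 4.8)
The plan is to prove membership in $\mathrm{NP}$ and $\mathrm{NP}$-hardness separately. For membership, given a candidate string $y\in\{0,1\}^n$, I would use the Markov--Shi style tensor-network contraction along a topological ordering witnessing online-width $O(\log n)$ to compute $\mathit{Pr}(C,\ket{y})$ exactly (or to sufficient precision) in polynomial time: fixing the inputs to the classical string $\ket{y}$ turns the uninitialized-input circuit into a fully initialized one, whose acceptance probability is computed by contracting a tensor network whose bond dimension along each cut is $2^{O(\onlinecutwidth)} = \poly(n)$. Hence $\mathit{Pr}^{cl}(C)$ can be verified against the threshold in $\mathrm{NP}$ — importantly, because the gate set is fixed and the amplitudes are, say, dyadic rationals, the value $\mathit{Pr}(C,\ket{y})$ is exactly representable in polynomially many bits, so deciding ``$=1$'' versus ``$\le\delta$'' is clean.

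For hardness I would reduce from $3$-SAT. Given a formula $\phi$ with $n$ variables and $m=\poly(n)$ clauses, I want a quantum circuit $C_\phi$ on $n$ uninitialized inputs such that $\mathit{Pr}^{cl}(C_\phi)=1$ if $\phi$ is satisfiable and $\mathit{Pr}^{cl}(C_\phi)\le\delta$ (in fact $=0$, which suffices since $\delta>0$) otherwise. The naive construction — compute all $m$ clause values into ancillas and AND them — produces a circuit of large width because each variable feeds many clauses; the fan-out structure of $\phi$ need not have small online-width. The fix is the standard one for this kind of statement: process the clauses sequentially while carrying only an accumulator bit, and reconstruct each variable's value on demand from a ``conveyor belt'' of ancillas. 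Concretely, I lay the $n$ input wires plus one accumulator wire in a line; for clause $j$ I CNOT the (at most three) relevant input wires onto fresh scratch ancillas local to clause $j$'s gadget, compute the clause's literal via a constant-size reversible circuit, AND it into the accumulator, and uncompute the scratch. Arranging the gadgets left to right along a topological order, at any cut the live wires are the $n$ input wires, the accumulator, and $O(1)$ scratch wires of the current gadget — a cut of size $n+O(1)$.

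That, however, gives online-width $\Theta(n)$, not $O(\log n)$, so the main obstacle — and the real content of the reduction — is getting the width down to $O(\log n)$. The remedy is to not keep all $n$ input wires live simultaneously. Instead I would have Merlin's string $y$ be read off a small number of input wires that are \emph{recycled}: more precisely, I replace the $n$ parallel inputs by $O(\log n)$ ``address'' input wires together with a mechanism that, when clause $j$ needs variable $v$, uses the address wires to route $v$'s intended truth value. The clean way to do this is to let the $n$ uninitialized inputs be exactly the $n$ truth values but to route them through a \emph{sorting/permutation network of logarithmic cutwidth-per-layer}, interleaving the clause-evaluation gadgets between layers so that each variable wire is ``present'' in the gadget that needs it; a permutation network on $n$ wires has $O(\log n)$ layers and, crucially, can be drawn so that any topological cut crosses only $O(\log n)$ wires if we feed values in at the appropriate layer rather than all at the front. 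Equivalently, one can build on the well-known fact that classical $\mathrm{NC}^1$/bounded-width-branching-program computations (into which $3$-SAT's clause-checking trivially embeds after indexing) have circuit representations of logarithmic treewidth; Theorem~\ref{theorem:LogarithmicTreewidthNP} is then a matter of packaging such a representation as a reversible (hence quantum) circuit, observing that reversibility and the $\ket{y}\mapsto\mathit{Pr}$ semantics are preserved, and checking that uncomputation does not blow up the width. I expect verifying the $O(\log n)$ online-width bound of the assembled circuit — tracking live wires across gadget boundaries and across the routing layers — to be the delicate bookkeeping step; everything else (correctness of the Boolean gadgets, the $\{0,1\}$-vs-$\delta$ gap, $\mathrm{NP}$ membership) is routine.
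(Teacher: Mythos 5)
Your $\mathrm{NP}$-membership argument is fine and is essentially the paper's (simulate $C$ on $\ket{y}$ by tensor-network contraction, using $\treewidth(C)\leq\onlinecutwidth(C)=O(\log n)$). The gap is in the hardness direction, and it is not the ``delicate bookkeeping'' you anticipate but a fundamental obstruction: a \emph{deterministic} circuit that evaluates all $m$ clauses and ANDs them cannot have online-width $O(\log n)$ for general $3$-CNFs. In this circuit model each uninitialized input is a vertex of out-degree $1$, so the witness bit $y_i$ enters on a single wire and a topological ordering induces a read-once order on the inputs; a circuit of online-width $w$ computing $F(y)$ exactly is then, in effect, an ordered read-once branching program of width $2^{O(w\log d)}$ in that order (equivalently, it yields an $O(w\log d)$-cost one-way protocol across every prefix/suffix cut). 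There are explicit $O(1)$-CNFs --- e.g.\ satisfiable Tseitin formulas over constant-degree expanders, which are conjunctions of constant-arity parity constraints and whose satisfying sets are affine subspaces --- for which every balanced cut under every variable ordering has $2^{\Omega(n)}$ distinct subfunctions, forcing $w=\Omega(n)$. Your two rescue attempts do not help: the Barrington/$\mathrm{NC}^1$ route reads each variable polynomially many times, which here requires either fanning out an input vertex (not permitted, and copying still forces the carrier wire to stay live across all its uses) or keeping $\Omega(n)$ wires alive at some cut; and the permutation-network idea only shrinks the live set to the variables whose first and last uses straddle a cut, which is $\Theta(n)$ for expander-structured formulas under every clause ordering.

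The paper's reduction is different in kind: the verifier is \emph{probabilistic}. It samples a single clause index $r\in\{1,\dots,m\}$ into an $O(\log n)$-bit register, streams the witness past, retains only the $\leq 3$ bits relevant to $W_r$, and checks $W_r$; only the index register and $O(1)$ data bits are ever live, so the online-width is $O(\log n)$, with completeness $1$ but soundness only $1-1/m$. The constant gap $\delta$ is then obtained by a second, equally essential step: $q=\poly(n)$ independent copies of this verifier are laid out \emph{sequentially}, their accept bits fed as control bits into a chain of $O(\log n)$-bit adders followed by a threshold comparator, which increases the online-width by only an additive $O(\log n)$. So the two ideas your proposal is missing are (i) randomized spot-checking of a single clause so that the live state is logarithmic, and (ii) sequential error amplification with a logarithmic counter; the ``$\{0,1\}$-vs-$\delta$ gap'' that you describe as routine is in fact where half of the work lies.
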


An analog completeness result holds when the verifier is restricted to have logarithmic online-width and 
the witness is allowed to be an arbitrary quantum state. It was shown by Kitaev \cite{KitaevShenVyalyi2002}
that finding a $\delta$-optimal quantum witness for a quantum circuit of unrestricted width
is complete for the complexity class $\mathrm{QMA}$ for any constant $\delta$. Interestingly, Kitaev's completeness result 
is preserved when the quantum circuits are restricted to have logarithmic online-width. If $C$ is a quantum circuit 
with $n$ inputs, then 
we let $\Pr^{qu}(C)=\max_{\ket{\varphi}} Pr(C,\ket{\psi})$ be the maximum acceptance probability among 
all $n$-qudit quantum states $\ket{\psi}$.  

\begin{theorem}
\label{theorem:LogarithmicTreewidthQMA}
For any constant $\delta$ with $0<\delta< 1/2$, the following problem is $\mathrm{QMA}$-Complete:
Given a quantum circuit $C$ of online-width $O(\log n)$ with $n$ uninitialized inputs 
and $\poly(n)$ gates, determine whether ${Pr^{qu}(C)\geq 1-\delta}$ or whether ${Pr^{qu}(C)\leq \delta}$.
\end{theorem}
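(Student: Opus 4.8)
The plan is to treat the two inclusions separately, with essentially all of the work in the hardness direction. Membership in $\mathrm{QMA}$ is immediate and does not even use the width bound: a verifier receives a claimed optimal state $\ket{\varphi}$, uses it to initialize the $n$ uninitialized inputs of $C$, runs $C$ (only $\poly(n)$ gates), and accepts iff the designated output qubit is measured in $\ket 1$. This accepts with probability exactly $Pr(C,\ket{\varphi})$, so the promise ``$Pr^{qu}(C)\ge 1-\delta$ versus $Pr^{qu}(C)\le\delta$'' with the constant $\delta<1/2$ is precisely a $\mathrm{QMA}$ promise gap.

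For $\mathrm{QMA}$-hardness I would reduce from an arbitrary $L\in\mathrm{QMA}$. First amplify its verifier $V$ (with a $p$-qubit witness register, a $q$-qubit ancilla register and $T=\poly(n)$ one- and two-qubit gates) to completeness $1-2^{-n}$ and soundness $2^{-n}$, and then apply Kitaev's circuit-to-Hamiltonian construction \cite{KitaevShenVyalyi2002} to $V$. On the $(p+q)$-qubit data register together with a unary $T$-qubit clock register, this yields a Hamiltonian written as a sum $H=\sum_{i=1}^{M}\Pi_i$ with $M=O(T+q)$ and $0\preceq\Pi_i\preceq I$, where each $\Pi_i$ acts on only $O(1)$ qubits: the initialization, output and clock-stabilizer terms touch $O(1)$ fixed qubits, while the $t$-th propagation term touches the $O(1)$ clock qubits near position $t$ together with the two data qubits on which the $t$-th gate of $V$ acts. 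I will use the standard facts that for a yes-instance the honest history state $\ket\eta$ satisfies $\langle\eta|H|\eta\rangle\le 2^{-n}$, whereas for a no-instance $\lambda_{\min}(H)\ge 1/\poly(n)$.

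I would then build, for a suitable polynomial parameter $k$, a circuit $C$ whose uninitialized inputs are $k$ fresh copies of the data-plus-clock register, processed one after another. On copy $r$ the circuit: (a) prepares a uniform superposition over $\{1,\dots,M\}$ in an $O(\log M)$-qubit ``term register''; (b) reads the qubits of copy $r$ in a fixed layout order, using only the term register (compared against the position hard-wired into each gadget) to swap out and carry along exactly the $O(1)$ qubits the chosen term $\Pi_t$ acts on; (c) coherently measures $\Pi_t$ (selected by the term register) on those carried qubits into a fresh bit; and (d) conjoins the negation of that bit into a single running ``accept'' bit. Placing each per-copy wire's output immediately after its last gate gives a topological order of the gates along which every cut is crossed only by the running accept bit, the $O(\log M)$-qubit term register of the current copy, the $O(1)$ carried qubits, and $O(1)$ scratch qubits; hence $\onlinecutwidth(C)=O(\log n)$, and $C$ has $\poly(n)$ gates. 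By \eqref{equation:ComparisonMeasures} the same hardness then holds for circuits of logarithmic treewidth.

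It remains to check correctness. A single copy-$r$ check accepts with probability $\mathrm{tr}(E\rho_r)$ where $E=I-H/M$ and $\rho_r$ is the reduced state of copy $r$, and since the $k$ checks use independent randomness on disjoint registers, for every (possibly entangled) witness $\rho$ the acceptance probability of $C$ equals $\mathrm{tr}(E^{\otimes k}\rho)\le\|E\|_\infty^{k}=(1-\lambda_{\min}(H)/M)^{k}$; for a no-instance this drops below $\delta$ once $k$ is a large enough polynomial in $n$, while for a yes-instance the prover sends $\ket\eta^{\otimes k}$, which passes all $k$ checks with probability at least $(1-2^{-n})^{k}\ge 1-\delta$ (finitely many small instances are decided by brute force). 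So $C$ is a yes-instance of the problem exactly when $x\in L$, and the construction is polynomial-time. The crux is doing everything at logarithmic online-width, and this forces two things at once. The checks certifying the witness must be \emph{spatially local}, so that a streaming circuit can carry them out; this is exactly what Kitaev's single evolving register plus unary clock provides, as it makes every Hamiltonian term $O(1)$-local. And Kitaev's inverse-polynomial spectral gap must be boosted to a \emph{constant} gap \emph{without ever re-reading the witness} (re-reading it would make the online-width polynomial); this is what the $\poly(n)$ independent copies achieve, with the operator-norm inequality $\mathrm{tr}(E^{\otimes k}\rho)\le\|E\|_\infty^{k}$ being precisely what prevents a cheating prover from gaining anything by entangling the copies. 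The one genuinely fiddly point is the bookkeeping of step (b): routing the computation to the randomly chosen term's $O(1)$ qubits using only the $O(\log n)$-bit term register and nothing more.
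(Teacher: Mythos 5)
Your proposal is correct, and its core coincides with the paper's: a single-copy test that streams through the witness, picks a uniformly random local term of a Hamiltonian, carries along only the $O(1)$ qubits that term touches plus an $O(\log n)$-bit register naming the term, measures the associated POVM, and is then repeated on polynomially many independent copies whose outcomes are aggregated by a low-width circuit appended at the end of the topological order. You differ in two places. First, the paper reduces directly from the $\mathrm{QMA}$-complete $k$-local Hamiltonian problem, whereas you start from an arbitrary $\mathrm{QMA}$ language, pre-amplify its verifier, and run Kitaev's circuit-to-Hamiltonian construction yourself; this is more self-contained but proves nothing extra, and it obliges you to carry the clock register inside the witness. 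Second --- the substantive difference --- the paper's single-copy acceptance probabilities are $1-a/m$ versus $1-b/m$ with only an inverse-polynomial gap and no guarantee that the completeness value is near $1$, so it must aggregate the $q$ repetitions with a threshold test (implemented as a chain of $O(\log n)$-width adders, citing Lemma 14.1 of Kitaev--Shen--Vyalyi for soundness against entangled witnesses). Your pre-amplification pushes the yes-case ground energy down to $2^{-n}$, which lets you replace the threshold by a bare AND into a single running accept bit, and your operator-norm inequality $\mathrm{tr}(E^{\otimes k}\rho)\le\|E\|_{\infty}^{k}$ is a clean explicit substitute for that citation. Do note that the AND aggregation would fail on the paper's unamplified single-copy test (both acceptance probabilities could sit near $1/2$ and would be driven to $0$ together), so the pre-amplification step is not optional in your version. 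Both aggregators cost only $O(\log n)$ extra online-width, so the two proofs end in the same place.
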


We analyse the implications of theorems \ref{theorem:LogarithmicTreewidthNP} and \ref{theorem:LogarithmicTreewidthQMA} to 
quantum generalizations of Merlin-Arthur protocols. 
In this setting, Arthur, a polynomial 
sized quantum circuit, must decide the membership of a string 
$x$ to a given language $\lang$ by analysing a quantum 
state $\ket{\psi}$ provided by Merlin. In the case that $x\in \lang$,
there is always a quantum state $\ket{\psi}$ that is accepted by 
Arthur with probability at least $2/3$. On the other hand, if $x\notin \lang$
then no state is accepted by Arthur with probability greater than 
$1/3$. The class of all languages 
that can be decided via some quantum Merlin-Arthur protocol is denoted 
by $\mathrm{QMA}$. The importance of $\mathrm{QMA}$ stems from the fact that this class has 
several natural complete problems \cite{Bookatz2014,KitaevShenVyalyi2002}. Additionally, 
the oracle version of $\mathrm{QMA}$ contains problems, such as the group non-membership problem \cite{Watrous2000} which are provably 
not in in the oracle version of MA and hence not in the oracle version of $\mathrm{NP}$ \cite{Babai1992}. The class $\mathrm{QCMA}$ is defined 
analogously, except for the fact that the witness provided by Merlin 
is a product state encoding a classical string. Below we define width parameterized versions of $\mathrm{QMA}$.

\begin{definition}
\label{definition:WidthQMA}
A language $\lang \subseteq \{0,1\}^*$ belongs to the class $\mathrm{QMA}[\treewidth,f(n)]$ 
if there exists a polynomial time constructible family of quantum circuits $\{C_x\}_{{x\in \{0,1\}^*}}$ such that 
 for every $x\in \{0,1\}^*$, $C_x$ has treewidth at most $f(|x|)$ and 
\begin{itemize}
\item if $x\in \lang$ then there exists a quantum state $\ket{\psi}$ such that $C_x$ accepts $\ket{\psi}$
		with probability at least $2/3$,
\item if $x\notin \lang$ then for each quantum state $\ket{\psi}$, $C_x$ accepts $\ket{\psi}$ with 
	probability at most $1/3$. 
\end{itemize}
The class $\mathrm{QCMA}[\treewidth,f(n)]$ is defined analogously, except that the witness $\ket{y}$ is
required to be the basis state encoding of a classical string $y$. 
\end{definition}

Definition \ref{definition:WidthQMA} can be extended naturally to other width measures such as online-width. 
For instance, $\mathrm{QMA}[\onlinecutwidth,f(n)]$ and $\mathrm{QCMA}[\onlinecutwidth,f(n)]$ denote 
the classes of languages that can be decided by quantum Merlin-Arthur games with respectively quantum and classical 
witnesses, in which the verifier is required to have online-width at most $f(n)$.
We note that the classes $\mathrm{QMA}$ and $\mathrm{QCMA}$ 
can be defined respectively as $\mathrm{QMA}[\onlinecutwidth,\poly(n)]$ and $\mathrm{QCMA}[\onlinecutwidth,\poly(n)]$,
since the online-width of a circuit can be at most quadratic in its number of gates. In the next corollary 
we analyse the complexity of low-width quantum Merlin-Arthur protocols with classical and quantum witnesses.

\begin{corollary}
\label{corollary:LogarithmicTreewidthQMAQCMA}
$ $
\begin{enumerate}[i.] 
	\item \label{corollary:O1} $\mathrm{QCMA}[\treewidth,O(1)]$ $\subseteq$ P.
	\item \label{corollary:NP} $\mathrm{QCMA}[\treewidth,O(\log n)] = \mathrm{QCMA}[\onlinecutwidth,O(\log n)] = \mathrm{NP}$.
	\item \label{corollary:QMA} $\mathrm{QMA}[\treewidth,O(\log n)] = \mathrm{QMA}[\onlinecutwidth,O(\log n)] = \mathrm{QMA}$. 
\end{enumerate}
\end{corollary}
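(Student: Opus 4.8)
The plan is to obtain all three items by combining Theorems~\ref{theorem:CircuitSatisfiability}, \ref{theorem:LogarithmicTreewidthNP} and~\ref{theorem:LogarithmicTreewidthQMA} with the chain~\eqref{equation:ComparisonMeasures} and with the classical simulation of Markov and Shi~\cite{MarkovShi2008}. Two elementary facts will be used repeatedly: first, the acceptance probability of a quantum circuit of treewidth $O(\log n)$ all of whose inputs have been fixed to a product state --- in particular to a classical basis state $\ket y$ --- can be computed to additive error $2^{-\poly(n)}$ in polynomial time~\cite{MarkovShi2008}; and second, replacing the uninitialized inputs of a circuit by the constant gates encoding $\ket y$ does not increase its treewidth, so that the previous fact applies to the resulting initialized circuit. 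I expect no genuine obstacle: the statement is a repackaging of the three theorems, and the only real care needed is in the choice of error parameters, which must separate $2/3$ from $1/3$ as in Definition~\ref{definition:WidthQMA}, and in the observation --- used to invoke Theorem~\ref{theorem:CircuitSatisfiability} in polynomial time --- that constant treewidth yields a constant exponent $\exp(O(t))$ in the bound $(n/\delta)^{\exp(O(t))}$.

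For item~\ref{corollary:O1}, let $\lang$ be decided by a verifier family $\{C_x\}$ of treewidth $t=O(1)$, where $C_x$ has $n=\poly(|x|)$ inputs. On input $x$ I would first run the algorithm of Theorem~\ref{theorem:CircuitSatisfiability} on $C_x$ with the constant value $\delta=1/10$; since $t$ is constant this takes time $(n/\delta)^{\exp(O(t))}=\poly(n)$ and returns a string $y$ with $\mathit{Pr}^{cl}(C_x)-1/10\le\mathit{Pr}(C_x,\ket y)\le\mathit{Pr}^{cl}(C_x)$. I would then estimate $\mathit{Pr}(C_x,\ket y)$ up to additive error $1/100$ by simulating $C_x$ with its inputs fixed to $\ket y$ --- a circuit of treewidth $O(1)$ --- in polynomial time, and accept iff the estimate exceeds $1/2$. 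When $x\in\lang$ we have $\mathit{Pr}^{cl}(C_x)\ge 2/3$, so the estimate is at least $2/3-1/10-1/100>1/2$; when $x\notin\lang$ we have $\mathit{Pr}^{cl}(C_x)\le 1/3$, so the estimate is at most $1/3+1/10+1/100<1/2$. Hence $\lang\in\mathrm{P}$.

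For item~\ref{corollary:NP}, note first that by~\eqref{equation:ComparisonMeasures} a circuit of online-width $O(\log n)$ has treewidth $O(\log n)$, so the same family is a valid verifier in both classes and $\mathrm{QCMA}[\onlinecutwidth,O(\log n)]\subseteq\mathrm{QCMA}[\treewidth,O(\log n)]$; it therefore suffices to prove $\mathrm{NP}\subseteq\mathrm{QCMA}[\onlinecutwidth,O(\log n)]$ and $\mathrm{QCMA}[\treewidth,O(\log n)]\subseteq\mathrm{NP}$. For the former I would instantiate the $\mathrm{NP}$-hardness half of Theorem~\ref{theorem:LogarithmicTreewidthNP} with $\delta=1/3$: for $\lang\in\mathrm{NP}$ this gives a polynomial-time map $x\mapsto C_x$ with $C_x$ of online-width $O(\log n)$ such that $\mathit{Pr}^{cl}(C_x)=1$ if $x\in\lang$ and $\mathit{Pr}^{cl}(C_x)\le 1/3$ otherwise; since $\mathit{Pr}^{cl}(C_x)=\max_y\mathit{Pr}(C_x,\ket y)$, the family $\{C_x\}$ is a $\mathrm{QCMA}[\onlinecutwidth,O(\log n)]$ verifier for $\lang$ with classical witness $y$ (even with perfect completeness). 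For the latter, given $\lang\in\mathrm{QCMA}[\treewidth,O(\log n)]$ decided by $\{C_x\}$, a nondeterministic machine guesses a classical string $y\in\{0,\dots,d-1\}^n$, estimates $\mathit{Pr}(C_x,\ket y)$ to additive error $1/10$ by simulating $C_x$ with inputs $\ket y$ --- of treewidth $O(\log n)$, hence simulable in polynomial time --- and accepts iff the estimate exceeds $1/2$; correctness is immediate from the promise, as the good witness gives $\mathit{Pr}(C_x,\ket y)\ge 2/3$ when $x\in\lang$, while every $y$ gives $\mathit{Pr}(C_x,\ket y)\le\mathit{Pr}^{cl}(C_x)\le 1/3$ when $x\notin\lang$. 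Chaining the inclusions gives the three equalities.

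For item~\ref{corollary:QMA}, again~\eqref{equation:ComparisonMeasures} gives $\mathrm{QMA}[\onlinecutwidth,O(\log n)]\subseteq\mathrm{QMA}[\treewidth,O(\log n)]$, while $\mathrm{QMA}[\treewidth,O(\log n)]\subseteq\mathrm{QMA}$ is immediate since dropping the width constraint in Definition~\ref{definition:WidthQMA} recovers exactly $\mathrm{QMA}$ (equivalently, $\mathrm{QMA}=\mathrm{QMA}[\onlinecutwidth,\poly(n)]$ and every polynomial-size circuit has online-width $\poly(n)$). For $\mathrm{QMA}\subseteq\mathrm{QMA}[\onlinecutwidth,O(\log n)]$ I would invoke the $\mathrm{QMA}$-hardness half of Theorem~\ref{theorem:LogarithmicTreewidthQMA} with $\delta=1/3<1/2$: for $\lang\in\mathrm{QMA}$ it yields a polynomial-time map $x\mapsto C_x$ with $C_x$ of online-width $O(\log n)$ such that $\mathit{Pr}^{qu}(C_x)\ge 2/3$ when $x\in\lang$ and $\mathit{Pr}^{qu}(C_x)\le 1/3$ otherwise; since $\mathit{Pr}^{qu}(C_x)=\max_{\ket\psi}\mathit{Pr}(C_x,\ket\psi)$, the family $\{C_x\}$ is a $\mathrm{QMA}[\onlinecutwidth,O(\log n)]$ verifier for $\lang$, the quantum witness being an optimal state $\ket\psi$. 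Combining the three chains proves the corollary; as anticipated, the only delicate points are keeping the various additive errors below $1/6$ and observing that fixing inputs to a classical basis state preserves the constant or logarithmic treewidth bound, so that the Markov--Shi simulation applies within the claimed running time.
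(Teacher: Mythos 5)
Your proposal is correct and follows essentially the same route as the paper, which derives item~i from Theorem~\ref{theorem:CircuitSatisfiability} plus polynomial-time simulation of the initialized constant-treewidth circuit, and items~ii and~iii from the membership and hardness halves of Theorems~\ref{theorem:LogarithmicTreewidthNP} and~\ref{theorem:LogarithmicTreewidthQMA} together with the inequality chain~\eqref{equation:ComparisonMeasures}. Your explicit bookkeeping of the additive errors (keeping them well below the $2/3$ vs.\ $1/3$ gap) and the observation that fixing inputs to a basis state does not change the circuit's underlying graph are exactly the details the paper leaves implicit.
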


We note that Corollary \ref{corollary:LogarithmicTreewidthQMAQCMA}.\ref{corollary:O1} is a consequence of 
Theorem \ref{theorem:CircuitSatisfiability}, Corollary \ref{corollary:LogarithmicTreewidthQMAQCMA}.\ref{corollary:NP} 
is a consequence of Theorem \ref{theorem:LogarithmicTreewidthNP}, and Corollary 
\ref{corollary:LogarithmicTreewidthQMAQCMA}.\ref{corollary:QMA} is a consequence of Theorem \ref{theorem:LogarithmicTreewidthQMA}.
Under the plausible assumption that $\mathrm{QMA}\neq \mathrm{NP}$, Corollary \ref{corollary:LogarithmicTreewidthQMAQCMA} implies that 
whenever Arthur is restricted to be a quantum circuit of logarithmic treewidth, 
quantum Merlin-Arthur protocols differ in power with respect to whether the witness provided by Merlin is classical or quantum. 
We observe that obtaining a similar separation between the power of classical and quantum witnesses 
when Arthur is allowed to be a quantum circuit of polynomial treewidth is equivalent to determining whether 
$\mathrm{QMA}\neq \mathrm{QCMA}$. 
This question remains widely open.

\subsection{Organization of the Paper}
\label{subsection:ProofOverview}

In Section \ref{section:Preliminaries} we will define basic notions such as quantum circuits, tree decompositions 
and treewidth. Sections \ref{section:AbstractNetwork} to \ref{section:ApproximatingFeasibility} will be dedicated 
to the proof of our main theorem (Theorem \ref{theorem:CircuitSatisfiability}). The proof of
this theorem will be sketched in Subsection \ref{subsection:ProofSketch}. 
In Section \ref{section:ProofLogarithmicTreewidthNP} we will prove Theorem \ref{theorem:LogarithmicTreewidthNP}, and 
in Section \ref{section:ProofLogarithmicTreewidthQMA} we will prove Theorem \ref{theorem:LogarithmicTreewidthQMA}. 
We will conclude this paper by making some final considerations and by stating some open problems in 
Section \ref{section:Conclusion}. 

\subsection{Sketch of the Proof of Theorem \ref{theorem:CircuitSatisfiability}}
\label{subsection:ProofSketch}

We will prove Theorem \ref{theorem:CircuitSatisfiability}  using a combination of techniques 
from tensor network theory, structural graph theory and dynamic programming. 
We will start by introducing in Section \ref{section:AbstractNetwork} the notion of 
{\em abstract network}. Intuitively, an abstract network is a list $\abstractnetwork=\{\indexset_1,...,\indexset_n\}$ of 
finite subsets of positive integers, called {\em index sets}. Such an abstract network $\abstractnetwork$ 
can be naturally associated with a graph $\graph(\abstractnetwork)$. This graph is obtained by creating a vertex 
$v_{\indexset}$ for each index set $\indexset\in \abstractnetwork$ and by adding $k$ edges between two vertices 
$v_{\indexset}$ and $v_{\indexset'}$ if and only if 
$|\indexset\cap \indexset'|=k$. We will be interested in the process of contracting the vertices 
of the graph $\graph(\abstractnetwork)$ into a single vertex. Such a contraction process will be represented 
by a data structure called {\em contraction tree}. The complexity of a contraction tree will be measured 
via two parameters: its {\em rank}, and its {\em height}. 
In Section \ref{section:GoodContractionTree} we will show that if the graph $\graph(\abstractnetwork)$ 
associated with an abstract network $\abstractnetwork$ has treewidth $t$ and maximum degree $\Delta$, then 
one can efficiently construct a contraction tree for $\abstractnetwork$ of rank $O(\Delta\cdot t)$ and 
height $O(\Delta\cdot t\cdot \log|\abstractnetwork|)$ (Theorem \ref{theorem:GoodContractionTree}). 
As we will argue below, Theorem \ref{theorem:GoodContractionTree} will play an important role 
in the proof of our main result. 

Abstract networks can be used to define both the well known notion of {\em tensor network} (Section \ref{section:TensorNetwork}),
and the new notion of {\em feasibility tensor network} (Section \ref{section:FeasibilityTensorNetworks}). Within 
this formalism, a tensor network can be viewed as a pair $(\abstractnetwork,\lambda)$ where $\lambda$ is a 
function that associates a tensor 
$\lambda(\indexset)$ of rank $|\indexset|$ with each index set $\indexset$ of $\abstractnetwork$. 
One can define a notion of contraction for tensor networks with basis on the notion of contraction for 
abstract networks. Contracting a tensor network $(\abstractnetwork,\lambda)$ yields a complex number $g$, i.e., a 
tensor of rank $0$. The 
value of the tensor network, denoted by $\valuetensornetwork(\abstractnetwork,\lambda)$, is defined as 
the absolute value of $g$. It can be shown that the problem of computing the acceptance 
probability of a quantum circuit $C$ in which all inputs are initialized can be reduced to the problem of computing 
the value of a suitable tensor network $(\abstractnetwork_C,\lambda_C)$. 

On the other hand, a feasibility tensor network is a 
pair $(\abstractnetwork,\Lambda)$ where $\abstractnetwork$ is an abstract network and $\Lambda$ is a
function that associates with each index set $\indexset\in \abstractnetwork$ a {\em set} of tensors $\Lambda(\indexset)$
of rank $|\indexset|$. An initialization of $(\abstractnetwork,\Lambda)$ is a function $\lambda$ that associates
a tensor $\lambda(\indexset)\in \Lambda(\indexset)$ with each index set of $\abstractnetwork$.
Each such initialization yields a tensor network $(\abstractnetwork,\lambda)$. 
The value of the feasibility network $(\abstractnetwork,\Lambda)$, is defined as 
$\valuefeasibilitytensornetwork(\abstractnetwork,\Lambda) = \max_{\lambda} \valuetensornetwork(\abstractnetwork,\lambda)$, 
where $\lambda$ ranges over all initializations of $(\abstractnetwork,\Lambda)$. As we will see in Section 
\ref{section:FeasibilityTensorNetworks}, the problem of finding a classical assignment that maximizes 
the acceptance probability of a quantum circuit with uninitialized inputs can be reduced to the problem
of finding an initialization of maximum value for a suitable feasibility tensor network $(\abstractnetwork_C,\Lambda_C)$. 

Let $(\abstractnetwork,\Lambda)$ be a feasibility tensor network, and $\varepsilon$ be a real number
with ${0<\varepsilon < 1}$. In Section \ref{section:ApproximatingFeasibility} we will show that 
given a contraction tree for $\abstractnetwork$ of rank $r$ and height $h$, one can 
find in time $|\abstractnetwork| \cdot \varepsilon^{-\exp(O(r\cdot \log d))}$ an 
initialization $\lambda$ of $(\abstractnetwork,\Lambda)$ such that 
$|\valuetensornetwork(\abstractnetwork,\lambda) - \valuefeasibilitytensornetwork(\abstractnetwork,\Lambda)|\leq 
\varepsilon\cdot \exp(O(r\cdot h\cdot \log d))$ (Theorem \ref{theorem:TensorNetworkSatisfiability}). 
Therefore, to obtain a polynomial time algorithm for approximating the value of a 
feasibility tensor network $(\abstractnetwork,\Lambda)$ up to a constant additive factor $\delta$, we need to 
keep the rank of the contraction tree bounded by a constant, and its height bounded by $O(\log |\abstractnetwork|)$.

As mentioned above, if the graph $\graph(\abstractnetwork)$ has treewidth $t$ and maximum degree $\Delta$, 
then by Theorem \ref{theorem:GoodContractionTree} one can construct a contraction tree for $\abstractnetwork$ of 
rank $O(\Delta\cdot t)$, and height $O(\Delta\cdot t\cdot \log |\abstractnetwork|)$. Therefore, by setting 
$\varepsilon = \delta/|\abstractnetwork|^{O(\Delta^2\cdot t^2\cdot \log d)}$ we can use 
Theorem \ref{theorem:TensorNetworkSatisfiability} to 
find in time $(|\abstractnetwork|/\delta)^{\exp(O(\Delta\cdot t\cdot \log d))}$ an initialization 
$\lambda$ for $(\abstractnetwork,\Lambda)$ such that 
$|\valuetensornetwork(\abstractnetwork,\lambda)- \valuetensornetwork(\abstractnetwork,\Lambda)|\leq \delta$ 
(Theorem \ref{theorem:ApproximationFeasibility}).

Finally, let $C$ be a quantum circuit with $n$ uninitialized inputs, treewidth $t$, and $n^{O(1)}$ gates drawn from 
a finite universal set of gates $\universalgates$. Let $(\abstractnetwork_C,\Lambda_C)$ be the feasibility 
tensor network associated with $C$. Then $|\abstractnetwork|=n^{O(1)}$ and the graph $\graph(\abstractnetwork_C)$ 
has treewidth $t$, and maximum degree bounded by a constant $\Delta(\universalgates)$. 
Therefore, as a corollary of Theorem \ref{theorem:ApproximationFeasibility}, 
a classical assignment that maximizes the acceptance probability of $C$ up to a $\delta$ additive factor can 
be found in time $(n/\delta)^{\exp(O(\Delta(\universalgates)\cdot t\cdot \log d))}$. Since both $\Delta(\universalgates)$
and the dimensionality $d$ of the qudits over which $C$ operates are constant, Theorem \ref{theorem:CircuitSatisfiability} follows.

\section{Preliminaries}
\label{section:Preliminaries}

A $d$-dimensional qudit is a unit vector in the Hilbert space $\hilbert_d= \C^d$. We fix an orthonormal basis 
for $\hilbert_d$ and label the vectors in this basis with ${\ket{0},\ket{1},...,\ket{d-1}}$.
The $n$-fold tensor product of $\hilbert_d$ is denoted by $\hilbert_d^{\otimes n}$.
We denote by $\operators(\hilbert_d^{\otimes n})$ the set of all linear operators 
on $\hilbert_d^{\otimes n}$. 
An operator $X$ on $\operators(\hilbert_d^{\otimes n})$ is positive semidefinite if all its eigenvalues are non-negative. 
A density operator on $n$ qudits is a positive semidefinite operator 
$\rho\in \mathbf{L}(\hilbert_d^{\otimes n})$ with trace $\trace(\rho)=1$. 
For a string $y=y_1y_2...y_n\in \{0,1,...,d-1\}^n$ 
we let $\rho_{y} = \bigotimes_{i=1}^n \ket{y_i}\bra{y_i}$ be the density operator of the state 
${\ket{y} = \otimes_{i=1}^n\ket{y_i}}$. 
A map $M:\operators(\hilbert_d^{\otimes q})\rightarrow \operators(\hilbert_d^{\otimes r})$ is positive if 
$M(\rho)$ is positive semidefinite whenever $\rho$ is positive semidefinite. The map $M$ 
is completely positive if the map $I_k\otimes M$ is positive for every $k\in \N$, where $I_k$ is 
the $k\times k$ identity matrix. 
A quantum gate with $q$ inputs and $r$ outputs is a  linear map 
$Q:\mathbf{L}(\hilbert_d^{\otimes q}) \rightarrow \mathbf{L}(\hilbert_d^{\otimes r})$ that is 
completely positive, convex on density matrices, and such that $0\leq \trace(Q(\rho))\leq 1$ for any density 
matrix $\rho$. Linear maps satisfying these three properties formalize the notion of physically 
admissible quantum operation. We refer to \cite{NielsenChuang2010} (Section 8.2.4) for 
a detailed discussion on physically admissible operations. 
A {\em positive-operator valued measure} (POVM) is a set $\mathcal{X}=\{X_1,X_2,...,X_k\}$ of positive semidefinite 
operators such that $\sum_i X_i = I$. Each operator $X_i$ in $\mathcal{X}$ is called a {\em measurement element} of $\mathcal{X}$. 
If $\mathcal{X}$ is a POVM then the probability of measuring outcome $i$ after applying $\mathcal{X}$ to $\rho$ is given 
by $tr(\rho X_i)$. A single $d$-dimensional qudit measurement in the computational basis is defined as 
the POVM $\mathcal{X}=\{\ket{0}\bra{0},\ket{1}\bra{1},...,\ket{d-1}\bra{d-1}\}$. 

\subsection{Quantum Circuits}
\label{subsection:QuantumCircuits}

We adopt the model of quantum circuits with mixed states introduced in \cite{AharonovKitaevNisan1998}. 
Let $\universalgates$ be a finite universal set of quantum gates, and let $\Delta(\universalgates)$ be the maximum 
number of inputs plus outputs of a gate in $\universalgates$.  
A quantum circuit over $\universalgates$ is a connected directed acyclic 
graph\footnote{All graphs in this work, being directed or undirected, may contain multiple edges, but no loops.} 
$C = (V,E,\vertexlabelingfunction,\edgelabelingfunction)$,
of maximum degree at most $\Delta(\universalgates)$, where $V$ is a set of vertices, $E$ a set of edges, 
${\vertexlabelingfunction:V\rightarrow \universalgates}$ 
is a vertex labeling function and 
${\edgelabelingfunction:E\rightarrow \{1,...,|E|\}}$ is an injective function that assigns a distinct number to each 
edge of $C$. The vertex set is partitioned into a set $\inputvertices$ (input vertices), a set $\outputvertices$ 
(output vertices), and a set ${\internalvertices = V\backslash (\inputvertices \cup \outputvertices)}$ (internal vertices).
Each input vertex has in-degree $0$ and out-degree $1$, and each output vertex has in-degree $1$ and out-degree $0$. Each internal vertex 
has both in-degree and out-degree greater than $0$. If $v$ is an internal vertex with $k$ incoming edges and $l$ outgoing edges 
then $v$ is labeled with a quantum gate $\vertexlabelingfunction(v)\in \universalgates$ with $k$ inputs and $l$ outputs. Each input vertex $v$ is either 
labeled by $\vertexlabelingfunction$ with an element from the set ${\{\ket{0}\bra{0},\ket{1}\bra{1},...,\ket{d-1}\bra{d-1}\}}$,
indicating that $v$ is an initialized 
input, or with the symbol $*$, indicating that $v$ is not initialized. Finally, each output vertex $v$ is labeled 
with an one-qudit measurement element $\vertexlabelingfunction(v)\in \operators(\hilbert_d)$. We let $M(C)=\otimes_{v\in \outputvertices} \vertexlabelingfunction(v)$ 
denote the overall measurement element in $\operators(\hilbert_d^ {\otimes |\outputvertices(C)|})$ defined by $C$.
A quantum circuit $C$ with $n$ uninitialized inputs and $m$ outputs can be regarded as a superoperator 
$C:\operators(\hilbert_d^ {\otimes n}) \rightarrow \operators(\hilbert_d^{\otimes m})$. 
If $\ket{\psi}$ is a quantum state in $\hilbert_d^{\otimes n}$ then the acceptance probability of $C$ when 
$\ket{\psi}$ is assigned to the inputs of $C$ is defined as $Pr(C,\ket{\psi})  = \trace[C(\ket{\psi}\bra{\psi})\cdot M(C)]$.

\subsection{Tree Decompositions and Treewidth}
\label{subsection:TreeDecomposition}

A tree is a connected acyclic graph $T$ with set of nodes $\nodes(T)$ and set of arcs $\arcs(T)$. 
A tree decomposition of a graph $G = (V,E)$ consists of a pair $(T,\beta)$
where $T$ is a tree, and $\beta:\nodes(T)\rightarrow 2^{V}$ is a function that associates a set 
of vertices $\beta(u)$ with each node $u\in \nodes(T)$, in such a way that  

\begin{itemize}
	\item $\bigcup_{u\in \nodes(T)} \beta(u) =  V$, 
	\item for every edge $\{v,v'\} \in E$, there is a node $u\in \nodes(T)$ such that ${\{v,v'\} \subseteq \beta(u)}$,
	\item for every vertex $v\in V$, the set $\{u\in \nodes(T) \;|\; v\in \beta(u)\}$ induces a connected 
		subtree of $T$. 
\end{itemize}
The {\em width} of $(T,\beta)$ is defined as  $\width(T,\beta)=\max_{u}\{|\beta(u)|-1\}$.
The {\em treewidth} $\treewidth(G)$ of a graph $G$ is the minimum width of a tree decomposition of $G$.

If $C = (V,E,\vertexlabelingfunction,\edgelabelingfunction)$ is a quantum circuit, then the 
treewidth of $C$ is defined as the treewidth of the undirected graph $G_C = (V,E')$ obtained from 
$C$ by forgetting vertex labels, edge labels, and direction of edges.

\section{Abstract Networks}
\label{section:AbstractNetwork}

In this section we will introduce the notion of {\em abstract network}. In Section \ref{section:TensorNetwork} 
we will use abstract networks to model the well known notion of tensor network, a formalism that 
is suitable for the simulation of quantum circuits. Subsequently, in Section \ref{section:FeasibilityTensorNetworks},
we will use abstract networks to define the new notion of {\em feasibility tensor network}, a formalism
that is suitable for addressing the satisfiability of quantum circuits. 
Below, we call a possibly empty finite set $\indexset$
of positive integers, an {\em index set}. We say that each number $i$ in an index set $\indexset$ is an {\em index}.

\begin{definition}[Abstract Network]
\label{definition:AbstractNetwork}
An {\em abstract network} is a finite list $${\abstractnetwork= [\,\indexset_1,...,\indexset_m\,]}$$ of 
index sets satisfying the following property:
\begin{equation}
\label{equation:ConditionNetwork}
 \forall i\in \bigcup_{k=1}^{m} \indexset_k, \;\;|\{j \;|\; i\in \indexset_j \}| = 2.
\end{equation} 
\end{definition}

In other words, in an abstract network $\abstractnetwork$, each index $i$ occurs in precisely two index sets of $\abstractnetwork$.
We note that an index set $\indexset$ can occur up to two times in an abstract network. We let $|\abstractnetwork|$ denote 
the size of $\abstractnetwork$, i.e., $m$. The rank of $\abstractnetwork$, denoted by $\rank(\abstractnetwork)$, is defined 
as the size of the largest index set in $\abstractnetwork$. 

\begin{equation}
\label{equation:Rank}
\rank(\abstractnetwork) = \max_{j\in \{1,...,m\}} |\indexset_j|.
\end{equation}

An abstract network $\abstractnetwork$ can be intuitively visualized as a graph $\graph(\abstractnetwork)$ 
which has one vertex $v_{\indexset}$ for each index set $\indexset\in \abstractnetwork$, and one edge $e$ 
with endpoints $\{\indexset,\indexset'\}$ and label $i$, for each pair of index sets $\indexset,\indexset$ with $\indexset\cap \indexset' \neq \emptyset$
and each index $i\in \indexset\cap \indexset'$ (Fig. \ref{figure:AbstractNetworkContraction}). Note
that our notion of graph of an abstract network admits multiple edges, but no loops. 
 We say that an abstract network $\abstractnetwork$ is connected if the graph $\graph(\abstractnetwork)$ 
associated with $\abstractnetwork$ is connected. In this work we will only be concerned with connected abstract networks.   

\begin{figure}[htb]
\centering
\includegraphics[scale=0.23]{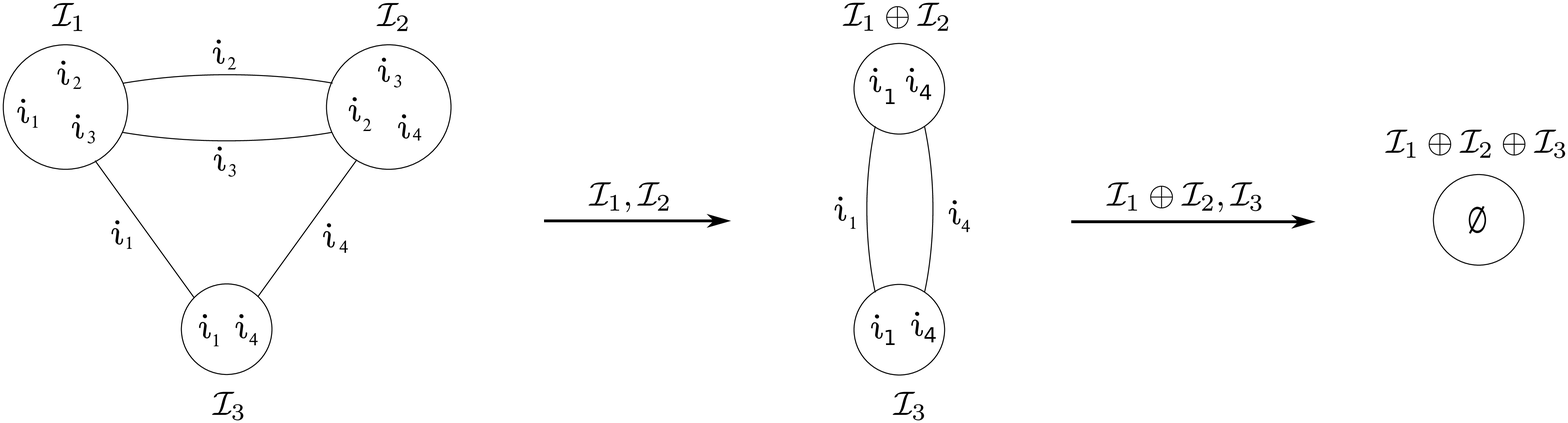}
\caption{Left: the graph $\graph(\abstractnetwork)$ of an abstract network 
$\abstractnetwork =[\,\indexset_1,\indexset_2,\indexset_3 \,]$. Middle:
contracting the index sets $\indexset_1$ and $\indexset_2$ yields the abstract network 
$\abstractnetwork =[\, \indexset_3,\indexset_1\oplus \indexset_2\,]$. 
Right: after all pairs have been contracted, the only remaining index set is the empty index set.
} 
\label{figure:AbstractNetworkContraction}
\end{figure}

There is a very simple notion of contraction for abstract networks. Abstract network contractions will be used to 
formalize both the well known notion of tensor network contraction (Section \ref{section:TensorNetwork}), and the 
notion of feasibility tensor network contraction, which will be introduced in Section \ref{section:FeasibilityTensorNetworks}. 
We say that a pair 
of index sets $\indexset,\indexset'$ of an abstract network $\abstractnetwork$ is {\em contractible} if $\indexset\cap \indexset' \neq \emptyset$.
In this case the contraction of $\indexset,\indexset'$ yields
the abstract network $$\abstractnetwork'= \abstractnetwork \backslash \{\indexset,\indexset'\} \cup \{\indexset \oplus \indexset'\}$$
where $\indexset \oplus \indexset'=\indexset \cup \indexset' \backslash(\indexset \cap \indexset')$ is the symmetric difference of $\indexset$ and $\indexset'$.
The contraction of a pair of index sets in an abstract network $\abstractnetwork$ may be visualized as an operation that merges the vertices 
$v_{\indexset}$ and $v_{\indexset'}$ in the graph $G(\abstractnetwork)$ associated with $\abstractnetwork$ (Fig. \ref{figure:AbstractNetworkContraction}). 
Observe that in a connected abstract network with at least two vertices, there is at least one pair of contractible index sets.
Additionally, when contracting a pair $\indexset,\indexset'$ of index sets, Equation \ref{equation:ConditionNetwork} ensures 
that the index set $\indexset\oplus \indexset'$ is not in $\abstractnetwork$. Thus we have that $|\abstractnetwork'|= |\abstractnetwork|-1$.
Starting with an abstract network $\abstractnetwork$ we can successively contract pairs of index-sets until we reach an abstract network 
whose unique index set is the empty set $\emptyset$. In graph-theoretic terms, starting from $\graph(\abstractnetwork)$ we can successively 
merge pairs of adjacent vertices until we reach the graph $\graph([\,\emptyset\,])$ with a single vertex $v_{\emptyset}$ 
(Fig. \ref{figure:AbstractNetworkContraction}).  
Below we define the notion of contraction tree, which will be used to address both the problem of simulating an initialized quantum circuit, 
and the problem of computing the maximum acceptance probability of an uninitialized quantum circuit. If $T$ is a tree, we 
denote by $\leaves(T)$ the set of leaves of $T$. We say that a node $u\in \nodes(T)\backslash \leaves(T)$ is 
an {\em internal node} of $T$. 

\begin{definition}[Contraction Tree]
\label{definition:CarvingDecompositionAbstract}
Let  $\abstractnetwork = [\,\indexset_1,...,\indexset_m\,]$ be an abstract network. 
A {\em contraction tree} for $\abstractnetwork$ is a pair $(T,\iota)$ where $T$ is a binary tree 
and  ${\iota:\nodes(T) \rightarrow 2^{\N}}$ is a function that associates with each node $u\in \nodes(T)$,
an index set $\iota(u)$ such that the following conditions are satisfied. 
\begin{enumerate}[(i)]
\item \label{definition:CarvingDecompositionAbstract:ConditionI} $\leaves(T) = \{u_1,...,u_m\}$ and for 
each $j\in \{1,...,m\}$, $\iota(u_j) = \indexset_j$. 
\item \label{definition:CarvingDecompositionAbstract:ConditionII} For each internal node $u$, $\iota(u.l)\cap \iota(u.r) \neq \emptyset$ and $\iota(u)=\iota(u.l)\oplus \iota(u.r)$. 
\end{enumerate} 
\end{definition}

\begin{figure}[t]
\centering
\includegraphics[scale=0.23]{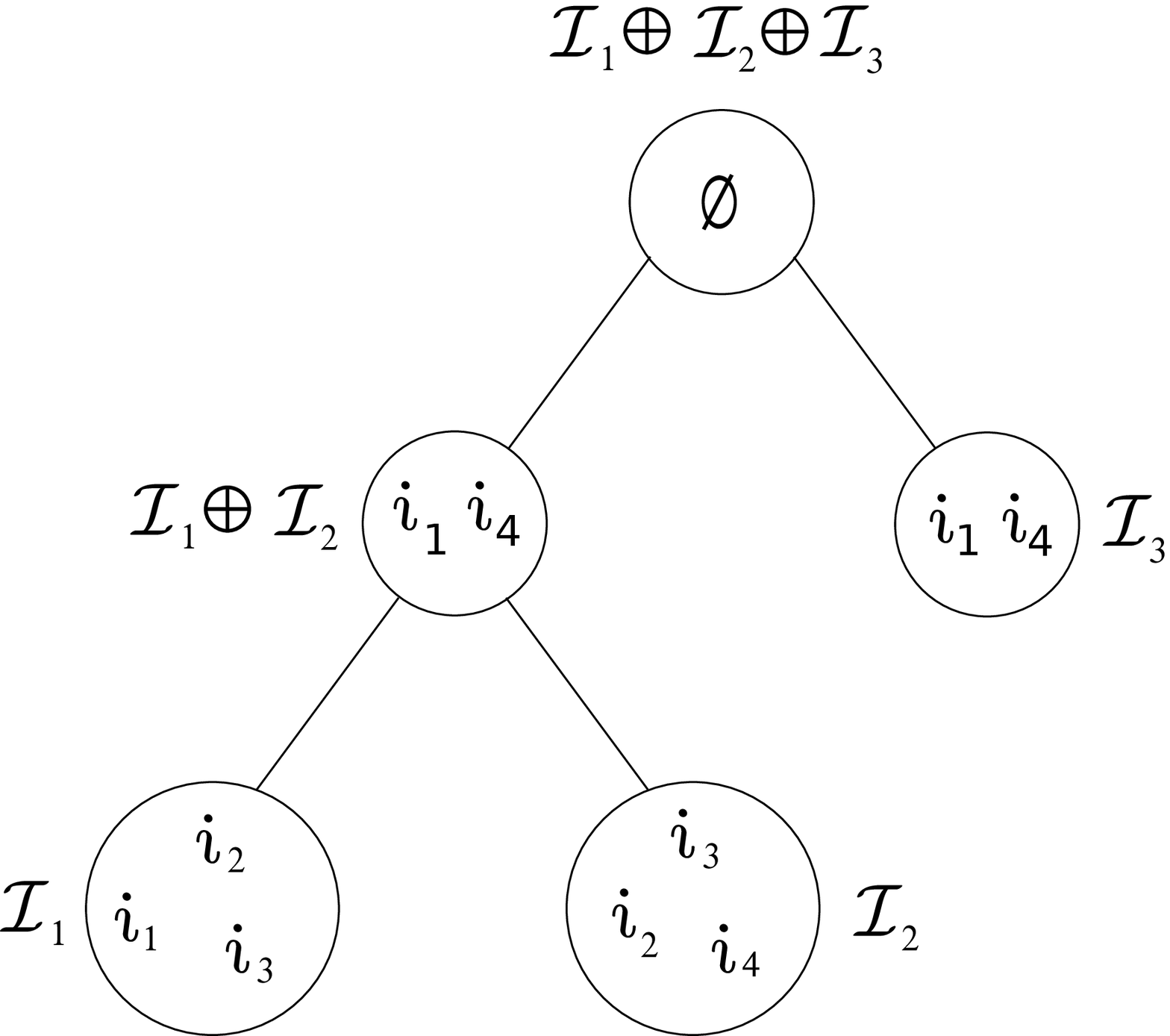}
\caption{A contraction tree of rank $3$ of the network $\abstractnetwork = \{\indexset_1,\indexset_2,\indexset_3\}$ of Fig. \ref{figure:AbstractNetworkContraction}.} 
\label{figure:ContractionTree}
\end{figure}

Intuitively, Condition $($\ref{definition:CarvingDecompositionAbstract:ConditionI}$)$ says that 
the restriction of $\iota$ to $\leaves(T)$ is a bijection from $\leaves(T)$ to the index sets occurring in 
$\abstractnetwork$, while Condition  $($\ref{definition:CarvingDecompositionAbstract:ConditionII}$)$ says that it is always 
possible to contract the index sets labeling the children of each internal node of $T$. 
Note that the root of $T$ is always labeled with the empty index set $\emptyset$. 
The rank of $(T,\iota)$ is the size of the largest index set labeling a node of $T$. $$\rank(T,\iota) = \max_{u\in \nodes(T)} |\iota(u)|.$$ 

\section{Contraction Trees of Constant Rank and Logarithmic Height}
\label{section:GoodContractionTree}

In this section we will show that if $\abstractnetwork$ is an abstract 
network whose graph $\graph(\abstractnetwork)$ has treewidth $t$ and 
maximum degree $\Delta$, then one can efficiently construct a 
contraction tree for $\abstractnetwork$ of rank $O(\Delta\cdot t)$ and 
height $O(\Delta\cdot t\cdot \log|\abstractnetwork|)$. More precisely, 
we will prove the following theorem. 

\begin{theorem}[Good Contraction Tree]
\label{theorem:GoodContractionTree}
Let $\abstractnetwork$ be an abstract network such that the graph $\graph(\abstractnetwork)$ has treewidth $t$ and 
maximum degree $\Delta$. Then one can construct in time $2^{O(t)}\cdot |\abstractnetwork|^{O(1)}$ a contraction tree $(T,\iota)$ for $\abstractnetwork$
of rank $O(\Delta \cdot t)$ and height $O(\Delta \cdot t \cdot \log |\abstractnetwork|)$. 
\end{theorem}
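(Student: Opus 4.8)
The plan is to combine two classical ingredients from structural graph theory: (1) balanced separators for graphs of bounded treewidth, and (2) a recursive decomposition that keeps both the separator sizes and the recursion depth under control. Concretely, I would work not directly with the graph $\graph(\abstractnetwork)$ but with a tree decomposition $(T,\beta)$ of width $O(t)$, which can be computed in time $2^{O(t)}\cdot|\abstractnetwork|^{O(1)}$ using the standard constructive algorithms for treewidth (e.g., Bodlaender's algorithm). The key point is that a tree decomposition of width $w$ can be converted into a \emph{balanced} tree decomposition of width $O(w)$ and logarithmic depth — this is a folklore result (sometimes attributed to Bodlaender, or derivable from the balanced-separator argument of Robertson--Seymour). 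So I would first produce a tree decomposition of width $O(t)$ and height $O(\log|\abstractnetwork|)$.

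Next I would build the contraction tree $(T',\iota)$ by recursion on this balanced tree decomposition. At a node $u$ of the balanced decomposition, the bag $\beta(u)$ is a separator of size $O(t)$ splitting the circuit vertices below $u$ into the subtrees rooted at the children of $u$. In abstract-network terms, each vertex $v_{\indexset}$ of $\graph(\abstractnetwork)$ (equivalently each gate) sits in some bag; removing a bag $\beta(u)$ separates the remaining index sets into groups. I would contract each group independently (recursively), producing for each group a single "super-index-set" whose surviving indices are exactly the labels of edges crossing the separator $\beta(u)$. Because $\graph(\abstractnetwork)$ has maximum degree $\Delta$ and the separator has $O(t)$ vertices, the number of crossing edges — hence the rank of any index set appearing at this stage — is $O(\Delta\cdot t)$. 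Stitching the partially-contracted groups together at node $u$ (a bounded number of further pairwise contractions, each merging index sets of rank $O(\Delta\cdot t)$) keeps the rank $O(\Delta\cdot t)$ throughout. Since the balanced decomposition has depth $O(\log|\abstractnetwork|)$ and each level contributes $O(\Delta\cdot t)$ to the height of the contraction tree (we may need a small $O(\Delta\cdot t)$-high "gadget" of pairwise contractions to merge the bounded number of children at each node, plus to absorb the $O(t)$ vertices actually living in the bag), the total height of $(T',\iota)$ is $O(\Delta\cdot t\cdot\log|\abstractnetwork|)$.

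The step I expect to be the main obstacle is the bookkeeping that shows the rank really stays $O(\Delta\cdot t)$ at \emph{every} node of the contraction tree, including the auxiliary nodes introduced to merge children and to peel off the $O(t)$ index sets that belong to the current bag one at a time. The subtlety is that a contraction tree is a \emph{binary} tree performing one pairwise contraction per internal node, whereas a tree-decomposition node may have several children and a bag of size $O(t)$; I must expand each such node into a small balanced binary "broom" of pairwise contractions and argue that the intermediate index sets occurring in this broom still only contain indices labeling edges that cross the current separator (plus possibly a bounded set of "local" indices internal to the bag, of which there are at most $\binom{O(t)}{2}\cdot\Delta = O(t^2\Delta)$ — if this is too large I would instead process bag-internal structure more carefully, or observe that each crossing index is counted once, giving the claimed $O(\Delta\cdot t)$). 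I would also need to verify the connectivity condition (Definition~\ref{definition:CarvingDecompositionAbstract}, item~\ref{definition:CarvingDecompositionAbstract:ConditionII}): whenever I contract two index sets their intersection is nonempty, which follows from choosing the recursion so that I only ever merge pieces joined by at least one crossing edge, using connectivity of $\graph(\abstractnetwork)$ and of the tree decomposition. Finally I would tally the running time: computing the width-$O(t)$ decomposition costs $2^{O(t)}\cdot|\abstractnetwork|^{O(1)}$, balancing it and building the contraction tree is polynomial, giving the stated bound.
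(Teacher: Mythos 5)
Your overall architecture matches the paper's: compute a width-$O(t)$ tree decomposition, balance it to height $O(\log|\abstractnetwork|)$, use the degree bound $\Delta$ to convert vertex separators into edge cuts of size $O(\Delta\cdot t)$ (which bounds the rank), and pay an extra $O(\Delta\cdot t)$ factor in height per level of the recursion. The paper packages the first steps as a conversion to a rooted carving decomposition of width $O(\Delta\cdot t)$ and height $O(\log|\abstractnetwork|)$ (Lemma~\ref{lemma:carvinwidthtreewidth}), and then reads the contraction tree off the carving decomposition; the rank of the contraction tree equals the carving width because $\iota(u)$ is exactly the set of indices labeling edges in $E(V[u],V\setminus V[u])$.

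The genuine gap is in your step ``contract each group independently (recursively), producing for each group a single super-index-set.'' A group of index sets separated off by a bag need not induce a \emph{connected} subgraph of $\graph(\abstractnetwork)$, and a disconnected collection cannot be contracted to a single index set at all: Condition~(\ref{definition:CarvingDecompositionAbstract:ConditionII}) of Definition~\ref{definition:CarvingDecompositionAbstract} requires $\iota(u.l)\cap\iota(u.r)\neq\emptyset$ at every internal node, so two pieces with no crossing edge can never be merged. Your proposed fix (``only ever merge pieces joined by at least one crossing edge, using connectivity of $\graph(\abstractnetwork)$'') does not resolve this, because connectivity of the whole graph does not make the pieces connected. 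The correct repair --- and the technical heart of the paper's proof, Lemma~\ref{lemma:GoodCarving} --- is to maintain for each subtree a \emph{collection} of partial contractions, one per connected component of the induced subgraph; Proposition~\ref{proposition:DisjointSetsSubtree} shows there are at most $w=O(\Delta\cdot t)$ such components, since each must have an edge leaving the subtree. At each internal node the components of the two children are glued along the at most $w$ crossing edges, and the quotient graph on components (at most $w+1$ vertices, $w$ edges) admits a caterpillar-shaped contractive decomposition of height $w$ driven by a breadth-first traversal (Proposition~\ref{proposition:BipartiteWedges}). This is the true source of the $O(\Delta\cdot t)$ per-level height overhead, not the merging of a bounded number of tree-decomposition children or the absorption of bag vertices as you suggest. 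Without this component bookkeeping, your induction does not go through; with it, your plan becomes essentially the paper's proof.
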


Below we define the notion of rooted carving decomposition, a variant 
of the notion of carving decomposition introduced by Robertson and Seymour in \cite{RobertsonSeymour1995}. 

\begin{definition}[Rooted Carving Decomposition]
\label{definition:CarvingDecomposition}
A {\em rooted carving decomposition} of a graph $G=(V,E)$ is a pair $(T,\gamma)$ where
$T=(N,F)$ is a rooted binary tree, and $\gamma:\leaves(T)\rightarrow V$ is a bijection mapping each leaf 
$u\in \leaves(T)$ to a single vertex $\gamma(u)\in V$. 
\end{definition}

Observe that the internal nodes of a carving decomposition $(T,\gamma)$ are unlabeled.
We denote by $T[u]$ the set of nodes of the subtree of $T$ rooted at $u$. 
Given a node $u\in \nodes(T)$ we let 
\begin{equation}
\label{equation:Vu}
V[u]=\gamma(\leaves(T[u]))=\{v\in V\;|\;\exists u\in \leaves(T[u]), \gamma(u)=v\}
\end{equation}
be the set of vertices of $G$ that are associated with some leaf in the subtree of $T$ rooted at $u$. For subsets of vertices $V_1,V_2\subseteq V$, let $E(V_1,V_2)$ 
denote the set of edges in $G$ with one endpoint in $V_1$ and another endpoint in $V_2$. 
The width  of $(T,\gamma)$, denoted $\carvingwidth(T,\gamma)$, is defined as 
\begin{equation}
\label{equation:CarvingWidth}
\carvingwidth(T,\gamma) = \max_{u\in \nodes(T)} |E(V[u],V\backslash V[u])|.
\end{equation}
The carving width of a graph $G$, denoted $\carvingwidth(G)$, is the minimum width of a carving decomposition of $G$.

Next, we establish some connections between tree-decompositions and carving decompositions. 
Let $G$ be a graph of treewidth $t$. Using the results in \cite{RobertsonSeymour1984,BodlaenderFominKosterKratschThilikos2012},
one can construct a tree decomposition $(T,\beta)$ of $G$ of width $O(t)$ in time $2^{O(t)}\cdot |G|^{O(1)}$.
From such a tree-decomposition $(T,\beta)$ one can construct in time $|T|^{O(1)}$ another tree-decomposition $(T',\beta')$ of $G$ of 
width $O(t)$ and height $O(\log |G|)$ \cite{Bodlaender1989}. Finally, from $(T',\beta')$ one can construct in time $|T'|^{O(1)}$ 
a carving decomposition of $G$ of width $O(\Delta\cdot t)$ and height $O(\log |G|)$ \cite{BodlaenderThilikosSerna2000} where 
$\Delta$ is the maximum degree of $G$. We formalize the series of conversions we have just described into the 
following lemma. 

\begin{lemma}[\cite{RobertsonSeymour1984,Bodlaender1989,BodlaenderThilikosSerna2000}]
\label{lemma:carvinwidthtreewidth}
Let $G$ be a graph of maximum degree $\Delta$ and treewidth $t$. One can construct in time
$2^{O(t)}\cdot|G|^{O(1)}$ a rooted carving decomposition of $G$ of width $O(\Delta\cdot t)$, and height $O(\log |G|)$. 
\end{lemma}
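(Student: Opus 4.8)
The plan is to chain together, with explicit bookkeeping, the three constructions already announced in the paragraph preceding the statement, verifying that width and height stay within the claimed bounds at every stage. \emph{Step 1.} First I would invoke a fixed‑parameter algorithm (the treewidth approximation of \cite{BodlaenderFominKosterKratschThilikos2012}, building on \cite{RobertsonSeymour1984}) which, given $G$, produces in time $2^{O(t)}\cdot|G|^{O(1)}$ a tree decomposition $(T,\beta)$ of $G$ of width $O(t)$; by the usual cleanup (deleting redundant bags, splitting high‑degree nodes) I may further assume $|\nodes(T)|=O(|G|)$ and that $T$ is binary, at the cost of only constant factors in width and size. \emph{Step 2.} Next I would apply Bodlaender's height‑reduction for tree decompositions \cite{Bodlaender1989}: in time $|T|^{O(1)}$ it converts $(T,\beta)$ into a tree decomposition $(T',\beta')$ of $G$ that is still of width $O(t)$ but now has $T'$ binary and of height $O(\log|\nodes(T)|)=O(\log|G|)$.

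\emph{Step 3.} Finally I would turn $(T',\beta')$ into a rooted carving decomposition, following \cite{BodlaenderThilikosSerna2000}. For each vertex $v\in V(G)$ let $\tau(v)$ be the node of $T'$ nearest the root whose bag contains $v$ (well defined, since the nodes of $T'$ containing $v$ induce a connected subtree). Form a tree by attaching, to each internal node $u$ of $T'$, a fresh leaf $\ell_v$ for every $v$ with $\tau(v)=u$; since $|\beta'(u)|=O(t)$, at most $O(t)$ leaves are hung at any single node, so I replace each such bundle of attachments by a \emph{balanced} binary tree of depth $O(\log t)$, then suppress degree‑$2$ nodes and discard any old internal node with no leaf below it, obtaining a binary tree $T''$ whose leaf set is exactly $\{\ell_v:v\in V\}$. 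Setting $\gamma(\ell_v)=v$ gives a rooted carving decomposition $(T'',\gamma)$ in the sense of Definition~\ref{definition:CarvingDecomposition}. For the width: if $u$ is a node of $T''$ and $V[u]$ is the set of vertices below it, then every edge $\{x,y\}\in E(V[u],V\setminus V[u])$ lies in some common bag of $T'$, and the fact that $\tau(x)$ and $\tau(y)$ sit on opposite sides of $u$ pins one of $x,y$ into a bounded‑size bag on the ``boundary path''; this yields a set of $O(t)$ boundary vertices, each contributing at most $\Delta$ incident edges, so $|E(V[u],V\setminus V[u])|=O(\Delta\cdot t)$, i.e.\ $\carvingwidth(T'',\gamma)=O(\Delta\cdot t)$. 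For the height, $\height(T'')\le \height(T')+O(\log t)=O(\log|G|)+O(\log t)=O(\log|G|)$ since $t\le|G|$. Composing Steps 1--3, whose running times are $2^{O(t)}|G|^{O(1)}$, $|G|^{O(1)}$, and $|G|^{O(1)}$ respectively, proves the lemma.

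\emph{Main obstacle.} Steps 1 and 2 are essentially black‑box invocations of known algorithms; the only genuinely delicate point is the width estimate in Step 3, i.e.\ showing that after hanging each vertex at its topmost bag, the cut at \emph{every} node of $T''$ is governed by only $O(t)$ vertices of $G$ (so that exactly one factor of $\Delta$, and no more, appears), while the local balancing needed to turn the attachments into a proper binary structure does not inflate the height beyond $O(\log|G|)$. Both issues are resolved by the ``attach $v$ at $\tau(v)$, then balance locally'' recipe above, but carrying out the cut‑counting argument carefully — in particular arguing that boundary edges of $V[u]$ must touch a bag lying on the tree‑path separating the two sides — is the step that requires the most attention.
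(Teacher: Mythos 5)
Your proposal is correct and follows exactly the same route as the paper, which establishes this lemma simply by chaining the three cited conversions (an FPT-constructed tree decomposition of width $O(t)$, Bodlaender's logarithmic-height rebalancing, and the Thilikos--Serna--Bodlaender conversion to a carving decomposition) without spelling out the details you supply in Step 3. Your additional cut-counting argument is sound --- every crossing edge at a node of $T''$ sitting in the gadget of a bag $\beta'(w)$ has an endpoint in that single bag (or else $V[u]\subseteq\beta'(w)$ outright) --- and the height bound, stated more precisely as $O(\height(T'))+O(\log t)=O(\log|G|)$, goes through because each root-to-leaf path of $T''$ enters at most one of the local balancing gadgets.
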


For the purposes of this work we need a more well behaved notion of carving decomposition, 
which we call {\em contractive carving decomposition}. 

\begin{definition}[Contractive Carving Decomposition]
\label{definition:ContractiveCarvingDecomposition}
We say that a rooted carving decomposition $(T,\gamma)$ of a graph $G=(V,E)$ is {\em contractive} if for each internal 
node $u$ of $T$,  $$E(V[u.l], V[u.r])\neq \emptyset.$$
\end{definition}

In other words, a rooted carving decomposition is {\em contractive} if for each internal node $u$ of $T$ there 
is at least one edge $e$ of $G$ such that one endpoint of $e$ labels a leaf of $T[u.l]$ and the other endpoint 
of $e$ labels a leaf of $T[u.r]$. The next lemma 
states that any rooted carving decomposition of width $w$ and height $h$ can be transformed into a {\em contractive} carving decomposition of 
width $w$ and height $w\cdot h$.
 
\begin{lemma}
\label{lemma:GoodCarving}
Let $G$ be a connected graph and $(T,\gamma)$ be a carving decomposition of $G$ of width $w$ and height $h$. Then one can construct in time 
$O(w\cdot |T|)$ a contractive carving decomposition of $G$ of width $w$ and height $w\cdot h$. 
\end{lemma}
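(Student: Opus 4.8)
The plan is to show that the only obstruction to a rooted carving decomposition being contractive is the presence of internal nodes $u$ with $E(V[u.l],V[u.r])=\emptyset$, and that every such node can be eliminated by a local surgery that redistributes the leaves of the offending subtree without increasing the width and without increasing the height by more than a factor of $w$. First I would fix terminology: call an internal node $u$ of $(T,\gamma)$ \emph{bad} if $E(V[u.l],V[u.r])=\emptyset$. Since $G$ is connected and $V[u]=V[u.l]\cup V[u.r]$ is a disjoint union, a bad node forces every edge of $G$ incident to $V[u.l]$ to leave $V[u]$ entirely; hence $E(V[u.l],V\setminus V[u])=E(V[u.l],V\setminus V[u.l])$, and the same for $u.r$. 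In particular $\deg_{\mathrm{carv}}(u.l)+\deg_{\mathrm{carv}}(u.r)$ counts each such edge once, so $|E(V[u.l],V\setminus V[u.l])|\le |E(V[u],V\setminus V[u])|\le w$; this is the bound that will control the blow-up.

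The core of the argument is the rebuilding step. Take a deepest bad node $u$ (so that $T[u.l]$ and $T[u.r]$ are themselves contractive carving decompositions of the induced subgraphs on $V[u.l]$ and $V[u.r]$). Because $G$ is connected, $V[u]$ has at least one edge to $V\setminus V[u]$; pick a vertex $v\in V[u]$ incident to such an edge, say $v\in V[u.l]$ without loss of generality. The idea is to "graft" the whole subtree $T[u.r]$ onto a leaf-path of $T[u.l]$ near the leaf $\gamma^{-1}(v)$, i.e. replace the structure at $u$ by a caterpillar that first splits off the vertices of $V[u.r]$ one subtree at a time along a path descending towards $\gamma^{-1}(v)$ inside $T[u.l]$, and only at the bottom attaches $T[u.r]$. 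Concretely: let $P$ be the root-to-leaf path in $T[u.l]$ ending at $\gamma^{-1}(v)$; I would re-root so that $u.r$'s subtree hangs off the bottom of a modified $P$, keeping $T[u.l]$ otherwise intact. After this surgery every new internal node $w'$ along the grafting path has $V[w']=V[u.r]\cup (\text{some }S\subseteq V[u.l])$ with $v\in S$; the cut $E(V[w'],V\setminus V[w'])$ is contained in $E(V[u.l],V\setminus V[u.l])\cup E(S,V[u.l]\setminus S)$, which is a subset of the edges crossing the old cuts at $u$, at nodes of $P$, and at $u.r$ — all of width $\le w$ — so the width does not increase. And the new node where $T[u.r]$ attaches is contractive precisely because $v$ was chosen on an edge of $G$; actually I need $v$ to be adjacent to some vertex of $V[u.r]$, so the right choice is: pick an edge $e=\{a,b\}$ with $a\in V[u.r]$, $b\in V\setminus V[u.r]$ (exists by connectivity); if $b\in V[u.l]$ graft near $\gamma^{-1}(b)$ as above; if $b\in V\setminus V[u]$, then instead graft $T[u.l]$ down towards a leaf of the rest of the tree — but the cleanest uniform choice is to always push the \emph{smaller} side (in number of edges leaving it) along a path toward one of its boundary edges, using that $G$ is connected to guarantee such a boundary edge and such a grafting target exist.

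Once one bad node is removed this way, the number of bad nodes strictly decreases (the surgery is local to $T[u]$ and creates only contractive internal nodes inside the rebuilt region, while untouched nodes keep their cuts), so iterating terminates. For the height bound: a single grafting step replaces a subtree of height $\le h'$ by one of height at most $h'+|V[u.r]|$-ish, but the sharper accounting is that along any root-to-leaf path the number of "grafting" insertions is at most the number of distinct cut-values encountered, each $\le w$; amortizing over all bad-node eliminations gives final height $O(w\cdot h)$. The running time is $O(w\cdot|T|)$ because each of the $O(|T|)$ internal nodes is processed once and each processing touches a path of length $O(h)$ — here I would instead argue the total work telescopes, or simply state the bound as claimed. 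The main obstacle I anticipate is making the grafting-path construction and its cut-size analysis fully precise — in particular verifying that every newly created internal node has cut size $\le w$ requires carefully identifying each new cut as a subset of a union of old cuts, and handling the case distinction on which side of the chosen boundary edge lies inside $V[u]$; this bookkeeping, rather than any conceptual difficulty, is where the real work lies.
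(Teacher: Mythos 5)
Your proposal takes a genuinely different route from the paper (iterative local repair of ``bad'' nodes versus the paper's global bottom-up reconstruction), but as written it has two genuine gaps that are not mere bookkeeping. First, the width argument is invalid: you justify ``the width does not increase'' by observing that each new cut $E(V[w'],V\setminus V[w'])$ is contained in a \emph{union} of old cuts, each of size at most $w$. A subset of a union of several sets of size $\le w$ can have size up to the sum of those sizes, so this gives at best $O(w)$, and only if you can bound the number of old cuts involved --- which you do not. Since the lemma claims width exactly $w$ (and since each grafting step threads $V[u.r]$ through every node of a root-to-leaf path $P$ of $T[u.l]$, every node on $P$ acquires the boundary edges of $V[u.r]$ in addition to its own cut), preserving the width requires an argument you have not supplied. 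Second, the height bound is unsubstantiated: a single graft increases the depth of the rebuilt region by an amount you yourself describe as ``$|V[u.r]|$-ish,'' which can be $\Theta(n)$, and the ``sharper accounting'' via amortization over distinct cut-values is asserted but never carried out. Combined with the unresolved case split on where the chosen boundary edge lands ($b\in V[u.l]$ versus $b\in V\setminus V[u]$, for which you propose two different surgeries and then a third ``uniform'' one), the construction is not yet well defined, and there is no invariant showing the iteration terminates with height $w\cdot h$.

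For contrast, the paper avoids all of this by inducting on the height of nodes of $T$ and rebuilding from the leaves up: it shows $G[u]$ has at most $w$ connected components (Proposition \ref{proposition:DisjointSetsSubtree}); at an internal node $u$ it forms the quotient multigraph $\graphcal$ whose vertices are the components of $G[u.l]$ and $G[u.r]$ contributing to one component of $G[u]$, observes that $\graphcal$ has at most $w$ edges and hence at most $w+1$ vertices, and gives $\graphcal$ a contractive caterpillar decomposition of height at most $w$ via a breadth-first ordering (Proposition \ref{proposition:BipartiteWedges}); composing via the $\oplus$ operation then adds at most $w$ to the height per level of $T$, which is exactly where the clean $w\cdot h$ bound comes from. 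Your grafting path plays roughly the role of the paper's caterpillar over $\graphcal$, but because you graft one entire subtree rather than interleaving components of both sides along a short auxiliary tree, you lose control of both the cut sizes and the depth. If you want to salvage your approach, the key missing idea is precisely this component-level quotient: handle the (up to $w$) connected components of $G[u.l]$ and $G[u.r]$ as units and order them by a breadth-first traversal of $\graphcal$, rather than picking a single boundary edge.
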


We will prove Lemma \ref{lemma:GoodCarving} in Subsection \ref{subsection:ProofOfLemmaGoodCarving}. Before that, 
we will use this lemma to prove Theorem \ref{theorem:GoodContractionTree}.

\paragraph{\bf Proof of Theorem \ref{theorem:GoodContractionTree}.}
Let $\abstractnetwork$ be an abstract network such that ${\graph(\abstractnetwork)=(V,E)}$ 
has treewidth $t$ and maximum degree $\Delta$. Note that ${|\graph(\abstractnetwork)| = |\abstractnetwork|}$. 
By Lemma \ref{lemma:carvinwidthtreewidth}, we can construct a rooted carving decomposition of 
$\graph(\abstractnetwork)$ of width $O(\Delta\cdot t)$ and height $O(\log |\abstractnetwork|)$.
By Lemma \ref{lemma:GoodCarving} we can convert $(T,\gamma)$ into a contractive carving decomposition $(T',\gamma')$
of $\graph(\abstractnetwork)$ of width $O(\Delta\cdot t)$ and height $O(\Delta\cdot t\cdot \log|\abstractnetwork|)$. 

Now, we define a function $\iota:\nodes(T')\rightarrow 2^{\N}$ as follows. 
For each leaf $u$ of $T'$ labeled 
with the vertex $\gamma'(u) = v_{\indexset}\in \graph(\abstractnetwork)$, we 
set $\iota(u) = \indexset$. Therefore, at this point we have that $\iota$ establishes a bijection between leaves 
of $T'$ and index sets of $\abstractnetwork$. Next, for each internal node $u\in \nodes(T')$ we set 
$\iota(u)=\iota(u.l)\oplus\iota(u.r)$. Since $(T',\gamma')$ is a contractive carving decomposition of 
height $O(\Delta\cdot t\cdot \log |\abstractnetwork|)$, the pair $(T',\iota)$ is a contraction tree for $\abstractnetwork$ 
of height $O(\Delta\cdot t\cdot \log |\abstractnetwork|)$. 

We claim that the rank of $(T',\iota)$ is at most 
$O(\Delta\cdot t)$. To see this, note that for each node $u$ of $T'$,
$$\iota(u) =\bigoplus_{u'\in \leaves(T'[u])} \iota(u').$$

In other words, $\iota(u)$ is constituted by those indices that occur in precisely one leaf of the subtree $T'[u]$ rooted at $u$.

$$\iota(u) =  \left\{j\;|\;\mbox{There is a unique leaf $u'$ of $T'[u]$ such that $j\in \iota(u')$}\right\}.$$

But this implies that $\iota(u)$ is is precisely the set of indices labeling edges of $\graph(\abstractnetwork)$ 
which lie in $E(V[u],V\backslash V[u])$, where 
$$V[u] = \{v_{\indexset}\in \graph(\abstractnetwork)\;|\; \exists u'\in \leaves(T'[u]), \gamma(u') = v_{\indexset}\}.$$ 
Since each index $j\in \iota(u)$ labels a unique edge in 
$\graph(\abstractnetwork)$, we have that 
$${|\iota(u)| = |E(V[u],V\backslash V[u])|}.$$ 
Therefore, the rank of the contraction tree 
$(T',\iota)$ is equal to the width ${\carvingwidth(T',\gamma')}$ of 
the carving decomposition $(T',\gamma')$ (see Equation \ref{equation:CarvingWidth}). 
Since, by construction, ${\carvingwidth(T',\gamma') = O(\Delta\cdot t)}$, the claim follows. $\square$

\subsection{Proof of Lemma \ref{lemma:GoodCarving}} 
\label{subsection:ProofOfLemmaGoodCarving}

In this subsection we will prove Lemma \ref{lemma:GoodCarving}, which 
states that any rooted carving decomposition $(T,\gamma)$ of width $w$ and height $h$  of a graph $G$ 
can be transformed into a contractive carving decomposition of width $w$ and height $w\cdot h$.
Recall that if $(T,\gamma)$ is a rooted carving decomposition of a graph $G=(V,E)$ then for 
each node $u\in \nodes(T)$ we let $V[u]$  denote the set of nodes associated with the leaves in the subtree of 
$T$ rooted at $u$ (Equation \ref{equation:Vu}).
Below, we let $G[u]$ denote the subgraph of $G$ induced by the vertices in $V[u]$.  

\begin{proposition}
\label{proposition:DisjointSetsSubtree}
Let $G$ be a connected graph and $(T,\gamma)$ be a rooted carving decomposition of $G$ of width $w$.
For each node $u$ of $T$, the graph $G[u]$ has at most $w$ connected components.
\end{proposition}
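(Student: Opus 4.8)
The plan is to show that the number of connected components of $G[u]$ is bounded by the number of edges leaving $V[u]$, which is at most the carving width $w$. Since $G$ is connected, every connected component of $G[u]$ must be attached to the rest of $G$ by at least one edge; distinct components use distinct such edges, so the count of components is at most $|E(V[u],V\backslash V[u])| \leq \carvingwidth(T,\gamma) = w$.

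In detail: first I would dispose of the trivial case $V[u]=V$, i.e. $u$ is the root. In that case $G[u]=G$ is connected (it has one component), so the bound $1 \leq w$ holds provided $w\geq 1$; and if $w=0$ then $G$ has no edges, hence (being connected) is a single vertex, and again has one component, so the bound is vacuous or handled separately. So assume $V[u] \subsetneq V$, whence $V\backslash V[u]\neq\emptyset$.

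Now let $K_1,\dots,K_p$ be the connected components of $G[u]$. For each $K_i$, pick any vertex $v_i\in K_i \subseteq V[u]$. Since $G$ is connected and $V\backslash V[u]\neq\emptyset$, there is a path in $G$ from $v_i$ to some vertex outside $V[u]$; walking along this path, let $e_i$ be the first edge that leaves $V[u]$, i.e. $e_i$ has one endpoint $a_i\in V[u]$ and the other endpoint in $V\backslash V[u]$, and moreover $a_i$ is reachable from $v_i$ within $G[u]$, so $a_i\in K_i$. Thus $e_i\in E(V[u],V\backslash V[u])$. The key point is that the map $i\mapsto e_i$ is injective: if $e_i=e_j$ then its endpoint inside $V[u]$ is both $a_i\in K_i$ and $a_j\in K_j$, forcing $K_i=K_j$, hence $i=j$. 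Therefore $p = |\{K_1,\dots,K_p\}| \leq |E(V[u],V\backslash V[u])| \leq w$, which proves the proposition.

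The only mild subtlety — not really an obstacle — is making sure the first boundary edge $e_i$ on the chosen path indeed has its interior endpoint inside the same component $K_i$ as $v_i$: this is immediate because every vertex on that initial segment of the path up to and including $a_i$ lies in $V[u]$ and is connected to $v_i$ by a subpath lying entirely in $G[u]$. Everything else is a routine counting argument, so I do not anticipate any real difficulty.
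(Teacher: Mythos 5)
Your proof is correct and follows essentially the same route as the paper: both arguments observe that, since $G$ is connected, every connected component of $G[u]$ must be incident to at least one edge of $E(V[u],V\backslash V[u])$, and that distinct components contribute distinct such edges, so the number of components is bounded by the carving width $w$. The paper phrases this as a proof by contradiction and is terser about why each component has a boundary edge, whereas you make the path argument and the injectivity explicit, but the underlying idea is identical.
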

\begin{proof}
Let $u$ be the root of $T$. Since $G$ is connected, $G[u]$ has a unique connected component, which 
is $G$ itself. Now let $u\in nodes(T)$ be a node which is not the root of $T$ and assume for contradiction that the connected components of 
$G[u]$ are $G_1,G_2,...,G_k$ for some $k>w$. Since $G$ is connected, and since there are no edges between distinct 
connected components $G_i$ and $G_j$ of $G[u]$, we have that for each $i\in \{1,...,k\}$, there is at least one edge between 
a vertex of $G_i$ and a vertex in $V\backslash V[u]$. But by Equation \ref{equation:CarvingWidth}, 
this implies that the width of $(T,\gamma)$ is at least $k$, contradicting in this way the assumption that the width 
of $(T,\gamma)$ is $w$. $\square$
\end{proof}

Let $G$ be a connected graph with $n$ vertices. We say that an ordering $v_1v_2...v_n$ of the vertices of $G$ 
is a breadth first traversal in $G$ if for every 
$i,j \in \{1,..,n\}$ with $i<j$, we have that the distance from $v_1$ to $v_i$ is at most the distance from $v_1$ to $v_j$. 
We note that if $v_1v_2...v_n$ is a breadth first traversal, then for each $k\in \{2,...,n\}$, there is an edge
connecting $v_k$ to some vertex in $\{v_1,...,v_{k-1}\}$.

\begin{proposition}
\label{proposition:BipartiteWedges}
Let $G=(V,E)$ be a connected graph with $n$ vertices and $w$ edges. There is a contractive carving decomposition of $G$ of height 
$n-1$ and width at most $w$.
\end{proposition}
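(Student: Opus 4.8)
The plan is to build an explicit ``caterpillar-shaped'' carving decomposition whose spine is ordered by a breadth-first traversal of $G$. First I would fix a breadth-first traversal $v_1,v_2,\ldots,v_n$ of $G$, which exists since $G$ is connected; by the observation preceding the proposition, every $v_k$ with $k\geq 2$ has a neighbour among $\{v_1,\ldots,v_{k-1}\}$. I would then let $T$ be the rooted binary tree with leaves $\ell_1,\ldots,\ell_n$ and internal nodes $x_2,\ldots,x_n$, where $x_2$ has children $\ell_1$ and $\ell_2$, each $x_k$ with $k\geq 3$ has children $x_{k-1}$ and $\ell_k$, and $x_n$ is the root. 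Setting $\gamma(\ell_j)=v_j$ makes $(T,\gamma)$ a rooted carving decomposition of $G$, since $\gamma$ is a bijection from $\leaves(T)$ to $V$. (The degenerate cases $n\leq 2$ are checked directly, using connectedness of $G$ for the edge $v_1v_2\in E$ when $n=2$.)

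Next I would record the basic parameters of this decomposition. A straightforward induction on $k$ shows that $V[x_k]=\{v_1,\ldots,v_k\}$ and $V[\ell_k]=\{v_k\}$. The unique longest root-to-leaf path in $T$ is $x_n,x_{n-1},\ldots,x_2,\ell_1$, which has $n-1$ edges, so the height of $T$ is $n-1$. For the width bound, observe that for every node $u$ of $T$ the set $E(V[u],V\backslash V[u])$ is a subset of $E$, so $|E(V[u],V\backslash V[u])|\leq |E|=w$; hence $\carvingwidth(T,\gamma)\leq w$ by Equation \ref{equation:CarvingWidth}.

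Finally, the only property that genuinely uses the choice of ordering is contractiveness. For an internal node $x_k$ with $k\geq 3$, its children are $x_{k-1}$ and $\ell_k$, so $E(V[x_k.l],V[x_k.r])=E(\{v_1,\ldots,v_{k-1}\},\{v_k\})$, which is nonempty precisely because the breadth-first traversal guarantees that $v_k$ has a neighbour in $\{v_1,\ldots,v_{k-1}\}$; for $x_2$ the children are $\ell_1,\ell_2$ and the same argument gives $v_1v_2\in E$. Thus $(T,\gamma)$ is a contractive carving decomposition of height $n-1$ and width at most $w$, as claimed. I do not anticipate a real obstacle: the only thing to handle carefully is the bookkeeping of the caterpillar's child structure, so that each $V[x_{k-1}]$ is exactly the prefix $\{v_1,\ldots,v_{k-1}\}$ and the contractiveness condition at each internal node reduces exactly to the stated breadth-first property.
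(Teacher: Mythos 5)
Your proposal is correct and follows essentially the same route as the paper: both construct the caterpillar-shaped carving decomposition whose spine is ordered by a breadth-first traversal, bound the width trivially by $|E|=w$, and derive contractiveness at each internal node from the fact that every $v_k$ with $k\geq 2$ has a neighbour among $\{v_1,\ldots,v_{k-1}\}$. The only difference is notational (you index the internal nodes explicitly, whereas the paper identifies leaves by their distance from the root), so nothing further is needed.
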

\begin{proof}
Since $G$ has $w$ edges, any rooted carving decomposition of $G$ has width at most $w$. Thus we just need to 
show that some rooted carving decomposition $(T,\gamma)$ of $G$ is contractive. We let $T$ be the unique binary 
tree with $n$ leaves and height $n-1$. In other words, $T$ has $n-1$ internal nodes, and each of these nodes has 
a child that is a leaf. Additionally, if $u$ is the internal node of $T$ farthest away from the root then 
both children of $u$ are leaves. Now we define the function $\gamma$ which is a bijection from the leaves of $T$
to the vertices of $G$. Let $v_1v_2...v_{n}$ be a breadth first traversal of the vertices of  
$G$. Then for each $k\in \{2,...,n\}$, there is an edge connecting $v_k$ to some 
vertex in $\{v_1,...,v_{k-1}\}$. 
Let $u^*$ be one of the two leaves of $T$ at distance $n-1$ from the root. 
We set $\gamma(u^*)=v_1$. Now for each leaf $u\neq u^*$ we set $\gamma(u)=v_k$
if and only if the distance from $u$ to the root is equal to $n-k+1$ (see Fig. \ref{figure:TrivialContractiveCarvingDecomposition}).

\begin{figure}[h]
\centering
\includegraphics[scale=0.35]{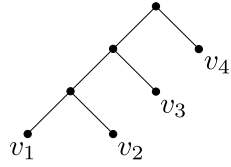}
\caption{The contractive carving decomposition corresponding to a breadth first traversal $v_1v_2v_3v_4$ of a 
graph with $4$ vertices.} 
\label{figure:TrivialContractiveCarvingDecomposition}
\end{figure}

 We claim that $(T,\gamma)$ is contractive.
Let $u$ be an internal node of $T$. If both children of $u$ are leaves, then both of them are at distance $n-1$ from the root. 
By definition, one of these leaves is $u^*$, which is labeled with $v_1$. The other leaf is labeled with $v_2$. 
Since $v_1v_2...v_n$ is a breadth first traversal of the vertices of $G$, there is an edge connecting $v_1$ and $v_2$.
Now let $u$ be 
an internal node for which both children are at distance $n-k+1$ from the root for $k > 2$. Then one of the children of $u$, say $u.r$,
is a labeled with $v_k$ and the other leaf, say $u.l$ is such that $V[u.l]=\{v_1,...,v_{k-1}\}$. Again, since the 
sequence $v_1v_2...v_{n}$ is a breadth first traversal of $G$, we have that there is at least one edge from $V[u.r]=\{v_k\}$ to 
$V[u.l]$.
This shows that $(T,\gamma)$ is contractive. $\square$ 
\end{proof} 

Let $G=(V,E)$ be a connected graph and let $G_1,...,G_k$ be induced subgraphs 
of $G$ such that $G_i=(V_i,E_i)$ for $i\in \{1,...,k\}$, $V_i\cap V_j = \emptyset$ for $i\neq j$, 
and $V = \bigcup_{i}V_i$. We denote by $\graphcal(G,G_1,...,G_k)$ the 
graph with vertex set $\mathcal{V} = \{G_1,...,G_k\}$, and whose edge set $\mathcal{E}$ has one edge $(G_i,G_j)$ for 
each edge of $G$ with one endpoint in some vertex of $G_i$ and another endpoint in some vertex of $G_j$. We note that 
there may be multiple edges between two induced subgraphs $G_i$ and $G_j$. 
Let $(T_{\graphcal},\gamma_{\graphcal})$ be a contractive carving decomposition of $G$ of width at most $w$ and, for each $i\in \{1,...,k\}$,
let $(T_i,\gamma_i)$ be a contractive carving decomposition of width at most $w$ of $G_i$. We denote by 
\begin{equation}
\label{equation:CompositionCarving}
(T,\gamma) = (T_{\graphcal},\gamma_{\graphcal})\oplus [(T_1,\gamma_1), ... , (T_k,\gamma_k)]
\end{equation}
the carving decomposition of $G$ that is obtained by identifying the root of each $(T_i,\gamma_i)$ with the leaf $u$ of $T_{\graphcal}$ for which 
$\gamma_{\graphcal}(u)=G_i$ (see Fig. \ref{figure:SumOfCarvingDecompositions}). 
It is immediate to check that $(T,\gamma)$ is a contractive carving decomposition of $G$ of width at most $w$.

\begin{figure}[h]
\centering
\includegraphics[scale=0.40]{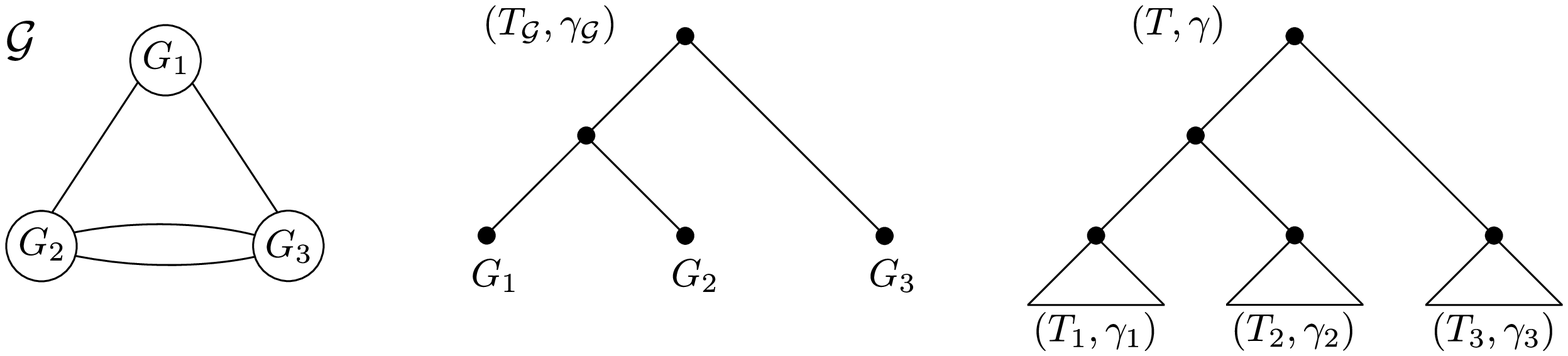}
\caption{Left: The graph $\graphcal = \graphcal(G,G_1,G_2,G_3)$, whose vertices $G_1,G_2,G_3$ are induced subgraphs of a graph $G$. 
Middle: $(T_{\graphcal},\gamma_{\graphcal})$ is a contractive carving decomposition of ${\graphcal = (G,G_1,G_2,G_3)}$.
Right: For each $i$, $(T_i,\gamma_i)$ is a contractive carving decomposition of $G_i$. $(T,\gamma)$ is a 
contractive carvind decomposition of $G$ obtained by identifying, for each $i$, the 
root of $(T_i,\gamma_i)$ with the leaf of $(T_{\graphcal},\gamma_{\graphcal})$ labeled with $G_i$. If 
$(T_{\graphcal},\gamma_{\cal})$ and $(T_i,\gamma_i)$ have width at most $w$, then $(T,\gamma)$ has width at most $w$.} 
\label{figure:SumOfCarvingDecompositions}
\end{figure}

\begin{observation}
\label{observation:VerticesEdgesPlusOne}
Let $G$ be a connected graph with $w$ edges. Then $G$ has at most $w+1$ vertices. 
\end{observation}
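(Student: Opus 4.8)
The plan is to prove the contrapositive-free statement that any connected graph $G$ on $n$ vertices has at least $n-1$ edges; setting the edge count to $w$ then yields $n-1 \le w$, i.e. $n \le w+1$, which is exactly the claim. Conceptually this is just the observation that a connected graph contains a spanning tree, and a tree on $n$ vertices has exactly $n-1$ edges; but rather than invoke spanning trees I would execute the count directly using the breadth first traversal machinery already set up in this section, so that the argument is self-contained.

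Concretely, since $G$ is connected it admits a breadth first traversal $v_1 v_2 \dots v_n$ of its vertices. By the property of breadth first traversals recorded above, for each $k \in \{2,\dots,n\}$ there is an edge $e_k$ of $G$ joining $v_k$ to some vertex in $\{v_1,\dots,v_{k-1}\}$. The key step is to check that the edges $e_2,\dots,e_n$ are pairwise distinct: the edge $e_k$ is incident to $v_k$, and its other endpoint lies among $v_1,\dots,v_{k-1}$, so $v_k$ is the endpoint of $e_k$ with the largest index; since $v_2,\dots,v_n$ are distinct vertices, the edges $e_2,\dots,e_n$ must themselves be distinct. Hence $G$ has at least $n-1$ edges, so $w \ge n-1$ and therefore $n \le w+1$.

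There is essentially no obstacle here — this is a textbook fact — and the only place warranting a line of care is precisely the distinctness of the chosen edges, since that is what turns the estimate into the sharp bound $n-1$ rather than something weaker. I would state the whole argument in two or three sentences.
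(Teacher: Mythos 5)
Your proof is correct, but it takes a different route from the paper's. The paper argues by induction on the number of edges: a connected graph with one edge has at most two vertices, and every connected graph with $w$ edges is viewed as a connected graph with $w-1$ edges plus one additional edge, which can introduce at most one new vertex. You instead prove the equivalent lower bound $w \geq n-1$ directly, by fixing a breadth first traversal $v_1v_2\dots v_n$ (which exists by connectivity and is already set up in this section for Proposition \ref{proposition:BipartiteWedges}), selecting for each $k\geq 2$ an edge $e_k$ joining $v_k$ back to $\{v_1,\dots,v_{k-1}\}$, and observing that these $n-1$ edges are pairwise distinct because $v_k$ is recoverable from $e_k$ as its endpoint of largest index --- an argument that remains valid even though the paper allows multi-edges. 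Your injectivity step is exactly the right point to be careful about, and it is handled correctly. If anything, your version is slightly tighter: the paper's inductive step quietly assumes that every connected graph with $w$ edges arises by adding an edge to a \emph{connected} graph with $w-1$ edges, which requires choosing the removed edge with a little care, whereas your direct count avoids that issue and reuses machinery already present in the section. Both proofs establish the observation.
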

\begin{proof}
The proof is by induction on the number of edges. In the base case, $G$ has a unique edge, 
and therefore the observation holds trivially. Now assume that, for each $w\geq 2$, the observation holds for every 
graph with $w-1$ edges, and let $G$ be a connected graph with $w$ edges. Every such graph $G$ can be
obtained by adding an edge $e$ to a connected graph $G'$ with $w-1$ edges. By the induction hypothesis, 
$G'$ has at most $w$ vertices. Since $G$ is connected, at least one of the vertices in $e$ belongs to $G'$. Therefore $G$ has at most $w+1$ vertices. 
\end{proof}

\paragraph{\bf Proof of Lemma \ref{lemma:GoodCarving}}
Let $G=(V,E)$ be a connected graph, and let $(T,\gamma)$ be a rooted carving decomposition of $G$ of width $w$ and height $h$. 
We will construct a {\em contractive} carving decomposition $(T',\gamma')$ of $G$ of width at most $w$ and depth at most $w\cdot h$. 
Let $u$ be a node of $T$. By proposition \ref{proposition:DisjointSetsSubtree}, 
the graph $G[u]$ has at most $w$ connected components. Let $G_{u,1},...,G_{u,r}$ for $r\leq w$ be the connected components of $G[u]$.
Let $\height(u)$ denote the height of node $u$ in $T$. 

\begin{claim}
\label{claim:Contractive}
For each node $u$ of $T$ and each connected component $G_{u,i}$ of $G[u]$, 
there exists a {\em contractive} carving decomposition $(T_{u,i},\gamma_{u,i})$ of $G_{u,i}$
such that the height of $(T_{u,i},\gamma_{u,i})$ is at most $w\cdot \mathit{height}(u)$. 
\end{claim}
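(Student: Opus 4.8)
The plan is to prove Claim \ref{claim:Contractive} by induction on $\height(u)$, the height of the node $u$ in $T$. In the base case $\height(u)=0$, so $u$ is a leaf, $V[u]=\{\gamma(u)\}$, and $G[u]$ is a single-vertex graph; its unique connected component admits the trivial one-node carving decomposition, which is contractive (vacuously, as it has no internal nodes) and has height $0=w\cdot\height(u)$.

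For the inductive step, suppose $\height(u)=k\ge 1$ and the claim holds for all nodes of height strictly less than $k$. Then $u$ has children $u.l$ and $u.r$ with $\height(u.l),\height(u.r)\le k-1$, and $V[u]$ is the disjoint union of $V[u.l]$ and $V[u.r]$. Fix a connected component $G_{u,i}$ of $G[u]$. Since $G[u.l]$ and $G[u.r]$ are \emph{induced} subgraphs of $G[u]$, every connected component of $G[u.l]$ (and of $G[u.r]$) lies entirely within a single connected component of $G[u]$; hence $V(G_{u,i})$ is the disjoint union of the vertex sets of some components $C_1,\dots,C_p$ of $G[u.l]$ and some components $D_1,\dots,D_q$ of $G[u.r]$. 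I would then form the quotient graph $\graphcal_{u,i}=\graphcal(G_{u,i},C_1,\dots,C_p,D_1,\dots,D_q)$, which is connected because $G_{u,i}$ is.

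The key point is that $\graphcal_{u,i}$ has at most $w$ edges: an edge of $G_{u,i}$ joining two distinct pieces cannot join $C_a$ to $C_b$, nor $D_a$ to $D_b$ (distinct components of an induced subgraph have no edge between them in $G$), so it joins some $C_a$ to some $D_b$ and therefore belongs to $E(V[u.l],V[u.r])\subseteq E(V[u.l],V\setminus V[u.l])$, a set of size at most $\carvingwidth(T,\gamma)=w$. By Observation \ref{observation:VerticesEdgesPlusOne}, $\graphcal_{u,i}$ has at most $w+1$ vertices, so by Proposition \ref{proposition:BipartiteWedges} it admits a contractive carving decomposition $(T_{\graphcal},\gamma_{\graphcal})$ of width at most $w$ and height at most $w$. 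By the induction hypothesis each $C_a$ and each $D_b$ has a contractive carving decomposition of width at most $w$ and height at most $w\cdot(k-1)$. Composing these via the operation of Equation \ref{equation:CompositionCarving}, and using the remark immediately after that equation (composition preserves contractivity and keeps width at most $w$), I obtain a contractive carving decomposition $(T_{u,i},\gamma_{u,i})$ of $G_{u,i}$ of width at most $w$ and height at most $w+w(k-1)=wk=w\cdot\height(u)$. The degenerate cases ($p$ or $q$ equal to $0$, or $p+q=1$) are subsumed, since then $\graphcal_{u,i}$ is a single vertex and the composition just returns the component's own decomposition.

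The only non-routine ingredient is the bound $|E(\graphcal_{u,i})|\le w$: it hinges precisely on the fact that the pieces glued together are connected components of the \emph{induced} subgraphs $G[u.l]$ and $G[u.r]$, so that every cross-piece edge must cross the $u.l$/$u.r$ cut of the original carving decomposition and is thus counted by $\carvingwidth(T,\gamma)$. Everything else (the height arithmetic, the composition lemma, the degenerate single-vertex quotients) is bookkeeping that follows from Propositions \ref{proposition:DisjointSetsSubtree} and \ref{proposition:BipartiteWedges} and Observation \ref{observation:VerticesEdgesPlusOne} already established above.
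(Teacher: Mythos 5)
Your proof is correct and follows essentially the same route as the paper's: induction on $\height(u)$, decomposing each component $G_{u,i}$ into components of $G[u.l]$ and $G[u.r]$, bounding the quotient graph $\graphcal_{u,i}$ by $w$ edges via the carving-width cut, and composing via Observation \ref{observation:VerticesEdgesPlusOne}, Proposition \ref{proposition:BipartiteWedges}, and Equation \ref{equation:CompositionCarving}. Your justification that every cross-piece edge lies in $E(V[u.l],V[u.r])\subseteq E(V[u.l],V\setminus V[u.l])$ is in fact slightly more explicit than the paper's, and the treatment of degenerate cases is a welcome addition.
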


We note that Claim $\ref{claim:Contractive}$ implies Lemma \ref{lemma:GoodCarving}, since 
if $u$ is the root of $T$ and $G$ is connected, then the graph $G[u]$ has a single connected component, 
which is $G$ itself. The proof of Claim \ref{claim:Contractive} is by induction on the height of the node $u$ in $T$. 
In the base case, $u$ is a leaf of $T$. In this case $G[u]$ consists of a unique connected 
component $G_{u,1}$ which is the vertex $\gamma(u)$ of $V$ labeling $u$. And therefore the carving decomposition 
$(T_{u,1},\gamma_{u,1})$ consists of a unique node $u'$ labeled with $\gamma_{u,1}(u')=\gamma(u)$.

Now assume that Claim \ref{claim:Contractive} is true for every node of $T$ of height at most $h$, 
and let $u$ be a node of height $h+1$. By the induction hypothesis, each connected component $G_{u.l,i}$
of $G[u.l]$ has a contractive carving decomposition $(T_{u.l,i},\gamma_{u.l,i})$ of height at most $w\cdot \height(u.l)$. 
Analogously, each connected component $G_{u.r,j}$ of $G[u.r]$ has a contractive carving decomposition 
$(T_{u.r,j},\gamma_{u.r,j})$ of height at most $w\cdot \mathit{height}(u.l)$. 

Let $G_{u,j}$ be a connected component of $G[u]$. Then the set of vertices of $G_{u,j}$ is the union of the vertex sets of 
some connected components of $G[u.l]$ (say $G_{u.l,1},...G_{u.l,p}$), 
and some connected components of $G[u.r]$ (say $G_{u.r,1},...,G_{u.r,q}$).
Note that all edges of $G_{u,j}$ that do not belong to some of these components, 
must connect some vertex of $G_{u.l,i}$ to some vertex of $G_{u.r,i'}$ for 
some $i\in \{1,...,p\}$ and some $i'\in \{1,...,q\}$. 
But since the carving decomposition $(T,\gamma)$  of the graph $G$ has  width at most $w$, 
there can be at most $w$ such new edges. In other words, the graph 
$\graphcal = \graphcal(G_{u,j}, G_{u.l,1},...G_{u.l,p},G_{u.r,1},...,G_{u.r,q})$ has at most $w$ edges. 
Since $\graphcal$ is connected, by Observation \ref{observation:VerticesEdgesPlusOne} we have that $\graphcal$ has 
at most $w+1$ vertices. By Proposition \ref{proposition:BipartiteWedges}, $\graphcal$ has 
a contractive carving decomposition $(T_{\graphcal},\gamma_{\graphcal})$ of height at most $w$ and width at most $w$. 
Therefore the carving decomposition 
\begin{equation*}
(T_{u,i}) = (T_{\graphcal},\gamma_{\graphcal})\oplus [(T_{u.l,1},\gamma_{u.l,1}),...,(T_{u.l,p},\gamma_{u.l,p}), 
(T_{u.r,1},\gamma_{u.r,1}),...,(T_{u.r,q},\gamma_{u.r,q})]
\end{equation*} 
is contractive and has width at most $w$. Since $(T_{\graphcal},\gamma_{\graphcal})$ has height at most $w$, and 
by assumption, each $(T_{u.l,i},\gamma_{u.l,i})$ and $(T_{u.r,j},\gamma_{u.r,j})$ has height at most $w\cdot h$, 
we have that the height of $(T_{u,i},\gamma_{u,i})$ is at most $w\cdot h + w = w\cdot (h+1)$. This 
proves Claim \ref{claim:Contractive}, and therefore also Lemma \ref{lemma:GoodCarving}, by letting $u$ be the root of $T$. $\square$

\section{Tensor Networks}
\label{section:TensorNetwork}

In this section we will redefine the well known notion of tensor network in function of abstract networks. 
Within this formalism, a tensor network is a pair $(\abstractnetwork,\lambda)$ where $\abstractnetwork$
is an abstract network, and $\lambda$ is a function that associates a tensor $\lambda(\indexset)$
of rank $|\indexset|$ 
with each index set $\indexset\in \abstractnetwork$. 
We believe that defining tensor networks in this way has the advantage of separating the algorithmic aspects of tensor networks 
from their quantum aspects. Additionally, the formalism of abstract networks will also be used in Section \ref{section:FeasibilityTensorNetworks} 
to introduce the notion of {\em feasibility tensor networks} which will be used to address the problem of 
approximating the maximum acceptance probability of quantum circuits with uninitialized inputs. 

Let $\Pi(d) =\{\ket{b_1}\bra{b_2}\;|\; b_1,b_2\in \{0,...,d-1\}\}$. A $d$-state tensor 
with index set ${\indexset =\{i_1,...,i_k\}}$ is an array $\atensor$ consisting of $|\Pi(d)|^{k}=d^{2k}$ complex numbers. 
The entries $$\atensor(\sigma_{i_1},...,\sigma_{i_k})$$ of $\atensor$ are indexed by a sequence of variables 
$\sigma_{i_1},...,\sigma_{i_k}$, each of which ranges over the set $\Pi(d)$. We note that if $\indexset=\emptyset$
then a tensor with index set $\indexset$ is simply a complex number $\atensor(\_)$. 
If $\atensor$ is a tensor with index set $\indexset$ then we let $\rank(\atensor)=|\indexset|$ be the {\em rank} of $\atensor$.  
We denote by $\tensors(d,\indexset)$ the set of all $d$-state tensors with index set $\indexset$ and by 
$\tensors(d)=\bigcup_{\indexset \subseteq \N} \tensors(d,\indexset)$ the set of all $d$-state tensors. 

\begin{definition}[Tensor Network]
\label{definition:TensorNetwork}
A {\em tensor network} is a pair $(\abstractnetwork,\lambda)$ where $\abstractnetwork$ is an abstract network and $\lambda$ is a function that associates with each 
index set $\indexset \in \abstractnetwork$, a tensor $\lambda(\indexset)\in \tensors(d,\indexset)$. 
\end{definition}

A tensor network $(\abstractnetwork,\lambda)$ is connected if $\abstractnetwork$ is connected. In this work we will 
only be interested in connected tensor networks. 
An important operation involving tensors is the operation of tensor contraction. 
If $\atensor$ is a tensor with index set ${\indexset=\{i_1,...,i_k,l_1,...,l_r\}}$ and $\atensor'$ is a tensor with index set 
$\indexset'=\{j_1,...,j_{k'},l_1,...,l_r\}$ then the contraction of $\atensor$ and $\atensor'$ gives rise to the tensor 
$\tensorContraction(\atensor,\atensor')$ with index set ${\indexset\oplus \indexset' = \{i_1,...,i_k, j_1,...,j_{k'}\}}$ 
where each entry $\tensorContraction(\atensor,\atensor')(\sigma_{i_1},...,\sigma_{i_k},\sigma_{j_1},...,\sigma_{j_{k'}})$
is defined as

\begin{equation}
\sum_{\sigma_{l_1},...,\sigma_{l_r}\in \Pi(d)} \atensor(\sigma_{i_1},...,\sigma_{i_k},\sigma_{l_1},...,\sigma_{l_{r}})\cdot \atensor'(\sigma_{j_1},...,\sigma_{j_{k'}}, \sigma_{l_1},...,\sigma_{l_r}).
\end{equation}

If $(\abstractnetwork,\lambda)$ is a tensor network and $\indexset_1,\indexset_2$ is a pair of contractible sets in $\abstractnetwork$ 
then we say that the tensor network $(\abstractnetwork',\lambda')$ is obtained from $(\abstractnetwork,\lambda)$ by the 
contraction of $\indexset_1$ and $\indexset_2$ if 
$\abstractnetwork' = (\abstractnetwork\backslash \{\indexset_1,\indexset_2\}) \cup \{\indexset_1\oplus \indexset_2\}$, 
and if $\lambda'$ satisfies the following conditions.  
\begin{enumerate}
	\item $\lambda'(\indexset_1\oplus \indexset_2)  = \tensorContraction(\lambda(\indexset_1),\lambda(\indexset_2))$.
	\item $\lambda'(\indexset)  = \lambda(\indexset)$ for each $\indexset \in \abstractnetwork'\backslash \{\indexset_1\oplus \indexset_2\}$. 
\end{enumerate}

Any connected tensor network with $m$ index sets can be contracted $m-1$ times. The result of this contraction process 
is a tensor network $([\,\emptyset\,],\lambda_0)$ with 
a unique index set, namely $\emptyset$, which is labeled with a rank-$0$ tensor $\lambda_0(\emptyset)$ 
(that is to say, a complex number). The value of $(\abstractnetwork,\lambda)$, denoted by 
$\valuetensornetwork(\abstractnetwork,\lambda)$, is defined as the absolute value of $\lambda_0(\emptyset)$. 
More precisely, $\valuetensornetwork(\abstractnetwork,\lambda) = |\lambda_0(\emptyset)|$. 
We observe that the value of a tensor network is well defined, since it does not depend on the order 
in which the tensors of the network are contracted. 

\subsection{Mapping Quantum Circuits with Initialized Inputs to Tensor Networks}

One of the main reasons behind the popularity of tensor networks is the fact that they can be used to simulate quantum circuits.
First, we note that both density operators and quantum gates can be naturally regarded as tensors. 
If $\rho$ is a density operator acting on $d$-dimensional qudits indexed by $\indexset=\{i_1,...,i_k\}$, then 
the tensor $\mathbold{\rho}$ associated with $\rho$ is defined as 

\begin{equation}
\mathbold{\rho}(\sigma_{i_1},...,\sigma_{i_k})= \trace\left(\rho\cdot [\sigma_{i_1}^{\dagger} \otimes ...\otimes \sigma_{i_k}^{\dagger}] \right). 
\end{equation}

If $Q$ is a quantum gate with inputs indexed by $\indexset=\{i_1,...,i_k\}$ and outputs indexed by $\indexset'=\{j_1,...,j_l\}$ where 
$\indexset\cap \indexset'=\emptyset$, then the tensor $\mathbold{Q}$ associated with $Q$ is defined as 

\begin{equation}
\mathbold{Q}(\sigma_{i_1},...,\sigma_{i_k},\sigma_{j_1},...,\sigma_{j_l}) = \trace\left( Q \cdot 
[\sigma_{i_1}\otimes ...\otimes \sigma_{i_k}] \cdot [\sigma_{j_1}^{\dagger}\otimes ...\otimes \sigma_{j_l}^{\dagger}] \right).
\end{equation}

In the sequel, we will not distinguish between gates or density matrices and their associated tensors. 
If $C=(V,E,\vertexlabelingfunction,\edgelabelingfunction)$ is a quantum circuit in which all inputs are initialized, 
then the tensor network $(\abstractnetwork_C,\lambda_{C})$ associated with $C$ is obtained as follows. For each vertex $v\in V$, 
let $\indexset(v)$ be the index set consisting of all integers labeling edges of $C$ which are incident with $v$. Then 
we add $\indexset(v)$ to $\abstractnetwork_C$ and set $\lambda_C(\indexset(v))$ to be the tensor associated 
with the gate $\vertexlabelingfunction(v)$ of $C$. We say that $\abstractnetwork_{C}$ is the abstract network associated with 
$C$.  
The following proposition, which is well known in tensor-network theory (see \cite{MarkovShi2008} for a proof), 
establishes a close correspondence between the value of tensor networks and the acceptance probability of quantum circuits. 

\begin{proposition}
\label{proposition:Measurement}
Let $C$ be a quantum circuit with $n$ inputs initialized with the state $\ket{y}$ for some $y\in \{0,...,d-1\}^n$. 
Then $\valuetensornetwork(\abstractnetwork_C,\lambda_C) = Pr(C,\ket{y})$. 
\end{proposition}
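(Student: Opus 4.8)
The plan is to prove Proposition~\ref{proposition:Measurement} by showing that the process of contracting the tensor network $(\abstractnetwork_C,\lambda_C)$, viewed entry-by-entry, literally computes the trace expression $\trace[C(\ket{y}\bra{y})\cdot M(C)]$ that defines $Pr(C,\ket{y})$. First I would recall that the operators $\{\sigma = \ket{b_1}\bra{b_2} : b_1,b_2 \in \{0,\dots,d-1\}\}$ forming $\Pi(d)$ constitute a basis of $\operators(\hilbert_d)$, and that the dual basis under the Hilbert--Schmidt inner product $\langle A,B\rangle = \trace(A^\dagger B)$ is given by the adjoints $\sigma^\dagger = \ket{b_2}\bra{b_1}$, so that $\trace(\sigma_1^\dagger \sigma_2) = \delta_{\sigma_1,\sigma_2}$. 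Consequently, any operator $X$ on a collection of qudits indexed by $\indexset = \{i_1,\dots,i_k\}$ admits the expansion $X = \sum_{\sigma_{i_1},\dots,\sigma_{i_k}} \mathbold{X}(\sigma_{i_1},\dots,\sigma_{i_k}) \,\sigma_{i_1}\otimes\cdots\otimes\sigma_{i_k}$, where the coefficients are exactly the tensor entries defined in the excerpt (for density operators) or, for gates $Q$, are the entries $\mathbold{Q}(\sigma_{i_1},\dots,\sigma_{i_k},\sigma_{j_1},\dots,\sigma_{j_l})$ describing the linear map $Q$ in this basis. This is the representation-theoretic bookkeeping step: each wire of $C$ carries a variable $\sigma_e$ ranging over $\Pi(d)$, and every tensor entry records a coefficient in the expansion of the corresponding gate/state/measurement element.

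Next I would argue that the single operation ``contract the tensors $\lambda_C(\indexset(v))$ and $\lambda_C(\indexset(v'))$ along a shared wire'' corresponds exactly to ``compose the linear maps / take partial traces'' along that wire; this is the content of verifying that $\tensorContraction$ as defined matches operator composition in the $\Pi(d)$ basis, which is a direct computation using $\trace(\sigma_1^\dagger\sigma_2)=\delta_{\sigma_1,\sigma_2}$. Since Proposition~\ref{proposition:Measurement}'s hypothesis initializes every input with $\ket{y_i}\bra{y_i}$ (a legitimate density operator, hence a rank-$1$ tensor) and every output vertex carries a measurement element $\vertexlabelingfunction(v)$ (also an operator on one qudit, hence a rank-$1$ tensor), the abstract network $\abstractnetwork_C$ is connected and closed — every index occurs in exactly two index sets — so by the discussion in Section~\ref{section:TensorNetwork} it contracts down to a single rank-$0$ tensor $\lambda_0(\emptyset)$, a complex number. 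I would then observe that this number, obtained by summing over all assignments of basis operators to the internal wires, telescopes: the sum over each internal wire's variable implements operator composition / partial trace, the input tensors insert $\ket{y}\bra{y}$, the gate tensors apply the superoperator $C$, and the output tensors insert $M(C)=\otimes_v \vertexlabelingfunction(v)$, with the final sum closing the trace. Hence $\lambda_0(\emptyset) = \trace[C(\ket{y}\bra{y})\cdot M(C)] = Pr(C,\ket{y})$.

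To make the telescoping rigorous I would proceed by induction on $|V|$, or equivalently on the number of contractions, choosing at each step a contractible pair and using that operator composition in the $\Pi(d)$ basis equals $\tensorContraction$; the order-independence of the final value (noted after Definition~\ref{definition:TensorNetwork}) lets me pick a convenient topological order — contract input vertices into their successor gates, then sweep forward through the circuit, so that after processing a ``frontier'' of wires the accumulated tensor is precisely the matrix of the partial superoperator applied so far to $\ket{y}\bra{y}$, expressed in the $\Pi(d)$ basis on the frontier wires. When the sweep reaches the output vertices, contracting in the measurement-element tensors and summing out the last wires yields $\trace[C(\ket{y}\bra{y})M(C)]$ by definition of the tensor associated with a density operator (the rank-$0$ case $\mathbold{\rho}(\_) = \trace(\rho)$). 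Since $Pr(C,\ket{y})$ is by definition this trace and is a nonnegative real, taking absolute value in the definition of $\valuetensornetwork$ changes nothing, giving $\valuetensornetwork(\abstractnetwork_C,\lambda_C) = Pr(C,\ket{y})$.

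The main obstacle is purely notational rather than mathematical: one must carefully match up the indexing conventions — which edge label corresponds to which tensor variable, and the asymmetry between ``input'' and ``output'' indices of a gate tensor (the $\sigma$ versus $\sigma^\dagger$ placement in the definition of $\mathbold{Q}$) — so that contracting along a directed wire genuinely realizes ``output of one gate feeds input of the next.'' In other words, the crux is checking that the conjugate-transpose placement in the definitions of $\mathbold{\rho}$ and $\mathbold{Q}$ is exactly what is needed for $\tensorContraction$ to equal operator composition along wires; once that bookkeeping lemma is in hand, the proposition follows from the definition of $Pr(C,\ket{y})$ essentially by unwinding notation. Since this correspondence is standard in tensor-network theory, I would state the bookkeeping lemma, verify the rank-$0$ and single-contraction base cases explicitly, and then cite/sketch the inductive sweep, referring to \cite{MarkovShi2008} for the routine details.
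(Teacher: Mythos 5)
Your proposal is correct, and in fact it supplies more than the paper does: the paper gives no proof of Proposition~\ref{proposition:Measurement} at all, deferring to \cite{MarkovShi2008} with the remark that the statement is ``well known in tensor-network theory.'' Your argument is the standard one underlying that reference, correctly transposed to the mixed-state formalism used here: the operators in $\Pi(d)$ form an orthonormal basis of $\operators(\hilbert_d)$ under the Hilbert--Schmidt inner product, the entries $\mathbold{\rho}(\cdot)$ and $\mathbold{Q}(\cdot)$ are exactly the coefficients of the state and of the superoperator in that basis, summing a shared index implements operator composition (resp.\ partial trace), and order-independence of the contraction lets you organize the induction as a forward sweep ending in $\trace[C(\ket{y}\bra{y})\cdot M(C)]$, which is a nonnegative real so the absolute value in the definition of $\valuetensornetwork$ is harmless. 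The one point you rightly flag as the crux --- the placement of $\sigma$ versus $\sigma^{\dagger}$ so that contraction along a directed wire realizes composition --- is also the one place where the paper is silent: it never specifies the tensor assigned to an output vertex carrying a measurement element $\vertexlabelingfunction(v)$, and your bookkeeping lemma is precisely what pins down the convention ($\mathbold{X}(\sigma)=\trace(X\sigma)$ for measurement elements, dual to the $\sigma^{\dagger}$ convention for states) that makes the final contraction close the trace. No gap; your sketch is a legitimate stand-alone proof of the proposition.
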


In other words, $\valuetensornetwork(\abstractnetwork_C,\lambda_C)$ is the acceptance probability of $C$. 

\subsection{Computing the Value of a Tensor Network}
\label{subsection:ComputingValueTensorNetwork}

The process of computing the value $\valuetensornetwork(\abstractnetwork,\lambda)$ of a tensor network $(\abstractnetwork,\lambda)$ 
is known as simulation. Given a contraction tree $(T,\iota)$ of rank $r$ for $\abstractnetwork$,
the following definition can be used to compute $\valuetensornetwork(\abstractnetwork,\lambda)$ 
in time $d^{O(r)}\cdot |\abstractnetwork|^{O(1)}$.

\begin{definition}[Tensor Network Simulation]
\label{definition:TensorNetworkSimulation}
Let $(\abstractnetwork,\lambda)$ be a tensor network and $(T,\iota)$ be a contraction tree for $\abstractnetwork$. 
A simulation of $(\abstractnetwork,\lambda)$ on $(T,\iota)$ is a function $\simulation:\nodes(T)\rightarrow \tensors(d)$ 
satisfying the following conditions:
\begin{enumerate}
	\item For each leaf $u$ of $T$, $\simulation(u) = \lambda(\iota(u))$.
	\item For each internal node $u$ of $T$, $\simulation(u)= \tensorContraction(\simulation(u.l),\simulation(u.r))$. 
\end{enumerate}
\end{definition}

Note that if $u$ is the root of a contraction tree, then $\iota(u)=\emptyset$. In this case the tensor $\simulation(u)$ is 
a rank-0 tensor (that is, a complex number) which is obtained by contracting all tensors in $(\abstractnetwork,\lambda)$. 
This implies that $|\simulation(u)| = \valuetensornetwork(\abstractnetwork,\lambda)$.
If the contraction tree $(T,\iota)$ has rank $r$, then for each node $u$ of $T$ we have that $|\iota(u)|\leq r$. In other 
words, for each $u\in \nodes(T)$, the tensor $\simulation(u)$ has rank at most $r$, and for this reason 
$\simulation(u)$ can be represented by $d^{2r}$ complex numbers.
In this way, the simulation  $\simulation$ can be inductively constructed in time $d^{O(r)}\cdot |\abstractnetwork|^{O(1)}$, 
and therefore, $\valuetensornetwork(\abstractnetwork,\lambda)$ can be computed in time $d^{O(r)}\cdot |\abstractnetwork|^{O(1)}$. 

By Theorem \ref{theorem:GoodContractionTree}, if the graph $\graph(\abstractnetwork)$ of an abstract network 
$\abstractnetwork$ has treewidth $t$ and maximum degree $\Delta$, then one can construct in polynomial time a contraction tree for 
$\abstractnetwork$ of rank $O(\Delta\cdot t)$. Therefore, Definition \ref{definition:TensorNetworkSimulation} can be used to compute  
the value $\valuetensornetwork(\abstractnetwork,\lambda)$ of a tensor network $(\abstractnetwork,\lambda)$ in time 
${d^{O(\Delta\cdot t)}\cdot |\abstractnetwork|^{O(1)}}$. 
Now let $\universalgates$ be a fixed finite universal set of gates, and let $C$ be a quantum circuit of treewidth $t$, 
whose inputs are initialized with a basis state $\ket{y}$, and whose gates
are drawn from $\universalgates$. Let $(\abstractnetwork_C,\lambda_C)$ be the tensor network associated with $C$.
By Proposition \ref{proposition:Measurement}, $\valuetensornetwork(\abstractnetwork_C,\lambda_C) = \mathit{Pr}(C,\ket{y})$. 
Therefore, the simulation algorithm described above can be used to compute the acceptance probability of $C$ in 
time $d^{O(\Delta(\universalgates)\cdot t)}\cdot|C|^{O(1)}$, where $|C|$ is the number of vertices of $C$.
We note that this algorithm has the same asymptotic time complexity as the original contraction algorithm for tensor 
networks devised in \cite{MarkovShi2008}, although our contraction technique based 
on Theorem \ref{theorem:GoodContractionTree} is different from that employed in \cite{MarkovShi2008}. 

We observe that the fact that the contraction trees constructed in Theorem \ref{theorem:GoodContractionTree} 
have logarithmic height is not relevant for the time complexity of the simulation algorithm described above.
Nevertheless, as we will see in Section \ref{section:ApproximatingFeasibility}, contraction trees of logarithmic
height will be essential when devising a polynomial time algorithm for the problem of approximating 
the maximum acceptance probability of constant-treewidth quantum circuits with uninitialized inputs.
Even though it is possible to extract contraction trees of constant rank from the contraction sequences 
defined in \cite{MarkovShi2008}, the contraction trees obtained in this way are not guaranteed to have 
logarithmic height. Therefore, our contraction algorithm cannot be directly replaced by the contraction algorithm 
devised in \cite{MarkovShi2008} when addressing the satisfiability of constant-treewidth quantum circuits.

\section{Feasibility Tensor Networks}
\label{section:FeasibilityTensorNetworks}

In Section \ref{section:TensorNetwork} we defined tensor networks in terms of abstract 
networks and showed how contraction trees can be used to address the problem of computing the 
value of a tensor network. In this section we will use abstract networks to introduce {\em feasibility tensor networks}. 
We will then proceed to show 
that feasibility tensor networks can be used to address the problem of computing an assignment that maximizes the acceptance 
probability of quantum circuits with uninitialized inputs. 

\begin{definition}[Feasibility Tensor Network]
A {\em feasibility tensor network} is a pair $(\abstractnetwork, \Lambda)$ where $\abstractnetwork$ is an abstract network 
and $\Lambda:\abstractnetwork\rightarrow 2^{\tensors(d)}$ is a function that associates with each index set $\indexset\in \abstractnetwork$ a finite 
set of tensors $\Lambda(\indexset)\subseteq \tensors(d,\indexset)$. 
\end{definition}

Note that the only difference between tensor networks and feasibility tensor networks is that while in the former we 
associate a tensor with each index set, in the latter we associate a set of tensors with each index set. 
If $(\abstractnetwork,\Lambda)$ is a feasibility tensor network, then an {\em initialization} of $(\abstractnetwork,\Lambda)$ 
is a function $\lambda:\abstractnetwork\rightarrow \tensors(d)$ such that $\lambda(\indexset) \in \Lambda(\indexset)$ for 
each index set $\indexset\in \abstractnetwork$. Intuitively, an initialization $\lambda$ chooses one tensor $\lambda(\indexset)$
from each set of tensors $\Lambda(\indexset)$. 
Observe that for each such an initialization $\lambda$, the pair $(\abstractnetwork,\lambda)$ is a tensor network as defined 
in Section \ref{section:TensorNetwork}. The value of a feasibility tensor network is defined as 

\begin{equation}
\label{equation:ValueFeasibilityTensorNetwork}
\valuefeasibilitytensornetwork(\abstractnetwork,\Lambda) = \max\{\valuetensornetwork(\abstractnetwork,\lambda)\;|\;\lambda \mbox{ is an initialization of $(\abstractnetwork,\Lambda)$}\}.
\end{equation}

Below we show that 
the problem of finding an assignment that maximizes the acceptance probability of a quantum circuit 
with uninitialized inputs can be reduced to the problem of computing an initialization of 
maximum value for a feasibility tensor network. 
Therefore, the problem of computing the value of a feasibility tensor network is $\mathrm{QCMA}$ hard. The 
conversion from quantum circuits with uninitialized inputs to feasibility tensor networks 
goes as follows: Each uninitialized input $v$ corresponds to 
the set of tensors $\{\ket{0}\bra{0},\ket{1}\bra{1},...,\ket{d-1}\bra{d-1}\}$. Intuitively, this set of tensors 
consists of all possible values that can be used to initialize $v$. On the other hand, each input vertex $v$ which
is already initialized with a density matrix $\ket{i}\bra{i}$ corresponds to the singleton set 
$\{\ket{i}\bra{i}\}$. Finally, each gate $g$ of the circuit corresponds to the singleton set 
$\{\atensor\}$. We formalize this construction in Definition \ref{definition:CircuitsUninitializedFeasibility}.  

\begin{definition}[From Quantum Circuits to Feasibility Tensor Networks]
\label{definition:CircuitsUninitializedFeasibility}
Let ${C=(V,E,\vertexlabelingfunction,\edgelabelingfunction)}$ be a quantum circuit in which some of the inputs are uninitialized.
The feasibility tensor network associated with $C$ is denoted by $(\abstractnetwork_C,\Lambda_C)$, where 
${\abstractnetwork_{C}=\{\indexset(v)\; | \; v\in V\}}$ is the abstract network associated with $C$, 
and $\Lambda_C$ is such that for each $v\in V$, 
\begin{equation}
\label{equation:LAMBDA}
\Lambda_C(\indexset(v)) = \left\{ 
\begin{array}{l}
\{\ket{0}\bra{0}, \ket{1}\bra{1},...,\ket{d-1}\bra{d-1}\} \mbox{ if $v$ is an uninitialized input,}\\
\\
\{\vertexlabelingfunction(v)\} \mbox{ otherwise.} \\
\end{array}\right.
\end{equation}
\end{definition}

Now let $\lambda$ be an initialization of the feasibility tensor network $(\abstractnetwork_C,\Lambda_C)$. 
Then the tensor network $(\abstractnetwork_C, \lambda)$ is precisely the tensor network 
associated with the circuit $C$ in which the inputs are initialized with the state

\begin{equation}
\label{equation:AssignmentFromInitialization}
\ket{y_{\lambda}}= \otimes_{v\in \inputvertices(C)} \lambda(\indexset(v)).
\end{equation} 

In other words, 
$\valuetensornetwork(\abstractnetwork_{C},\lambda) = \mathit{Pr}(C,\ket{y_{\lambda}})$.  
Therefore, we have the following observation.

\begin{observation} 
\label{observation:QuantumCircuitsToTensorNetworks}
For each quantum circuit $C$, the value $\valuefeasibilitytensornetwork(\abstractnetwork_{C},\Lambda_C)$ of the 
feasibility tensor network $(\abstractnetwork_C,\Lambda_C)$ associated with $C$ is equal to the maximum acceptance probability $\mathit{Pr}^{cl}(C)$ of $C$.
\end{observation}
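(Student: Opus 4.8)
The plan is to set up a value-preserving correspondence between the initializations of the feasibility tensor network $(\abstractnetwork_C,\Lambda_C)$ and the classical input assignments of $C$, and then take maxima on both sides. The correspondence is almost entirely forced by Definition \ref{definition:CircuitsUninitializedFeasibility}, so the argument should be short.

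First I would observe that, by Equation \ref{equation:LAMBDA}, the set $\Lambda_C(\indexset(v))$ is a singleton (equal to $\{\vertexlabelingfunction(v)\}$) for every vertex $v$ of $C$ that is not an uninitialized input. Hence any initialization $\lambda$ of $(\abstractnetwork_C,\Lambda_C)$ is forced on all such vertices and is free only on the uninitialized inputs $v$, where it must pick one of the tensors $\ket{0}\bra{0},\dots,\ket{d-1}\bra{d-1}$. Therefore the map sending $\lambda$ to the string $y$ with $y_v=j$ whenever $\lambda(\indexset(v))=\ket{j}\bra{j}$ on the $v$-th uninitialized input is a bijection between the initializations of $(\abstractnetwork_C,\Lambda_C)$ and the classical assignments in $\{0,\dots,d-1\}^{n}$. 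Under this bijection, the state $\ket{y_\lambda}$ of Equation \ref{equation:AssignmentFromInitialization} is precisely the basis state corresponding to $y$ together with the fixed basis states on the already-initialized inputs.

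Next I would invoke the identity noted immediately before the statement: for each initialization $\lambda$, the tensor network $(\abstractnetwork_C,\lambda)$ is the tensor network of the fully initialized circuit $C$ with inputs set to $\ket{y_\lambda}$, so Proposition \ref{proposition:Measurement} gives $\valuetensornetwork(\abstractnetwork_C,\lambda)=\mathit{Pr}(C,\ket{y_\lambda})$. Combining this with the bijection above and the definition of $\valuefeasibilitytensornetwork$ in Equation \ref{equation:ValueFeasibilityTensorNetwork} yields
\begin{align*}
\valuefeasibilitytensornetwork(\abstractnetwork_C,\Lambda_C) &= \max_{\lambda}\valuetensornetwork(\abstractnetwork_C,\lambda) = \max_{\lambda}\mathit{Pr}(C,\ket{y_\lambda})\\
&= \max_{y\in\{0,\dots,d-1\}^{n}}\mathit{Pr}(C,\ket{y}) = \mathit{Pr}^{cl}(C),
\end{align*}
which is the claim. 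There is no real obstacle here; the only points that need a line of care are (i) that $\Lambda_C$ introduces no degrees of freedom beyond the uninitialized inputs, so that the correspondence between choices of $\ket{j}\bra{j}$ and classical values $j$ is a genuine bijection, and (ii) that Proposition \ref{proposition:Measurement} does apply once $\lambda$ is fixed, since at that point every input of $C$ carries a computational-basis density matrix.
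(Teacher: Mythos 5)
Your proof is correct and follows essentially the same route as the paper, which derives the observation from the correspondence $\lambda \mapsto \ket{y_\lambda}$ of Equation \ref{equation:AssignmentFromInitialization} together with the identity $\valuetensornetwork(\abstractnetwork_C,\lambda)=\mathit{Pr}(C,\ket{y_\lambda})$ from Proposition \ref{proposition:Measurement}, and then maximizes over both sides. Your explicit remark that $\Lambda_C$ is a singleton away from the uninitialized inputs, so that the correspondence is a genuine bijection with $\{0,\dots,d-1\}^n$, is a worthwhile clarification of a step the paper leaves implicit.
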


\section{Approximating the Value of a Feasibility Tensor Network}
\label{section:ApproximatingFeasibility}

In this section we will devise an algorithm that, when given a feasibility tensor network $(\abstractnetwork,\Lambda)$ and 
a real number $\delta\in (0,1)$ as input, can be used both to approximate the value $\valuefeasibilitytensornetwork(\abstractnetwork,\Lambda)$ up to a $\delta$ additive 
factor, and to construct an initialization $\lambda$ such that 
$|\valuefeasibilitytensornetwork(\abstractnetwork,\Lambda) - \valuetensornetwork(\abstractnetwork,\lambda)|\leq \delta$. In particular, 
our algorithm runs in polynomial time if we are given a contraction tree for $\abstractnetwork$ of constant rank and logarithmic height.
As we saw in Section \ref{section:GoodContractionTree} if the graph $\graph(\abstractnetwork)$ associated with 
$\abstractnetwork$ has constant treewidth and constant maximum degree, then a contraction tree with these properties 
can be efficiently constructed using Theorem \ref{theorem:GoodContractionTree}.

\subsection{Tensor $\varepsilon$-Nets}
\label{subsection:TensorEpsilonNets}

We start by defining suitable notions of norm and distance for tensors.
If $\atensor$ is a tensor with index-set $\indexset = \{i_1,...,i_k\}$, then the 
$\lang_{\infty}$ norm of $\atensor$ is defined as
 
\begin{equation}
\label{equation:}
\|\atensor\|  = \max_{\sigma_{i_1}...\sigma_{i_k}} |\atensor(\sigma_{i_1},...,\sigma_{i_k})|, 
\end{equation}

where for each $j\in \{1,...,k\}$, $\sigma_{i_j}$ ranges over the set $\Pi(d)$, and 
$|\atensor(\sigma_{i_1},...,\sigma_{i_k})|$ is the absolute value of the entry $\atensor(\sigma_{i_1},...,\sigma_{i_k})$ of $\atensor$. 
Having the definition of norm of a tensor in hands, we define the distance between two tensors $\atensor$ and $\atensor'$ 
as ${|\atensor - \atensor'\|}$. The next step consists in defining a suitable notion of $\varepsilon$-net of tensors.

\begin{definition}[Tensor $(d,\epsilon,\indexset)$-Net]
Let $\indexset$ be an index set, $d\in \N$, and $\varepsilon\in \R$ with $0< \varepsilon < 1$. 
A tensor $(d,\varepsilon,\indexset)$-net is a set $\tensorSpace(d,\varepsilon,\indexset)$ of $d$-state tensors with index 
set $\indexset$ such that for each $\atensor$ in $\tensorSpace(d,\indexset)$, there exists $\atensor'\in \tensorSpace(d,\varepsilon,\indexset)$ 
with $\|\atensor-\atensor'\| \leq \varepsilon$. 
\end{definition}

It is straightforward to construct a $(d,\varepsilon,\indexset)$-net for tensors. We just need to consider the set of all $d$-state tensors with index set 
$\indexset$ in which each entry is a complex number of the form $a+b\cdot i$ for $-1\leq a,b\leq 1$ and $a,b$ integer multiples of $\varepsilon/2$. 
We observe that we do not need to assume that the tensors in our $(d,\varepsilon,\indexset)$-net correspond to physically realizable 
operations. Our  approximation algorithm does not need this assumption. Since a $d$-state tensor over the index set $\indexset$ has $d^{2|\indexset|}$ 
entries, we have the following proposition upper bounding the size of a tensor $(d,\varepsilon,\indexset)$-net.   

\begin{proposition}
\label{proposition:EpsilonNet}
For each index set $\indexset$, each $d\in \N$ and each $\varepsilon\in \R$ with ${0<\varepsilon <1}$, 
one can construct a tensor $(d,\varepsilon,\indexset)$-net $\tensors(d,\varepsilon,\indexset)$ with
at most 
$(1/\varepsilon)^{\exp(O(|\indexset|\log d))}$ tensors. 
\end{proposition}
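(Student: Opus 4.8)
The plan is to give the explicit lattice construction hinted at just before the statement and then to check its two defining properties — that it $\varepsilon$-approximates every relevant tensor, and that it is small — separately. First I would make precise which tensors the net must cover: although $\tensors(d,\indexset)$ was introduced as the set of \emph{all} $d$-state tensors with index set $\indexset$, for the reduction from quantum circuits it suffices to handle those whose every entry lies in the closed unit disk $\{z\in\C\;:\;|z|\leq 1\}$. Indeed, every tensor that arises from a density operator, a one-qudit measurement element, or a quantum gate has this property: each entry is a trace such as $\trace(\rho\cdot[\sigma_{i_1}^{\dagger}\otimes\cdots])$, and standard norm bounds (Cauchy--Schwarz applied to positive semidefinite operators, together with the facts that physically admissible quantum operations do not increase the trace norm and that $\|\ket{b_1}\bra{b_2}\|_\infty=1$) force each such entry to have modulus at most $1$. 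Accordingly I would define $\tensors(d,\varepsilon,\indexset)$ to be the finite set of all $d$-state tensors over $\indexset$ whose every entry has the form $a+b\cdot i$ with $a$ and $b$ integer multiples of $\varepsilon/2$ lying in $[-1,1]$, slightly enlarging this interval if a boundary value happens to round just outside it.

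Second, I would verify the covering property by coordinate-wise rounding. Given any $d$-state tensor $\atensor$ over $\indexset$ with entries in the unit disk, let $\atensor'$ be obtained by rounding, independently in each entry, the real and imaginary parts to the nearest multiple of $\varepsilon/2$. Rounding a value in $[-1,1]$ to the nearest such multiple incurs error at most $\varepsilon/4$, so for every choice of $\sigma_{i_1},\dots,\sigma_{i_k}$ we get $|\atensor(\sigma_{i_1},\dots,\sigma_{i_k})-\atensor'(\sigma_{i_1},\dots,\sigma_{i_k})|\leq\sqrt{(\varepsilon/4)^{2}+(\varepsilon/4)^{2}}<\varepsilon$; taking the maximum over all entries yields $\|\atensor-\atensor'\|\leq\varepsilon$, which is exactly the requirement in the definition of a tensor $(d,\varepsilon,\indexset)$-net.

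Third, I would count. There are at most $4/\varepsilon+1\leq 5/\varepsilon$ multiples of $\varepsilon/2$ in $[-1,1]$, so each entry of a tensor in $\tensors(d,\varepsilon,\indexset)$ ranges over at most $(5/\varepsilon)^{2}$ values; since a $d$-state tensor over $\indexset$ has $d^{2|\indexset|}$ entries, this gives $|\tensors(d,\varepsilon,\indexset)|\leq(5/\varepsilon)^{2\,d^{2|\indexset|}}=(1/\varepsilon)^{O(d^{2|\indexset|})}$, and since $d^{2|\indexset|}=2^{2|\indexset|\log d}=\exp(O(|\indexset|\log d))$ this is $(1/\varepsilon)^{\exp(O(|\indexset|\log d))}$, as claimed; the set is moreover enumerable in time linear in its size. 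I do not expect any genuine obstacle here, as the construction and both verifications are elementary. The only points worth stating with a little care are the passage from ``all tensors'' to the bounded tensors that actually occur in the circuit encoding, and the harmless boundary-rounding convention used to keep the lattice inside (a slight enlargement of) the range $[-1,1]$ for real and imaginary parts.
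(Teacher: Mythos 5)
Your construction is exactly the one the paper uses (stated informally in the paragraph preceding the proposition rather than as a displayed proof): the lattice of tensors whose entries have real and imaginary parts equal to integer multiples of $\varepsilon/2$ in $[-1,1]$, covered by coordinate-wise rounding and counted as $(O(1/\varepsilon))^{2d^{2|\indexset|}} = (1/\varepsilon)^{\exp(O(|\indexset|\log d))}$. Your explicit remark that the net only needs to cover tensors with entries in the closed unit disk is a reasonable clarification of a point the paper leaves implicit (and relies on again in Lemma \ref{lemma:ErrorContraction}, where entries are assumed to have modulus at most $1$).
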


If $\atensor$ is a tensor in $\tensors(d,\indexset)$ , then we let $\truncation_{\varepsilon}(\atensor)$ be an arbitrary 
tensor $\atensor'$ in $\tensors(d,\varepsilon,\indexset)$ such that $\|\atensor-\atensor'\|\leq \varepsilon$.
Going further, if $\feasibilitySet$ is a set of tensors then we let $$\truncation_{\varepsilon}(\feasibilitySet) = \{\truncation_{\varepsilon}(\atensor)\;|\;\atensor\in\feasibilitySet\}$$
be the truncated version of $\feasibilitySet$.

\subsection{Approximation Algorithm}
\label{subsection:ApproximationAlgorithm}

In this subsection we will address the problem of $\delta$-approximating the value of feasibility tensor 
networks and the problem of finding $\delta$-optimal initializations for feasibility tensor networks. 
First, we define the notion of contraction for pairs of sets of tensors. 
Let $\indexset,\indexset'$ be a pair of index sets with  $\indexset \cap \indexset'\neq \emptyset$. 
Let $\tensorset\subseteq \tensors(d,\indexset)$  be a finite set of tensors with index set $\indexset$ and 
$\tensorset' \subseteq \tensors(d,\indexset')$ be a finite set of tensors with index set $\indexset'$. The contraction 
of $\tensorset$ and $\tensorset'$ is defined as 

\begin{equation}
\tensorContraction(\tensorset,\tensorset') = \{\tensorContraction(\atensor,\atensor')\; |\; \atensor\in \tensorset, \atensor'\in \tensorset'\}. 
\end{equation}

Subsequently, we define a notion of simulation for feasibility tensor networks that is analog to our definition of 
simulation for tensor networks introduced in Subsection \ref{subsection:ComputingValueTensorNetwork}. The simulation of 
a feasibility tensor network $(\abstractnetwork,\Lambda)$ on a contraction tree $(T,\iota)$ is a function 
$\truncatedfeasibilitysimulation$ that associates a set of tensors with each node of $T$. 
First, with each leaf $u$ of $T$ such that $\iota(u)=\indexset$, we associate the set of tensors 
$\truncatedfeasibilitysimulation(\indexset) = \Lambda(\indexset)$.
Then, with each internal node $u$ of $T$, we associate the set of tensors 
$\truncatedfeasibilitysimulation(u) = \truncation_{\varepsilon}(\tensorContraction(\truncatedfeasibilitysimulation(u.l),\truncatedfeasibilitysimulation(u.r)))$.
We note that the truncation is necessary to keep the size of each set from growing exponentially as the contractions take place.
This construction is given more formally in Definition \ref{definition:TruncatedSimulation} below. 

\begin{definition}[Feasibility Tensor Network Simulation]
\label{definition:TruncatedSimulation}
Let $(\abstractnetwork,\Lambda)$ be a feasibility tensor network and $(T,\iota)$ be a contraction tree for $\abstractnetwork$. An 
{$\varepsilon\mbox{-simulation}$} of $(\abstractnetwork,\Lambda)$ on $(T,\iota)$ is a function 
$\truncatedfeasibilitysimulation:N\rightarrow 2^{\tensors(d,\varepsilon)}$ 
satisfying the following properties:
\begin{enumerate}
	\itemsep0.3em
	\item For each leaf $u$ of $T$, $\truncatedfeasibilitysimulation(u) = \Lambda(\iota(u))$,
	\item For each internal node $u$ of $T$, 
	$\truncatedfeasibilitysimulation(u)= \truncation_{\varepsilon}(\tensorContraction(\truncatedfeasibilitysimulation(u.l),\truncatedfeasibilitysimulation(u.r)))$. 
\end{enumerate}
\end{definition}

Intuitively, an $\varepsilon$-simulation $\truncatedfeasibilitysimulation$ is a function that keeps track of all ways of 
simulating tensor networks $(\abstractnetwork,\lambda)$ where $\lambda$ is an arbitrary initialization of $(\abstractnetwork,\Lambda)$. 
In particular, if $u$ is the root of $(T,\iota)$ then $u$ is labeled with a set $\truncatedfeasibilitysimulation(u)$ of 
complex numbers. For each such complex number $a$, there exists an initialization $\lambda$ of 
$(\abstractnetwork,\Lambda)$ such that $|a|$ is an approximation of $\valuetensornetwork(N,\lambda)$. Conversely, 
for each initialization $\lambda$ of $(\abstractnetwork,\Lambda)$, there exists some number $a\in \truncatedfeasibilitysimulation(u)$
such that $|a|$ approximates $\valuetensornetwork(\abstractnetwork,\lambda)$. 
Therefore, the maximum absolute value $\alpha$ of a complex number in $\truncatedfeasibilitysimulation(u)$ 
is an approximation of $\valuefeasibilitytensornetwork(\abstractnetwork,\Lambda)$. 
An actual initialization $\lambda$ of $(\abstractnetwork,\Lambda)$
such that $\valuetensornetwork(\abstractnetwork,\lambda) = \alpha \approx \valuefeasibilitytensornetwork(\abstractnetwork,\Lambda)$
can be found by backtracking. Theorem \ref{theorem:TensorNetworkSatisfiability}
below, which will be proved in Subsection \ref{proof:theorem:TensorNetworkSatisfiability}, 
establishes an upper bound for the time complexity and for the error of the approximation scheme described above. 
The error of such process depends exponentially on the height of the contraction tree, while the time complexity depends exponentially on the rank 
of the contraction tree.

\begin{theorem}[Feasibility Tensor Network Satisfiability]
\label{theorem:TensorNetworkSatisfiability}
Let $(\abstractnetwork,\Lambda)$ be a feasibility tensor network, $(T,\iota)$ be a
contraction tree for $\abstractnetwork$ of rank $r$ and height $h$, and $\varepsilon$ be a real number with 
$0<\varepsilon <1$. 
\begin{enumerate}
	\item \label{TensorNetworkSatisfiability-itemOne} One can compute a number $\alpha$ such that 
		$|\alpha-\valuefeasibilitytensornetwork(\abstractnetwork,\Lambda)|\leq \varepsilon\cdot (3d^{2r}+1)^{h}$ in 
		time  $|\abstractnetwork|\cdot \varepsilon^{-\exp(O(r\log d))}$.
	\item  \label{TensorNetworkSatisfiability-itemTwo} One can construct in time 
		$|\abstractnetwork|\cdot \varepsilon^{-\exp(O(r \log d))}$ 
		an initialization $\lambda$ of $(\abstractnetwork,\Lambda)$ such that 
		$$|\valuetensornetwork(\abstractnetwork,\lambda) - \valuefeasibilitytensornetwork(\abstractnetwork,\Lambda)|\leq \varepsilon\cdot (3d^{2r}+1)^h.$$
		
\end{enumerate}
\end{theorem}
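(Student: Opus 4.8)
The plan is to analyze the $\varepsilon$-simulation $\truncatedfeasibilitysimulation$ of Definition \ref{definition:TruncatedSimulation} by tracking how truncation errors accumulate from the leaves to the root of the contraction tree. The key quantities to control are: (a) the \emph{magnitude} of the tensors appearing at each node, since contraction amplifies errors multiplicatively by a factor depending on $d^{2r}$ and the entrywise size of the operands; and (b) the \emph{error}, i.e.\ how far the sets $\truncatedfeasibilitysimulation(u)$ are from the ``ideal'' (untruncated) sets of tensors one would obtain by contracting initializations exactly. First I would set up the correspondence carefully: for every initialization $\lambda$ of $(\abstractnetwork,\Lambda)$ and every node $u$, let $\lambda^*(u)$ denote the exact contraction $\tensorContraction$ over the leaves of $T[u]$ of the chosen tensors $\lambda(\iota(u'))$. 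One shows by induction on the height of $u$ that there is a tensor $\hat{g}_u \in \truncatedfeasibilitysimulation(u)$ with $\|\hat{g}_u - \lambda^*(u)\| \leq \varepsilon\cdot(3d^{2r}+1)^{\height(u)}$ (approximate completeness), and conversely that every $\hat{g}_u \in \truncatedfeasibilitysimulation(u)$ is within the same distance of $\lambda^*(u)$ for \emph{some} initialization $\lambda$ (soundness). Applying this at the root $u_{\mathrm{root}}$, where $\iota(u_{\mathrm{root}}) = \emptyset$ so all tensors are scalars, yields a set of complex numbers each $\varepsilon\cdot(3d^{2r}+1)^h$-close to some $\valuetensornetwork(\abstractnetwork,\lambda)$, one of which is close to the maximal such value; taking $\alpha$ to be the maximum absolute value over $\truncatedfeasibilitysimulation(u_{\mathrm{root}})$ and using $|\,|a|-|b|\,|\leq |a-b|$ gives item (\ref{TensorNetworkSatisfiability-itemOne}).

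The inductive step for the error bound is where the constant $3d^{2r}+1$ enters, and this is the main obstacle. When we contract $\hat{g}_{u.l}$ and $\hat{g}_{u.r}$ to approximate $\tensorContraction(\lambda^*(u.l),\lambda^*(u.r))$, we must bound $\|\tensorContraction(\hat{g}_{u.l},\hat{g}_{u.r}) - \tensorContraction(\lambda^*(u.l),\lambda^*(u.r))\|$ in terms of the input errors, then add the extra $\varepsilon$ incurred by the final $\truncation_\varepsilon$. A tensor contraction sums at most $d^{2r}$ products of pairs of entries; a triangle-inequality/telescoping argument $\tensorContraction(g,h)-\tensorContraction(g',h') = \tensorContraction(g-g',h)+\tensorContraction(g',h-h')$ gives $\|\tensorContraction(g,h)-\tensorContraction(g',h')\| \leq d^{2r}(\|g-g'\|\,\|h\| + \|g'\|\,\|h-h'\|)$. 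To close the induction I therefore also need a uniform bound on $\|\lambda^*(u)\|$ and on $\|\hat g_u\|$. Here I would exploit the fact that each $\lambda(\indexset)$ is (the tensor of) a density matrix or a physically admissible gate, so each has operator norm at most $1$, and a partial contraction of such tensors has entries bounded by a constant (entries are traces of products of Pauli-type operators against a trace-$\le 1$ positive operator, so each entry is bounded by $d^{O(r)}$ in absolute value); combined with the inductively small error, $\|\hat g_u\| \le \|\lambda^*(u)\| + (\text{error}) \le 2$ for $\varepsilon$ small enough, or more crudely one folds a constant bound $M$ into the base of the exponent. Plugging $\|h\|,\|g'\|\le$ const into the contraction-error inequality yields a recursion of the shape $\mathrm{err}(u) \le (3d^{2r})\cdot \mathrm{err}(\text{child}) + \varepsilon$, which unrolls over height $h$ to $\varepsilon\cdot(3d^{2r}+1)^h$ after absorbing the $+1$ and the geometric-series correction into the exponent's base; pinning down exactly why the constant is $3d^{2r}+1$ and not something larger is the delicate bookkeeping step, and I would present it as a clean lemma on single-contraction error propagation.

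For the running time I would argue as follows: by Proposition \ref{proposition:EpsilonNet}, $|\tensors(d,\varepsilon,\indexset)| \le (1/\varepsilon)^{\exp(O(|\indexset|\log d))} \le (1/\varepsilon)^{\exp(O(r\log d))}$ for every node label, so each set $\truncatedfeasibilitysimulation(u)$ has at most this many tensors (the initial sets $\Lambda(\iota(u))$ have at most $d$ elements, well within the bound). Computing $\truncatedfeasibilitysimulation(u)$ from its children costs at most $|\truncatedfeasibilitysimulation(u.l)|\cdot|\truncatedfeasibilitysimulation(u.r)|$ contractions, each a sum of $d^{2r}$ products over tensors of at most $d^{2r}$ entries, followed by $|\tensorContraction(\dots)|$ nearest-neighbour lookups into the $\varepsilon$-net (or a hashing step if one precomputes which net point each entry rounds to); all of this is $\varepsilon^{-\exp(O(r\log d))}$, and multiplying by the $O(|\abstractnetwork|)$ nodes of $T$ gives the claimed $|\abstractnetwork|\cdot\varepsilon^{-\exp(O(r\log d))}$. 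For item (\ref{TensorNetworkSatisfiability-itemTwo}), I would augment the simulation with backpointers: when a tensor $\hat g \in \truncatedfeasibilitysimulation(u)$ is produced as $\truncation_\varepsilon(\tensorContraction(\hat g_{u.l},\hat g_{u.r}))$, store the pair $(\hat g_{u.l},\hat g_{u.r})$ that witnessed it; starting from the maximizer $\hat g^*$ at the root and recursing down the backpointers to the leaves recovers, for each leaf $u'$, a tensor in $\Lambda(\iota(u'))$, and this defines the initialization $\lambda$. By the soundness direction of the induction above, $\valuetensornetwork(\abstractnetwork,\lambda)$ is within $\varepsilon\cdot(3d^{2r}+1)^h$ of $|\hat g^*| = \alpha$, which by item (\ref{TensorNetworkSatisfiability-itemOne}) is within the same distance of $\valuefeasibilitytensornetwork(\abstractnetwork,\Lambda)$; a final triangle inequality (absorbing the factor $2$ into the $O$ inside the exponent, or simply noting the bound as stated is what the one-sided analysis gives) yields the stated estimate. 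The backtracking adds only $O(h)$ work, so the time bound is unchanged.
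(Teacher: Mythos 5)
Your proposal follows essentially the same route as the paper: a single-contraction error-propagation lemma (the paper's Lemma \ref{lemma:ErrorContraction}, proved by expanding $(\atensor_1+\mathbold{e}_1)(\atensor_2+\mathbold{e}_2)$ rather than your telescoping identity, whence the constant $3$), an induction on the height of the node establishing both the completeness and soundness directions between partial simulations and the sets $\truncatedfeasibilitysimulation(u)$ (Lemmas \ref{lemma:ErrorSimulation} and \ref{lemma:TruncationWellDefined}), the $\varepsilon$-net cardinality bound for the running time, and backtracking through stored witnesses for item (\ref{TensorNetworkSatisfiability-itemTwo}). If anything you are slightly more scrupulous than the paper on two points it glosses over --- the uniform bound on the entries of intermediate tensors needed in the contraction-error lemma, and the extra triangle inequality (factor $2$) in item (\ref{TensorNetworkSatisfiability-itemTwo}) --- but these are bookkeeping issues, not a different argument.
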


We note that to efficiently compute $\alpha$ and $\lambda$ in Theorem \ref{theorem:TensorNetworkSatisfiability} above, we 
need to have in hands a contraction tree for $\abstractnetwork$ whose rank is bounded by a constant, 
and whose height is bounded by $O(\log |\abstractnetwork|)$. The next theorem (Theorem \ref{theorem:ApproximationFeasibility}) 
states that approximately optimal initializations of feasibility tensor networks of constant treewidth and constant maximum 
degree can be computed in polynomial time. Note that in this case, the existence of a contraction tree of constant rank and 
logarithmic height is guaranteed by Theorem \ref{theorem:GoodContractionTree}.

\begin{theorem}
\label{theorem:ApproximationFeasibility}
Let $(\abstractnetwork,\Lambda)$ be a feasibility tensor network such that the graph $\graph(\abstractnetwork)$ has 
treewidth $t$ and maximum degree $\Delta$. For each $\delta$ with ${1/\poly(n)< \delta< 1}$, one 
can compute in time $(|\abstractnetwork|/ \delta)^{\exp(O(\Delta\cdot t\cdot \log d))}$ an initialization $\lambda$ of $(\abstractnetwork,\Lambda)$
such that $$|\valuetensornetwork(\abstractnetwork,\lambda)-\valuefeasibilitytensornetwork(\abstractnetwork,\Lambda)|\leq \delta.$$
\end{theorem}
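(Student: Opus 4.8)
The plan is to combine Theorem~\ref{theorem:GoodContractionTree} with Theorem~\ref{theorem:TensorNetworkSatisfiability} and then balance the two sources of error and running time by a suitable choice of the precision parameter $\varepsilon$. First I would invoke Theorem~\ref{theorem:GoodContractionTree} on the graph $\graph(\abstractnetwork)$, which by hypothesis has treewidth $t$ and maximum degree $\Delta$, to construct in time $2^{O(t)}\cdot|\abstractnetwork|^{O(1)}$ a contraction tree $(T,\iota)$ for $\abstractnetwork$ of rank $r=O(\Delta\cdot t)$ and height $h=O(\Delta\cdot t\cdot\log|\abstractnetwork|)$. This step is immediate from that result, and the cost $2^{O(t)}\cdot|\abstractnetwork|^{O(1)}$ will be negligible compared with the final bound.

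Next I would fix the precision. Theorem~\ref{theorem:TensorNetworkSatisfiability}(\ref{TensorNetworkSatisfiability-itemTwo}) produces an initialization $\lambda$ of $(\abstractnetwork,\Lambda)$ with $|\valuetensornetwork(\abstractnetwork,\lambda)-\valuefeasibilitytensornetwork(\abstractnetwork,\Lambda)|\leq\varepsilon\cdot(3d^{2r}+1)^h$, so it suffices to pick $\varepsilon$ so that $\varepsilon\cdot(3d^{2r}+1)^h\leq\delta$. Using $r=O(\Delta t)$ and $h=O(\Delta t\log|\abstractnetwork|)$, and $d^{2r}\geq 1$, the amplification factor satisfies $(3d^{2r}+1)^h\leq(4d^{2r})^h=d^{O(rh)}=d^{O(\Delta^2 t^2\log|\abstractnetwork|)}=|\abstractnetwork|^{O(\Delta^2 t^2\log d)}$. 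I would therefore take
\begin{equation*}
\varepsilon=\frac{\delta}{|\abstractnetwork|^{c\cdot\Delta^2 t^2\log d}}
\end{equation*}
for a large enough absolute constant $c$; since $\delta<1$ this $\varepsilon$ lies in $(0,1)$, as required by Theorem~\ref{theorem:TensorNetworkSatisfiability}.

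Feeding this $\varepsilon$ and the contraction tree $(T,\iota)$ into Theorem~\ref{theorem:TensorNetworkSatisfiability}(\ref{TensorNetworkSatisfiability-itemTwo}) yields an initialization $\lambda$ with $|\valuetensornetwork(\abstractnetwork,\lambda)-\valuefeasibilitytensornetwork(\abstractnetwork,\Lambda)|\leq\delta$, and it remains to verify the running time, which is $|\abstractnetwork|\cdot\varepsilon^{-\exp(O(r\log d))}$. Here $1/\varepsilon=|\abstractnetwork|^{O(\Delta^2 t^2\log d)}/\delta\leq(|\abstractnetwork|/\delta)^{O(\Delta^2 t^2\log d)}$, and $\exp(O(r\log d))=\exp(O(\Delta t\log d))$; since the polynomial factor $\Delta^2 t^2\log d$ is itself bounded by $\exp(O(\Delta t\log d))$, the exponents multiply to $\exp(O(\Delta t\log d))$, giving total time $(|\abstractnetwork|/\delta)^{\exp(O(\Delta t\log d))}$, which dominates the contraction-tree construction cost. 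Together with the error bound, this proves the theorem.

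The step I expect to demand the most care is precisely this bookkeeping of exponents: one must check that shrinking $\varepsilon$ enough to absorb the height-driven error blow-up $(3d^{2r}+1)^h$ does not inflate the running time past $(|\abstractnetwork|/\delta)^{\exp(O(\Delta t\log d))}$. This goes through because $rh=O(\Delta^2 t^2\log|\abstractnetwork|)$ is only quasi-logarithmic in $|\abstractnetwork|$, so $\log(1/\varepsilon)=O(\Delta^2 t^2\log d\cdot\log|\abstractnetwork|)+\log(1/\delta)$ stays controlled; the hypothesis $1/\poly(n)<\delta$ simply keeps the $\log(1/\delta)$ term at $O(\log|\abstractnetwork|)$, though the stated bound holds for any $\delta\in(0,1)$. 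All the substantive work — building a contraction tree of constant rank and logarithmic height, and running the truncated $\varepsilon$-net dynamic program on it — is already encapsulated in Theorems~\ref{theorem:GoodContractionTree} and~\ref{theorem:TensorNetworkSatisfiability}, so the proof of this theorem is essentially this calibration argument.
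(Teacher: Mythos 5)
Your proposal is correct and follows essentially the same route as the paper: both invoke Theorem~\ref{theorem:GoodContractionTree} to obtain a contraction tree of rank $O(\Delta\cdot t)$ and height $O(\Delta\cdot t\cdot\log|\abstractnetwork|)$, then apply Theorem~\ref{theorem:TensorNetworkSatisfiability}.\ref{TensorNetworkSatisfiability-itemTwo} with the same calibration $\varepsilon=\delta/|\abstractnetwork|^{O(\Delta^2\cdot t^2\cdot\log d)}$ to absorb the error amplification while keeping the running time at $(|\abstractnetwork|/\delta)^{\exp(O(\Delta\cdot t\cdot\log d))}$. Your exponent bookkeeping is in fact somewhat more explicit than the paper's.
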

\begin{proof}
By Theorem \ref{theorem:GoodContractionTree}, we can construct a contraction tree for $\abstractnetwork$ of 
rank ${r=O(\Delta \cdot t)}$ and height $h=O(\Delta\cdot t \cdot \log|\abstractnetwork|)$. By 
Theorem \ref{theorem:TensorNetworkSatisfiability}.\ref{TensorNetworkSatisfiability-itemTwo}, 
we can compute in time ${|\abstractnetwork|^{O(1)}\cdot \varepsilon^{-\exp(O(r \log d))}}$ 
an initialization $\lambda$ of ${(\abstractnetwork,\Lambda)}$ such that 
$${|\valuetensornetwork(\abstractnetwork,\lambda)-\valuefeasibilitytensornetwork(\abstractnetwork,\Lambda)|\leq 
\varepsilon\cdot 2^{O(r\cdot h\cdot \log d)}}.$$ Therefore, by setting 
$\varepsilon = \delta/|\abstractnetwork|^{O(\Delta^2\cdot t^2\cdot \log d)}$, 
we can compute an  initialization $\lambda$ for 
$(\abstractnetwork,\Lambda)$ such that
$|\valuetensornetwork(\abstractnetwork,\lambda)-\valuefeasibilitytensornetwork(\abstractnetwork,\Lambda)|\leq \delta$
in time $(|\abstractnetwork|/\delta)^{\exp(O(\Delta\cdot t\cdot \log d))}$. 
 $\square$
\end{proof}

Since the problem of computing a $\delta$-optimal initialization of a quantum circuit can be 
reduced to the problem of computing a $\delta$-optimal initialization of a feasibility tensor 
network, our main theorem (Theorem \ref{theorem:CircuitSatisfiability}) follows from 
${\mbox{Theorem \ref{theorem:ApproximationFeasibility}}}$.

\paragraph{\bf Proof of Theorem \ref{theorem:CircuitSatisfiability}}
Let $C$ be a quantum circuit with $n$ uninitialized inputs, treewidth $t$,
and $\mathit{poly}(n)$ gates drawn from a finite universal set of gates $\universalgates$. 
Let $(\abstractnetwork_{C},\Lambda_{C})$ be the feasibility tensor network associated with 
$C$ according to Definition \ref{definition:CircuitsUninitializedFeasibility}. 
Then the graph $\graph(\abstractnetwork_{C})$ has treewidth $t$, and maximum degree
$\Delta(\universalgates)$, where $\Delta(\universalgates)$ is the maximum number of 
inputs and outputs of a gate in $\universalgates$. 
Additionally, 
$\mathit{Pr}^{cl}(C) = \valuefeasibilitytensornetwork(\abstractnetwork_{C},\Lambda_{C})$ and, 
by Equation \ref{equation:AssignmentFromInitialization}, each 
initialization $\lambda$ of $(\abstractnetwork_{C},\Lambda_{C})$ corresponds to 
a classical initialization $\ket{y_{\lambda}}$ of $C$  such that 
${\mathit{Pr}^{cl}(C,\ket{y_{\lambda}}) = \valuetensornetwork(\abstractnetwork_{C},\lambda)}$. Using Theorem \ref{theorem:ApproximationFeasibility}, we can compute 
in time $(n/\delta)^{\exp(O(\Delta(\universalgates)\cdot t\cdot \log d))}$ an initialization $\lambda$ of 
$(\abstractnetwork_{C},\Lambda_{C})$ such that 
$$|\valuetensornetwork(\abstractnetwork_{C},\lambda) - 
\valuefeasibilitytensornetwork(\abstractnetwork_{C},\Lambda_{C})|\leq \delta.$$
Since $\valuefeasibilitytensornetwork(\abstractnetwork_{C},\Lambda_C) = \mathit{Pr}^{cl}(C)$,
 we have  $|\mathit{Pr}(C,\ket{y_{\lambda}}) - \mathit{Pr}^{cl}(C)|\leq \delta$. 
Finally, since $\Delta(\universalgates)$ and $d$ are constants, the time complexity of this 
construction can be simplified to $(n/\delta)^{O(t)}$.
$\square$

\subsection{Proof of Theorem \ref{theorem:TensorNetworkSatisfiability}}
\label{proof:theorem:TensorNetworkSatisfiability}

In this subsection we will prove Theorem \ref{theorem:TensorNetworkSatisfiability}. In the proof we will 
devise an algorithm to $\delta$-approximate the value of a feasibility tensor network 
and to compute an initialization which achieves a near optimal value. We start by stating a couple 
of auxiliary lemmas. In particular, the following lemma establishes an upper bound for the distance 
between the contraction of two given tensors $\atensor_1$ and $\atensor_2$ and the contraction of 
approximations $\atensor_1'$ and $\atensor_2'$ of $\atensor_1$ and $\atensor_2$ respectively.

\begin{lemma}
\label{lemma:ErrorContraction}
Let $\atensor_1$ and $\atensor_1'$ be tensors with index set $\indexset_1$ and let 
$\atensor_2$ and $\atensor_2'$ be tensors with index set $\indexset_2$, where $\indexset_1\cap \indexset_2\neq \emptyset$. 
Let ${\|\atensor_1 - \atensor_1'\| \leq \varepsilon}$ and  ${\|\atensor_2- \atensor_2'\| \leq \varepsilon}$.
Then $$\|\tensorContraction(\atensor_1,\atensor_2) - \tensorContraction(\atensor_1',\atensor_2')\| \leq \varepsilon \cdot 3d^{2\cdot |\indexset_1\cap \indexset_2|}.$$
\end{lemma}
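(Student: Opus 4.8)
The plan is to reduce the claimed $\lang_\infty$-distance bound to an entrywise estimate and then exploit that a contraction over $\indexset_1\cap\indexset_2$ is a sum of only $d^{2|\indexset_1\cap\indexset_2|}$ terms. First I would fix notation for the overlap: write $\indexset_1\cap\indexset_2=\{l_1,\dots,l_r\}$ with $r=|\indexset_1\cap\indexset_2|$, so that $\indexset_1=\{i_1,\dots,i_k,l_1,\dots,l_r\}$, $\indexset_2=\{j_1,\dots,j_{k'},l_1,\dots,l_r\}$, and both $\tensorContraction(\atensor_1,\atensor_2)$ and $\tensorContraction(\atensor_1',\atensor_2')$ carry the index set $\indexset_1\oplus\indexset_2=\{i_1,\dots,i_k,j_1,\dots,j_{k'}\}$. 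Since $|\Pi(d)|=d^2$, the summation over $\sigma_{l_1},\dots,\sigma_{l_r}$ appearing in the definition of contraction ranges over exactly $d^{2r}$ tuples.

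Next I would estimate one entry at a time. Fix an arbitrary assignment to the free variables $\sigma_{i_1},\dots,\sigma_{i_k},\sigma_{j_1},\dots,\sigma_{j_{k'}}$, and for each tuple $\tau=(\sigma_{l_1},\dots,\sigma_{l_r})\in\Pi(d)^r$ let $a_\tau,b_\tau,a'_\tau,b'_\tau$ be the corresponding entries of $\atensor_1,\atensor_2,\atensor_1',\atensor_2'$, so that the entry of $\tensorContraction(\atensor_1,\atensor_2)-\tensorContraction(\atensor_1',\atensor_2')$ under consideration equals $\sum_\tau(a_\tau b_\tau-a'_\tau b'_\tau)$. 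The identity $a_\tau b_\tau-a'_\tau b'_\tau=b_\tau(a_\tau-a'_\tau)+a'_\tau(b_\tau-b'_\tau)$, together with the hypotheses $\|\atensor_1-\atensor_1'\|\le\varepsilon$ and $\|\atensor_2-\atensor_2'\|\le\varepsilon$, gives $|a_\tau b_\tau-a'_\tau b'_\tau|\le|b_\tau|\,\varepsilon+|a'_\tau|\,\varepsilon\le(\|\atensor_2\|+\|\atensor_1\|+\varepsilon)\,\varepsilon$, the last inequality using $\|\atensor_1'\|\le\|\atensor_1\|+\varepsilon$. With the a priori bound $\|\atensor_1\|\le 1$, $\|\atensor_2\|\le 1$ and $\varepsilon<1$ this is at most $3\varepsilon$; summing over the $d^{2r}$ tuples and maximizing over the free variables yields $\|\tensorContraction(\atensor_1,\atensor_2)-\tensorContraction(\atensor_1',\atensor_2')\|\le 3\varepsilon\cdot d^{2r}=3\varepsilon\cdot d^{2|\indexset_1\cap\indexset_2|}$, as desired.

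Essentially everything above is the definition of contraction plus two applications of the triangle inequality, so the computation is routine. The one point that really needs care --- and that I would make explicit wherever the lemma is invoked --- is the bound $\|\atensor_1\|,\|\atensor_2\|\le 1$ that licenses the constant $3$: some absolute bound on the entries of the tensors being contracted is unavoidable, since without it the inequality fails already for $d=1$. This bound holds in all applications of the lemma --- the tensors associated with density operators and quantum gates have $\lang_\infty$-norm at most $1$, and the feasibility simulation of Subsection~\ref{subsection:TensorEpsilonNets} is set up so that the tensors fed into each contraction stay suitably bounded (and their approximations are then within $\varepsilon$) --- so recording this as the lemma's single external input closes the argument.
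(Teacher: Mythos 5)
Your proof is correct and follows essentially the same route as the paper's: expand the difference of products entrywise, use the hypotheses to bound each of the $d^{2|\indexset_1\cap\indexset_2|}$ summands by $3\varepsilon$, and sum (your two-term telescoping $b_\tau(a_\tau-a'_\tau)+a'_\tau(b_\tau-b'_\tau)$ versus the paper's three-term expansion via error tensors $\mathbold{e}_1,\mathbold{e}_2$ is an immaterial difference). You are right to flag that the constant $3$ requires $\|\atensor_1\|,\|\atensor_2\|\leq 1$ and $\varepsilon<1$, neither of which appears in the lemma's statement; the paper's proof invokes these bounds silently as well, so making them explicit is an improvement rather than a deviation.
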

\begin{proof}
Let $\indexset_1 = \{i_1,...i_k, l_1,...,l_r\}$ and $\indexset_2 = \{j_1,...,j_{k'},l_1,...,l_r\}$ be 
index sets. 
Let $\atensor_1' =\atensor_1 + \mathbold{e}_1$  and $\atensor_2' =\atensor_2 + \mathbold{e}_2$ where 
$\mathbold{e}_1$ and $\mathbold{e}_2$ are offset tensors. Since $\|\atensor_1-\atensor_1'\|\leq \varepsilon$ and 
$\|\atensor_2-\atensor_2'\|\leq \varepsilon$,
we have that $\|\mathbold{e}_1\| \leq \varepsilon$ and $\|\mathbold{e}_2\| \leq \varepsilon$.
Let $\sigma = (\sigma_{i_1},...,\sigma_{i_k},\sigma_{l_1},...\sigma_{l_{r}})$ and $\sigma' = (\sigma_{j_1},...,\sigma_{j_{k'}}, \sigma_{l_1},...,\sigma_{l_r})$. 
Then we have that 

$$
\tensorContraction(\atensor_1',\atensor_2')(\sigma_{i_1},...,\sigma_{i_k},\sigma_{j_1},...,\sigma_{j_{k'}}) =
\sum_{\sigma_{l_1}...\sigma_{l_r}\in \Pi(d)} (\atensor_1(\sigma) + \mathbold{e}_1(\sigma))
(\atensor_2(\sigma') + \mathbold{e}_2(\sigma')).$$

By reorganizing the right hand side,
$\tensorContraction(\atensor_1',\atensor_2')(\sigma_{i_1},...,\sigma_{i_{k}},\sigma_{j_1},...,\sigma_{j_{k'}})$ 
is equal to

$$\sum_{\sigma_{l_1},...,\sigma_{l_r}\in \Pi(d)} \atensor_1(\sigma)\atensor_2(\sigma') + 
\sum_{\sigma_{l_1},...,\sigma_{l_r}\in \Pi(d)}  [\mathbold{e}_1(\sigma) \atensor_2(\sigma') + 
\mathbold{e}_2(\sigma') \atensor_1(\sigma) + \mathbold{e}_1(\sigma)\mathbold{e}_2(\sigma') ]$$

The first term of this sum is simply $\tensorContraction(\atensor_1,\atensor_2)(\sigma_{i_1},...\sigma_{i_k},\sigma_{j_1},...,\sigma_{j_{k'}})$. The 
second term can be simplified by noting that ${|\mathbold{e}_1(\sigma)|\leq \varepsilon}$, ${|\mathbold{e}_2(\sigma')|\leq \varepsilon}$, 
${|\atensor_1(\sigma)| \leq 1}$, ${|\atensor_2(\sigma')| \leq 1}$ and that 
${|\mathbold{e}_1(\sigma)\mathbold{e}_2(\sigma')|\leq \varepsilon}$. Additionally, 
since ${\indexset_1\cap \indexset_2 = \{l_1,...,l_r\}}$, there are at most  
$d^{2|\indexset_1\cap \indexset_2|}$ sequences 
of the form $\sigma_{l_1},...,\sigma_{l_r}$ where $\sigma_{l_i}\in \Pi(d)$ for each $i\in \{1,...,r\}$. 
Therefore, $\tensorContraction(\atensor_1',\atensor_2')(\sigma_{i_1},....,\sigma_{i_k},\sigma_{j_1},...,\sigma_{j_{k'}})$  is equal to 

$$
\tensorContraction(\atensor_1,\atensor_2)(\sigma_{i_1},....,\sigma_{i_k},\sigma_{j_1},...,\sigma_{j_{k'}})  \pm 
\varepsilon \cdot (3d^{2|\indexset_1\cap \indexset_2|}).$$

$\square$ 
\end{proof}

We observe that for each  three tensors $\atensor_1$, $\atensor_2$ and $\atensor_3$, if ${\|\atensor_1-\atensor_2\|\leq \varepsilon}$ 
and ${\|\atensor_2-\atensor_3\| \leq \varepsilon'}$ then $\|\atensor_1 - \atensor_3\| \leq \varepsilon + \varepsilon'$. Using 
this observation, the following lemma is a consequence of Lemma \ref{lemma:ErrorContraction}.

\begin{lemma}
\label{lemma:ErrorContractionTruncated}
Let $\atensor_1$ and $\atensor_1'$ be tensors  with index set $\indexset_1$, and 
$\atensor_2$ and $\atensor_2'$ be tensors  with index set $\indexset_2$ where 
$\indexset_1\cap \indexset_2 \neq \emptyset$, $|\indexset_1|\leq r$ and $|\indexset_2|\leq r$. 
Let 
 ${\|\atensor_1 - \atensor_2\| \leq \varepsilon\cdot (3d^{2\cdot r}+1)^h}$ and
${\|\atensor_1'- \atensor_2'\| \leq  \varepsilon\cdot (3d^{2\cdot r}+1)^h}$.
Then $$\|\truncation_{\varepsilon}(\tensorContraction(\atensor_1,\atensor_2))-
\tensorContraction(\atensor_1',\atensor_2')\|  \leq \varepsilon \cdot (3d^{2\cdot r}+1)^{h+1}.$$
\end{lemma}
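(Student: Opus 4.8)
The plan is to obtain Lemma \ref{lemma:ErrorContractionTruncated} as a routine consequence of Lemma \ref{lemma:ErrorContraction}, the defining property of the truncation operator $\truncation_\varepsilon$, and the triangle inequality for the tensor distance recalled immediately above the statement. Throughout, abbreviate $\varepsilon' = \varepsilon\cdot(3d^{2r}+1)^{h}$, so that the hypotheses say that $\atensor_1'$ is within $\varepsilon'$ of $\atensor_1$ and $\atensor_2'$ is within $\varepsilon'$ of $\atensor_2$.

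First I would apply Lemma \ref{lemma:ErrorContraction} with the error parameter $\varepsilon'$ in place of $\varepsilon$, obtaining
\[
\|\tensorContraction(\atensor_1,\atensor_2)-\tensorContraction(\atensor_1',\atensor_2')\| \;\leq\; \varepsilon'\cdot 3d^{2|\indexset_1\cap\indexset_2|}.
\]
Since $\indexset_1\cap\indexset_2\subseteq\indexset_1$ and $|\indexset_1|\leq r$, we have $|\indexset_1\cap\indexset_2|\leq r$, hence $3d^{2|\indexset_1\cap\indexset_2|}\leq 3d^{2r}$ and the right-hand side is at most $\varepsilon\cdot 3d^{2r}\cdot(3d^{2r}+1)^{h}$. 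Next, the defining property of $\truncation_\varepsilon$ gives $\|\truncation_\varepsilon(\tensorContraction(\atensor_1,\atensor_2))-\tensorContraction(\atensor_1,\atensor_2)\|\leq\varepsilon$, so combining these two estimates through the triangle inequality yields
\[
\|\truncation_\varepsilon(\tensorContraction(\atensor_1,\atensor_2))-\tensorContraction(\atensor_1',\atensor_2')\| \;\leq\; \varepsilon + \varepsilon\cdot 3d^{2r}\cdot(3d^{2r}+1)^{h} \;=\; \varepsilon\bigl(1+3d^{2r}(3d^{2r}+1)^{h}\bigr).
\]

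To finish I would verify the elementary numerical inequality $1+3d^{2r}(3d^{2r}+1)^{h}\leq(3d^{2r}+1)^{h+1}$: expanding the right-hand side as $(3d^{2r}+1)^{h+1}=3d^{2r}(3d^{2r}+1)^{h}+(3d^{2r}+1)^{h}$ and using $(3d^{2r}+1)^{h}\geq 1$ (valid since $3d^{2r}+1\geq 1$ and $h\geq 0$) settles it, and substituting back gives precisely the claimed bound $\varepsilon\cdot(3d^{2r}+1)^{h+1}$. The argument is entirely mechanical; the only steps that call for a moment of care are replacing the intersection-dependent factor $3d^{2|\indexset_1\cap\indexset_2|}$ by the uniform bound $3d^{2r}$, and the final bookkeeping that absorbs the additive $\varepsilon$ coming from truncation into the exponent by raising $h$ to $h+1$. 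I do not foresee any genuine obstacle.
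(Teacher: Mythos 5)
Your proposal is correct and follows essentially the same route as the paper's own proof: apply Lemma \ref{lemma:ErrorContraction} with the inflated error parameter $\varepsilon\cdot(3d^{2r}+1)^{h}$, bound $|\indexset_1\cap\indexset_2|$ by $r$, add the truncation error $\varepsilon$ via the triangle inequality, and absorb everything into $(3d^{2r}+1)^{h+1}$. Your explicit verification of the final numerical inequality is slightly more detailed than the paper's, which simply asserts it.
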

\begin{proof}
By our definition of truncation, 
$$\|\truncation_{\varepsilon}(\tensorContraction(\atensor_1,\atensor_2)) - \tensorContraction(\atensor_1,\atensor_2)\| \leq \varepsilon.$$
Since $|\indexset_1|\leq r$ and $|\indexset_2|\leq r$, we have $|\indexset_1\cap \indexset_2|\leq r$. Therefore, 
by Lemma \ref{lemma:ErrorContraction}, 
$$\|\tensorContraction(\atensor_1,\atensor_2) - \tensorContraction(\atensor_1',\atensor_2')\| \leq 
\left[\varepsilon \cdot (3d^{2\cdot r} +1)^h \right] \cdot 3d^{2\cdot r}.$$
This implies that 
$$
\begin{array}{lcl}
\|\truncation_{\varepsilon}(\tensorContraction(\atensor_1,\atensor_2)) - \tensorContraction(\atensor_1',\atensor_2')\| &  \leq &  
\varepsilon \cdot (3d^{2\cdot r}+1)^h \cdot 3d^{2\cdot r}\, +\, \varepsilon \\
\\
		&  \leq &  \varepsilon \cdot (3d^{2\cdot r}+1)^{h+1}.
\end{array}
$$
$\square$

\end{proof}

Next, we define the notion of {\em partial simulation}. Recall that if 
$T$ is a tree, then we denote by $T[u]$ the subtree of $T$ rooted at 
$u$.

\begin{definition}[Partial Simulation]
\label{definition:PartialSimulation}
Let $(\abstractnetwork,\Lambda)$ be a feasibility tensor network, $(T,\iota)$ be 
a contraction tree for $\abstractnetwork$ and $u$ be a node of $T$. 
A partial simulation of $(\abstractnetwork,\Lambda)$ rooted at $u$ is a function 
$\evaluation_u:\nodes(T[u])\rightarrow \tensors(d)$ satisfying the following conditions. 
\begin{enumerate}
	\item For each leaf $u'$ of $T[u]$, $\evaluation_u(u') \in \Lambda(u')$.
	\item For each internal node $u'$ of $T[u]$, $\evaluation_u(u') = \tensorContraction(\evaluation_u(u'.l),\evaluation_u(u'.r))$
\end{enumerate}
\end{definition}

Intuitively, a partial simulation $\evaluation_u:\nodes(T[u])\rightarrow \tensors(d)$ of $(\abstractnetwork,\Lambda)$ 
can be obtained by the following process. First, we consider some initialization 
$\lambda$ of $(\abstractnetwork,\Lambda)$. Subsequently, we construct the simulation $\evaluation:\nodes(T)\rightarrow \tensors(d)$ of 
the tensor network $(\abstractnetwork,\lambda)$ on the contraction tree $(T,\iota)$. Finally, 
we restrict $\evaluation$ to the nodes of $T[u]$. In other words, we set $\evaluation_u = \evaluation|_{\nodes(T[u])}$. 
In particular, we note that if $u$ is the root of $T$, then $\evaluation_u = \evaluation$. 
The next lemma establishes an upper bound for the error propagation 
during the process of constructing an $\varepsilon$-simulation for a feasibility tensor network.

\begin{lemma}
\label{lemma:ErrorSimulation}
Let $(\abstractnetwork,\Lambda)$ be a feasibility tensor network, $(T,\iota)$ be a contraction tree for $\abstractnetwork$ of
rank $r$ and $\truncatedfeasibilitysimulation:N\rightarrow 2^{\tensors(d,\varepsilon)}$ be an $\varepsilon$-simulation 
of $(\abstractnetwork,\Lambda)$ on $(T,\iota)$. 
Finally, let $u$ be a node of $T$.
\newline
\begin{enumerate}
\setlength\itemsep{1em}
	\item \label{item:ErrorSimulationOne} For each partial simulation ${\evaluation_u:\nodes(T[u])\rightarrow \tensors(d)}$
	of $(\abstractnetwork,\Lambda)$ rooted at $u$, 	there is a tensor ${\atensor\in \truncatedfeasibilitysimulation(u)}$ 
		such that $\|\evaluation_u(u) - \atensor\| \leq \varepsilon\cdot (3d^{2\cdot r}+1)^{\mathit{height}(u)}$.
 	\item \label{item:ErrorSimulationTwo} For each tensor $\atensor\in \truncatedfeasibilitysimulation(u)$, there is a 
		partial simulation ${\evaluation_u:\nodes(T[u])\rightarrow \tensors(d)}$ of $(\abstractnetwork,\Lambda)$ rooted
		at $u$ such that $\|\evaluation_u(u) - \atensor\| \leq \varepsilon\cdot (3d^{2\cdot r}+1)^{\mathit{height}(u)}$.
\end{enumerate} 
\end{lemma}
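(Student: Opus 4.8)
The plan is to prove items (\ref{item:ErrorSimulationOne}) and (\ref{item:ErrorSimulationTwo}) simultaneously by induction on $\height(u)$, with Lemma \ref{lemma:ErrorContractionTruncated} doing all the analytic work: that lemma is precisely the statement that contracting two tensors each known to within error $\varepsilon(3d^{2r}+1)^{h}$ and then truncating to the $\varepsilon$-net costs only one extra factor of the geometric series, i.e.\ yields error $\varepsilon(3d^{2r}+1)^{h+1}$. Both directions of the present lemma will eventually be used at the root of $(T,\iota)$ to show that the set $\truncatedfeasibilitysimulation(\mathit{root})$ of complex numbers neither misses nor overshoots $\valuefeasibilitytensornetwork(\abstractnetwork,\Lambda)$. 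In the base case $u$ is a leaf, so $\height(u)=0$, $T[u]=\{u\}$, and by Definition \ref{definition:TruncatedSimulation} $\truncatedfeasibilitysimulation(u)=\Lambda(\iota(u))$ while by Definition \ref{definition:PartialSimulation} a partial simulation rooted at $u$ is just a single choice $\evaluation_u(u)\in\Lambda(\iota(u))$; hence for (\ref{item:ErrorSimulationOne}) take $\atensor=\evaluation_u(u)$ and for (\ref{item:ErrorSimulationTwo}) take $\evaluation_u(u)=\atensor$, so the two tensors coincide and the distance is $0\le\varepsilon\cdot(3d^{2r}+1)^{0}$.

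For the inductive step let $u$ be internal with children $u.l,u.r$, and put $h=\height(u)-1$, so that $\height(u.l),\height(u.r)\le h$ and, since $3d^{2r}+1\ge 1$, both child error bounds produced by the induction hypothesis are at most $\varepsilon(3d^{2r}+1)^{h}$. Because $(T,\iota)$ has rank $r$ we have $|\iota(u.l)|,|\iota(u.r)|\le r$, and because it is a contraction tree $\iota(u.l)\cap\iota(u.r)\neq\emptyset$, so the hypotheses of Lemma \ref{lemma:ErrorContractionTruncated} are in place. For (\ref{item:ErrorSimulationOne}): given a partial simulation $\evaluation_u$ rooted at $u$, its restrictions to $T[u.l]$ and $T[u.r]$ are partial simulations rooted at the children; the induction hypothesis supplies $\atensor_l\in\truncatedfeasibilitysimulation(u.l)$ and $\atensor_r\in\truncatedfeasibilitysimulation(u.r)$ that are $\varepsilon(3d^{2r}+1)^{h}$-close to $\evaluation_u(u.l)$ and $\evaluation_u(u.r)$; set $\atensor=\truncation_\varepsilon(\tensorContraction(\atensor_l,\atensor_r))$, which lies in $\truncatedfeasibilitysimulation(u)$ by Definition \ref{definition:TruncatedSimulation}, and since $\evaluation_u(u)=\tensorContraction(\evaluation_u(u.l),\evaluation_u(u.r))$, Lemma \ref{lemma:ErrorContractionTruncated} applied to $(\atensor_l,\evaluation_u(u.l),\atensor_r,\evaluation_u(u.r))$ gives $\|\atensor-\evaluation_u(u)\|\le\varepsilon(3d^{2r}+1)^{h+1}=\varepsilon(3d^{2r}+1)^{\height(u)}$. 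For (\ref{item:ErrorSimulationTwo}): given $\atensor\in\truncatedfeasibilitysimulation(u)$, unfold Definition \ref{definition:TruncatedSimulation} to write $\atensor=\truncation_\varepsilon(\tensorContraction(\atensor_l,\atensor_r))$ with $\atensor_l\in\truncatedfeasibilitysimulation(u.l)$ and $\atensor_r\in\truncatedfeasibilitysimulation(u.r)$; the induction hypothesis supplies partial simulations $\evaluation_{u.l},\evaluation_{u.r}$ rooted at the children with $\evaluation_{u.l}(u.l),\evaluation_{u.r}(u.r)$ being $\varepsilon(3d^{2r}+1)^{h}$-close to $\atensor_l,\atensor_r$; glue them over $T[u.l]$ and $T[u.r]$ and set $\evaluation_u(u)=\tensorContraction(\evaluation_{u.l}(u.l),\evaluation_{u.r}(u.r))$, obtaining a partial simulation rooted at $u$, and invoke Lemma \ref{lemma:ErrorContractionTruncated} again to get the same bound $\|\atensor-\evaluation_u(u)\|\le\varepsilon(3d^{2r}+1)^{\height(u)}$.

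The routine steps are the base case and the verification that a restriction, respectively a gluing, of partial simulations is again a partial simulation rooted at the relevant node (the leaf conditions and the internal-node contraction conditions are inherited verbatim). The one point that needs genuine care — and what I would flag as the main obstacle — is the height bookkeeping in the inductive step: the two children need not have equal height, so one must first pass to the common upper bound $h=\height(u)-1$, and one must check that "rank $r$" together with "contraction tree" deliver exactly the index-set hypotheses $|\iota(u.l)|,|\iota(u.r)|\le r$ and $\iota(u.l)\cap\iota(u.r)\neq\emptyset$ that Lemma \ref{lemma:ErrorContractionTruncated} requires. Beyond this there is no new estimate to make: the per-contraction blow-up factor $3d^{2r}$ and the additive $\varepsilon$ from each truncation are exactly what the base $3d^{2r}+1$ of the error term is designed to absorb.
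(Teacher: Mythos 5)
Your proposal is correct and follows essentially the same route as the paper's own proof: a simultaneous induction on $\height(u)$ with the leaf case handled by the definitions and the inductive step delegated entirely to Lemma \ref{lemma:ErrorContractionTruncated}, restricting a partial simulation to the children's subtrees for item (\ref{item:ErrorSimulationOne}) and unfolding $\truncatedfeasibilitysimulation(u)$ and gluing child simulations for item (\ref{item:ErrorSimulationTwo}). Your explicit remark that the two children may have unequal heights, so one passes to the common bound $\varepsilon(3d^{2r}+1)^{\height(u)-1}$ via monotonicity, is a small point the paper's proof glosses over, but otherwise the arguments coincide.
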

\begin{proof}
The proofs of both Lemma \ref{lemma:ErrorSimulation}.\ref{item:ErrorSimulationOne} and 
Lemma \ref{lemma:ErrorSimulation}.\ref{item:ErrorSimulationTwo} follow by induction on the height of $u$. 
First, we note that since the contraction tree $(T,\iota)$ has rank $r$, we have that
${|\indexset(u.l)\cap\indexset(u.r)|\leq r}$ for each internal node $u$ of $T$. Thus,
all tensors associated with nodes of $T$ have rank at most $r$. Now we proceed with the 
proof by induction. In the base case, $u$ is a leaf and therefore, $\nodes(T[u])=\{u\}$.
In this case, for each partial simulation 
$\evaluation_u:\{u\}\rightarrow \tensors(d)$, the tensor $\evaluation_u(u)$ belongs to 
$\Lambda(\iota(u)) = \truncatedfeasibilitysimulation(u)$
by Definition \ref{definition:PartialSimulation}. Conversely, for each tensor $g\in \truncatedfeasibilitysimulation(u)$, 
the function $\evaluation_u:\{u\}\rightarrow \tensors(d)$ obtained by setting $\evaluation_u(u)=g$ is 
a valid partial simulation.  
Now, suppose that the lemma is valid for every node of height at most $h$ and let $u$ be a node of height $h+1$. 
\begin{enumerate}
	\item Let $\evaluation_u:\nodes(T[u])\rightarrow \tensors(d)$ be a partial simulation. 
		Let $\evaluation_{u.l}=\evaluation_{u}|_{T[u.l]}$ and 
		$\evaluation_{u.r} = \evaluation_{u}|_{T[u.r]}$ be the restrictions 
		of $\evaluation_u$ to the nodes of the subtrees $T[u.l]$ and $T[u.r]$ respectively. Note 
		that $\evaluation_{u.l}(u.l) = \evaluation_u(u.l)$ and $\evaluation_{u.r}(u.r)=\evaluation_u(u.r)$, 
		and therefore, by Definition \ref{definition:PartialSimulation}, 
		$$\evaluation_{u}(u) = \tensorContraction(\evaluation_{u.l}(u.l),\evaluation_{u.r}(u.r)).$$ 
	 	By the induction hypothesis, there exist tensors 
		$g.l\in \truncatedfeasibilitysimulation(u.l)$ and $g.r\in \truncatedfeasibilitysimulation(u.r)$ 
		such that
		$$\|\evaluation_{u.l}(u.l) - \atensor.l\| \leq \varepsilon \cdot (3d^{2r}+1)^h$$
		and 
		$$\|\evaluation_{u.r}(u.r) - \atensor.r\| \leq \varepsilon \cdot (3d^{2r}+1)^h.$$ 
		By Definition \ref{definition:TruncatedSimulation}, the tensor 
		$\atensor = \truncation_{\varepsilon}(\tensorContraction(\atensor.l,\atensor.r))$ 
		belongs to $\truncatedfeasibilitysimulation(u)$. Finally, by Lemma \ref{lemma:ErrorContractionTruncated}, 
		$$\|\evaluation_u(u) - \atensor\| \leq \varepsilon\cdot (3d^2r+1)^{h+1}.$$
 
 	\item Let $\atensor$ be a tensor in $\truncatedfeasibilitysimulation(u)$.  By Definition \ref{definition:TruncatedSimulation},
		there exist tensors $\atensor.l\in \truncatedfeasibilitysimulation(u.l)$ and 
		$\atensor.r \in \truncatedfeasibilitysimulation(u.r)$
		such that $\atensor = \truncation_{\varepsilon}(\tensorContraction(\atensor.l,\atensor.r))$. 
		By the induction hypothesis, there exist partial simulations $\evaluation_{u.l}:\nodes(T[u.l])\rightarrow \tensors(d)$
		and $\evaluation_{u.r}:\nodes(T[u.r])\rightarrow \tensors(d)$ such that
		$$\|\evaluation_{u.l}(u.l)-\atensor.l\| \leq \varepsilon\cdot (3d^{2r}+1)^{h}$$ 
		and
		$$\|\evaluation_{u.r}(u.r)-\atensor.r\| \leq \varepsilon\cdot (3d^{2r}+1)^{h}.$$
		Now let $\evaluation_{u}:\nodes(T[u])\rightarrow \tensors(d)$ be the partial simulation
		that extends $\evaluation_{u.l}$ and $\evaluation_{u.r}$ by one node. More precisely,
		restricting $\evaluation_u$ to the nodes of $T[u.l]$ yields $\evaluation_{u.l}$, restricting $\evaluation_u$ to the 
		nodes of $T[u.r]$ yields $\evaluation_{u.r}$, and the tensor associated by $\evaluation_u$ with the 
		node $u$ is the contraction of the tensors associated by $\evaluation_{u.l}$ and $\evaluation_{u.r}$ with 
		the nodes $u.l$ and $u.r$ respectively. Formally, $\evaluation_u$ is defined by setting
		$$\evaluation_{u}|_{T[u.r]} = \evaluation_{u.r}, \hspace{0.5cm}
		\evaluation_{u}|_{T[u.l]} = \evaluation_{u.l},\hspace{0.5cm}\mbox{and}$$ 
 		$$\evaluation_u(u) = \tensorContraction(\evaluation_{u.l}(u.l),\evaluation_{u.r}(u.r)).$$
		Therefore, by Lemma \ref{lemma:ErrorContractionTruncated},
		$$\|\evaluation_u(u)-\atensor\| \leq \varepsilon\cdot (3d^2r+1)^{h+1}.$$
		$\square$ 
\end{enumerate}
\end{proof}

The next lemma, which is a consequence of Lemma \ref{lemma:ErrorSimulation}, states that 
if $u_0$ is the root of $T$ then the maximum absolute value of a complex number in 
$\truncatedfeasibilitysimulation(u_0)$ is at most $(3d^{2\cdot r} +1)^h$ apart 
from the value $\valuefeasibilitytensornetwork(\abstractnetwork,\Lambda)$ of the feasibility 
tensor network $(\abstractnetwork,\Lambda)$.

\begin{lemma}
\label{lemma:TruncationWellDefined}
Let $(\abstractnetwork,\Lambda)$ be a feasibility tensor network and $(T,\iota)$ be a contraction tree for $\abstractnetwork$ of
rank $r$ and height $h$. Let $\truncatedfeasibilitysimulation:N\rightarrow 2^{\tensors(d,\varepsilon)}$ be an $\varepsilon$-simulation 
of $(\abstractnetwork,\Lambda)$ on $(T,\iota)$. If $u_0$ is the root of~ $T$ and $\alpha$ is the largest absolute value 
of a complex number in $\truncatedfeasibilitysimulation(u_0)$, then 
$|\alpha - \valuefeasibilitytensornetwork(\abstractnetwork,\Lambda)|\leq (3d^2r+1)^h$. 
\end{lemma}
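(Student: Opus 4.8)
The plan is to obtain Lemma~\ref{lemma:TruncationWellDefined} as an immediate consequence of Lemma~\ref{lemma:ErrorSimulation} applied at the root $u_0$ of $T$. First I would record two bookkeeping facts about the root. Since $\iota(u_0)=\emptyset$, every tensor in $\truncatedfeasibilitysimulation(u_0)$ is a rank-$0$ tensor, i.e.\ a complex number, so $\alpha=\max\{|\atensor|\;:\;\atensor\in\truncatedfeasibilitysimulation(u_0)\}$ is well defined. Moreover, a partial simulation $\evaluation_{u_0}:\nodes(T[u_0])\to\tensors(d)$ rooted at $u_0$ is exactly a simulation of a tensor network $(\abstractnetwork,\lambda)$ on $(T,\iota)$ in the sense of Definition~\ref{definition:TensorNetworkSimulation}, where $\lambda$ is the initialization of $(\abstractnetwork,\Lambda)$ determined by $\lambda(\iota(u'))=\evaluation_{u_0}(u')$ for $u'\in\leaves(T)$; conversely every initialization $\lambda$ of $(\abstractnetwork,\Lambda)$ arises from such a partial simulation. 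In particular $|\evaluation_{u_0}(u_0)|=\valuetensornetwork(\abstractnetwork,\lambda)$, and $\height(u_0)=h$.

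Next I would prove the two inequalities whose combination gives the claim, in each case invoking one half of Lemma~\ref{lemma:ErrorSimulation} together with the reverse triangle inequality $\bigl|\,|x|-|y|\,\bigr|\le|x-y|$. For the lower bound, fix an initialization $\lambda^\star$ with $\valuetensornetwork(\abstractnetwork,\lambda^\star)=\valuefeasibilitytensornetwork(\abstractnetwork,\Lambda)$, let $\evaluation_{u_0}$ be the partial simulation rooted at $u_0$ corresponding to $\lambda^\star$, and apply Lemma~\ref{lemma:ErrorSimulation}.\ref{item:ErrorSimulationOne} to get $\atensor\in\truncatedfeasibilitysimulation(u_0)$ with $\|\evaluation_{u_0}(u_0)-\atensor\|\le\varepsilon\cdot(3d^{2r}+1)^h$; since $\alpha\ge|\atensor|\ge|\evaluation_{u_0}(u_0)|-\varepsilon\cdot(3d^{2r}+1)^h$, this yields $\alpha\ge\valuefeasibilitytensornetwork(\abstractnetwork,\Lambda)-\varepsilon\cdot(3d^{2r}+1)^h$. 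For the upper bound, pick $\atensor^\star\in\truncatedfeasibilitysimulation(u_0)$ with $|\atensor^\star|=\alpha$, apply Lemma~\ref{lemma:ErrorSimulation}.\ref{item:ErrorSimulationTwo} to get a partial simulation $\evaluation_{u_0}$ rooted at $u_0$ with $\|\evaluation_{u_0}(u_0)-\atensor^\star\|\le\varepsilon\cdot(3d^{2r}+1)^h$, and note that $|\evaluation_{u_0}(u_0)|=\valuetensornetwork(\abstractnetwork,\lambda)\le\valuefeasibilitytensornetwork(\abstractnetwork,\Lambda)$ for the associated initialization $\lambda$; hence $\alpha=|\atensor^\star|\le|\evaluation_{u_0}(u_0)|+\varepsilon\cdot(3d^{2r}+1)^h\le\valuefeasibilitytensornetwork(\abstractnetwork,\Lambda)+\varepsilon\cdot(3d^{2r}+1)^h$. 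Combining the two bounds gives $|\alpha-\valuefeasibilitytensornetwork(\abstractnetwork,\Lambda)|\le\varepsilon\cdot(3d^{2r}+1)^h$, which is the assertion of the lemma.

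I do not expect a genuine obstacle here: the analytic content is entirely contained in Lemma~\ref{lemma:ErrorSimulation}, which is already available, and what remains is the identification of partial simulations at the root with initializations of the feasibility network and the passage from a bound on tensor distance to a bound on absolute values. The only point worth double-checking is the exponent: the error bound in Lemma~\ref{lemma:ErrorSimulation} at a node $u$ carries the factor $(3d^{2r}+1)^{\height(u)}$, and since the root has height $h$ this specializes to $(3d^{2r}+1)^h$, matching the bound used later in Theorem~\ref{theorem:TensorNetworkSatisfiability} (where the factor $\varepsilon$ is carried through explicitly).
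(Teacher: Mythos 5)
Your proof is correct and follows essentially the same route as the paper: both halves of Lemma \ref{lemma:ErrorSimulation} applied at the root, the identification of partial simulations rooted at $u_0$ with initializations of $(\abstractnetwork,\Lambda)$, and the reverse triangle inequality $|\,|x|-|y|\,|\leq|x-y|$. Note that your bound $\varepsilon\cdot(3d^{2r}+1)^h$ is actually sharper than the $(3d^{2r}+1)^h$ appearing in the lemma statement (the paper seems to have dropped the $\varepsilon$ there), and it is your version that is invoked in the proof of Theorem \ref{theorem:TensorNetworkSatisfiability}.
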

\begin{proof}

First, we note that since $u_0$ is the root of $T$, all elements of $\truncatedfeasibilitysimulation(u_0)$ 
are complex numbers, i.e., rank-0 tensors.

\begin{claim} For each initialization $\lambda$ of $(\abstractnetwork,\Lambda)$ there 
is a complex number in $\truncatedfeasibilitysimulation(u_0)$ such that
 $|\,\valuetensornetwork(\abstractnetwork,\lambda) - |g|\,|\leq (3d^{2\cdot r} +1)^h$.
Conversely, for each $g\in \truncatedfeasibilitysimulation(u_0)$ there is 
an initialization $\lambda$ of $(\abstractnetwork,\Lambda)$ such that 
$|\,\valuetensornetwork(\abstractnetwork,\lambda) - |g|\,|\leq (3d^{2\cdot r} +1)^h$.

\end{claim}

Recall that $\valuefeasibilitytensornetwork(\abstractnetwork,\Lambda)$ is defined as 
$\max_{\lambda} \valuetensornetwork(\abstractnetwork,\lambda)$ 
where $\lambda$ ranges over all initializations of $(\abstractnetwork,\Lambda)$.
Therefore, the claim stated above implies that if $\alpha$ is the maximum absolute 
value of a complex number in $\truncatedfeasibilitysimulation(u_0)$, 
then $$|\alpha - \valuefeasibilitytensornetwork(\abstractnetwork,\Lambda)|\leq (3d^{2\cdot r}+1)^h.$$ 

Now we proceed to prove our claim. First, 
let $\lambda$ be an initialization of $(\abstractnetwork,\Lambda)$. Then there exists a partial simulation 
$\evaluation_{u_0}$ of $(\abstractnetwork,\Lambda)$ rooted at $u_0$ such that ${\evaluation_{u_0} = \evaluation}$, 
where $\evaluation$ is the simulation of the tensor network $(\abstractnetwork,\lambda)$ constructed as in 
Definition \ref{definition:TensorNetworkSimulation}. Note that $\valuetensornetwork(\abstractnetwork,\lambda)= |\evaluation_{u_0}(u_0)|$, 
since $u_0$ is the root of $T$. By Lemma \ref{lemma:ErrorSimulation}.$i$, there exists a complex number
$g\in \truncatedfeasibilitysimulation(u_0)$ such that $|\evaluation_{u_0}(u_0) - g|\leq (3d^{2\cdot r}+1)^h$. 
Using the fact that $|\,|x|-|y|\,|\leq |x-y|$ for every pair of complex numbers $x$ and $y$, we have that 
$|\,\valuetensornetwork(\abstractnetwork,\lambda)- |g|\,|\leq (3d^{2\cdot r}+1)^h$. 

Conversely, let $g$ be a complex number in $\truncatedfeasibilitysimulation(u_0)$. By Lemma \ref{lemma:ErrorSimulation}.$ii$, there 
exists a partial simulation $\lambda_{u_0}$ of $(\abstractnetwork,\Lambda)$ rooted at $u_0$ such that 
${|\lambda_{u_0}(u_0) - g|\leq (3d^{2\cdot r} + 1)^h}$. Since $u_0$ is the root of $T$, there exists an initialization 
$\lambda$ of $(\abstractnetwork,\Lambda)$ such that ${\evaluation_{u_0} = \evaluation}$, where $\evaluation$ is the 
simulation of the tensor network $(\abstractnetwork,\lambda)$ on $(T,\iota)$ constructed according to Definition 
\ref{definition:TensorNetworkSimulation}. Note again that $\valuetensornetwork(\abstractnetwork,\lambda) = |\evaluation_{u_0}(u_0)|$. 
Therefore using the fact that $|\,|x|-|y|\,| \leq |x-y|$ for every pair of complex numbers $x$ and $y$, we have that 
$|\,\valuetensornetwork(\abstractnetwork,\lambda)- |g|\,|\leq (3d^{2\cdot r}+1)^h$. This proves the claim. $\square$ 

\end{proof}

Finally, we are in a position to prove Theorem \ref{theorem:TensorNetworkSatisfiability}. 

\paragraph{\bf Proof of Theorem \ref{theorem:TensorNetworkSatisfiability}}\hspace{0.5cm}\\

\ref{TensorNetworkSatisfiability-itemOne}) Let $(\abstractnetwork,\Lambda)$ be a feasibility tensor 
network and $(T,\iota)$ be a contraction tree for $\abstractnetwork$ of rank $r$ and height $h$. 
Let $\truncatedfeasibilitysimulation:N\rightarrow 2^{\tensors(d)}$ be the 
$\varepsilon$-simulation $(\abstractnetwork,\Lambda)$ on $(T,\iota)$ constructed 
according to Definition \ref{definition:TruncatedSimulation}.   
Since $T$ is a binary tree with $|\abstractnetwork|$ leaves, the total number of nodes in $T$ is $2|\abstractnetwork|-1$. 
Additionally, since $(T,\iota)$ has rank $r$, for each node $u$ of $T$, the set $\truncatedfeasibilitysimulation(u)$ has at most 
$|\tensors(d,\varepsilon,r)|=\varepsilon^{-\exp(O(r\cdot \log d))}$ tensors. Therefore, $\truncatedfeasibilitysimulation$ 
can be constructed in time $|\abstractnetwork|\cdot \varepsilon^{-\exp(O(r\cdot \log d))}$.
Now let $u_0$ be the root of $T$, and let $\alpha$ be the largest absolute value of a complex number 
in $\truncatedfeasibilitysimulation(u_0)$. Then by Lemma \ref{lemma:TruncationWellDefined},
$|\alpha - \valuefeasibilitytensornetwork(\abstractnetwork,\Lambda)|\leq (3d^{2\cdot r}+1)^h$.
Therefore, after having constructed $\truncatedfeasibilitysimulation$ we just need to output $\alpha$.

\ref{TensorNetworkSatisfiability-itemTwo}) Let $u_0$ be the root of $T$ and let 
$g$ be the complex number in $\truncatedfeasibilitysimulation(u_0)$ with largest 
absolute value $\alpha$. We construct a simulation $\simulation:\nodes(T)\rightarrow \tensors(d)$ of 
$(\abstractnetwork,\Lambda)$ on $(T,\iota)$ as follows. First, we set $\evaluation(u_0) = g$. 
Now for each internal node $u$ for which $\simulation(u)$ has already been determined, 
let $u_l$ and $u_r$ be respectively the left and right children of $u$. 
Then we set $\simulation(u.l)=g_l$ and $\simulation(u.r)=g_r$ where $g_l$ and $g_r$ are respectively tensors 
in $\truncatedfeasibilitysimulation(u.l)$ and $\truncatedfeasibilitysimulation(u.r)$ for which 
$\simulation(u)=\truncation_{\varepsilon}(g_l,g_r)$. We proceed in this way until we 
have determined $\simulation$ on all leaves of $T$. The searched initialization $\lambda$ 
is then obtained by considering the tensors associated by $\simulation$ with the leaves 
of $T$. In other words, for each leaf $u$ labeled with the index set $\iota(u)$, 
we set $\lambda(\iota(u)) = \simulation(u)$. Since $\simulation(u_0) = g$, we have that 
$\valuetensornetwork(\abstractnetwork,\lambda)= \alpha$. Since $T$ has $2|\abstractnetwork|-1$ nodes,
once we are given $\truncatedfeasibilitysimulation$, the construction of the initialization
$\lambda$ takes time $O(|\abstractnetwork|)$. Since $\truncatedfeasibilitysimulation$ can 
be constructed in time $|\abstractnetwork|\cdot \varepsilon^{-\exp(O(r\cdot \log d))}$, 
the overall time complexity to construct $\lambda$ is $|\abstractnetwork|\cdot \varepsilon^{-\exp(O(r\cdot \log d))}$. 
$\square$

\section{Classical Witnesses vs Quantum Verifiers of Logarithmic Width}
\label{section:ProofLogarithmicTreewidthNP}

In this Section we will prove Theorem \ref{theorem:LogarithmicTreewidthNP}, which 
states that for any constant $\delta$ with $0<\delta < 1$, and any quantum circuit $C$ with $n$ uninitialized inputs, 
$\mathit{poly}(n)$ gates, and online-width $O(\log n)$,  
it is $\mathrm{NP}$ complete to determine whether ${Pr^{cl}(C)=1}$ or whether ${Pr^{cl}(C)\leq \delta}$. 
We note that as an implication of Theorem \ref{theorem:LogarithmicTreewidthNP}, we have that 
${\mathrm{QCMA}[\onlinecutwidth,O(\log n)] = \mathrm{NP}}.$
Indeed, from the proof of Theorem \ref{theorem:LogarithmicTreewidthNP}, it will also follow that
$\mathrm{QCMA}[\treewidth,O(\log n)] = \mathrm{NP}$.

\paragraph{\bf Membership in $\mathrm{NP}$:}
As mentioned in the introduction, Markov and Shi's simulation algorithm \cite{MarkovShi2008} computes 
(with polynomially many bits of precision) the acceptance probability of a quantum circuit of treewidth 
$t$ in deterministic time $2^{O(t)}\cdot |C|^{O(1)}$. 
Let $C$ be a  quantum circuit with $n$ uninitialized inputs, $\poly(n)$ gates, and 
treewidth $O(\log n)$. Given a classical assignment 
$y\in \{0,...,d-1\}^{n}$ we can use Markov and Shi's algorithm to compute $\mathit{Pr}(C,\ket{y})$ 
in time $n^{O(1)}$. We accept $y$ as a suitable witness if $\mathit{Pr}(C,\ket{y})=1$ and 
we reject $y$ if $\mathit{Pr}(C,\ket{y})\leq \delta$. This shows that the problem of computing a
$\delta$-optimal classical witness for a quantum circuit of logarithmic treewidth is in $\mathrm{NP}$.
Since, by Equation \ref{equation:ComparisonMeasures}, the treewidth of any circuit is 
upper bounded by its online width, we have that the problem of computing a 
$\delta$-optimal classical witness for a quantum circuit of logarithmic online-width is also in $\mathrm{NP}$.
In summary, we have just shown that 
$$\mathrm{QCMA}[\onlinecutwidth,O(\log n)] \subseteq \mathrm{QCMA}[\treewidth,O(\log n)] \subseteq \mathrm{NP}.$$

\paragraph{\bf $\mathrm{NP}$-hardness:}
To show that the problem of finding a $\delta$-optimal classical assignment for a
quantum circuit of logarithmic online-width is $\mathrm{NP}$-hard, we will first consider 
a probabilistic verifier of logarithmic online-width for the $3$-SAT problem which
has inverse polynomial soundness. Subsequently we will show that the soundness can be 
amplified to a constant by increasing the online-width of the original verifier by 
a logarithmic additive factor. Clearly, such hardness result for probabilistic circuits
of logarithmic online width implies the same hardness result for quantum circuits of 
logarithmic online width. Also, since the treewidth of a circuit is always upper-bounded 
by its online-width, the following sequence of inclusions is implied by the 
$\mathrm{NP}$-hardness proof provided in this section. 

$$\mathrm{NP} \subseteq \mathrm{QCMA}[\onlinecutwidth,O(\log n)] \subseteq \mathrm{QCMA}[\treewidth,O(\log n)].$$

\begin{proposition}[Folklore]
\label{proposition:VerificationCircuit}
Let  $F(x_1,x_2,...,x_n)$ be a $3$-SAT formula with $n$ variables $x_1...x_n$ and $m=poly(n)$ clauses $W_1,W_2,...,W_m$. 
There is a probabilistic circuit $C_F$ of online-width $O(\log n)$ such that the following conditions are 
satisfied. 
\begin{enumerate}
	\item (Completeness) $\;$ If $F$ is satisfiable then ${Pr^{\mathit{cl}}(C_F) = 1}$.
	\item (Soundness) $\;$ If $F$ is not satisfiable then $Pr^{\mathit{cl}}(C_F)\leq 1-1/poly(n)$. 
\end{enumerate}
\end{proposition}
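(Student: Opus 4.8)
The plan is to build $C_F$ as a \emph{streaming} verifier that keeps only $O(\log n)$ bits of state alive at any moment, so that the left-to-right ordering of its gates is a topological ordering witnessing $\onlinecutwidth(C_F)=O(\log n)$. The $n$ uninitialized inputs hold the candidate assignment $y\in\{0,1\}^n$. On top of these, $C_F$ uses $\lceil\log m\rceil=O(\log n)$ internal random bits, sampled at the start, to pick a clause index $j\in\{1,\dots,m\}$ (if the sampled string is $\geq m$, the circuit simply accepts); it then checks the single clause $W_j$ and accepts iff $W_j(y)=1$.

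The heart of the construction is to perform this check in logarithmic width even though $W_j$ may mention any three of the $n$ variables. I would use two streaming passes. The first pass runs over the hard-wired clause table $W_1,\dots,W_m$: carrying the target index $j$, a counter $c$, and a ``captured row'' register holding three variable indices and three signs (all of width $O(\log n)$), the size-$\mathit{poly}\log n$ gadget for row $c$ overwrites the captured register with the constants describing $W_c$ precisely when $c=j$; after $m$ gadgets the register holds the description of $W_j$. The second pass runs over the assignment: carrying the three captured indices $v_{j,1},v_{j,2},v_{j,3}$, their signs, a counter $i$, and three one-bit literal values $b_1,b_2,b_3$, the gadget at step $i$ reads $y_i$ once (consistent with each input vertex having out-degree $1$) and, for each $k$, sets $b_k\leftarrow y_i\oplus s_{j,k}$ whenever $i=v_{j,k}$. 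A final constant-size gadget outputs $b_1\vee b_2\vee b_3$. There are $O(m)$ gadgets of size $\mathit{poly}\log n$, so $|C_F|=\mathit{poly}(n)$; threading only the $O(\log n)$-bit state through the backbone and inserting each input vertex $y_i$ immediately before the gadget that reads it yields a topological order along which every cut carries $O(\log n)$ edges, hence $\onlinecutwidth(C_F)=O(\log n)$ and, by Equation \ref{equation:ComparisonMeasures}, $\treewidth(C_F)=O(\log n)$ as well.

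Completeness and soundness are then immediate. If $F$ is satisfiable, fix a satisfying assignment $y^{\ast}$; for every sampled $j$ the clause $W_j$ is satisfied, so $Pr(C_F,\ket{y^{\ast}})=1$ and therefore $Pr^{\mathit{cl}}(C_F)=1$. If $F$ is unsatisfiable, then for every $y$ some clause $W_{j(y)}$ is violated by $y$, so the circuit rejects on $y$ whenever it samples $j=j(y)$, an event of probability at least $2^{-\lceil\log m\rceil}\geq 1/2m$; hence $Pr(C_F,\ket{y})\leq 1-1/2m$ for every $y$, i.e.\ $Pr^{\mathit{cl}}(C_F)\leq 1-1/\mathit{poly}(n)$.

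The main obstacle I expect is not correctness but the width bookkeeping: one must fix a circuit model in which every state bit consumed by several later gadgets is duplicated by explicit fan-out gates that are themselves part of the stream, check that these copies never enlarge the active cut beyond $O(\log n)$, and handle the hard-wired constants and the out-of-range sampling case so that no gadget ever needs to see more than $O(\log n)$ wires simultaneously. This is routine --- which is why the statement is folklore --- but it is exactly the part that has to be spelled out to make the online-width claim rigorous.
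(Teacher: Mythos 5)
Your proposal is correct and follows essentially the same route as the paper: sample a random clause index with $O(\log n)$ random bits and verify that single clause in a streaming fashion, giving completeness $1$ and soundness $1-1/\mathit{poly}(n)$ (your $1-1/2m$ versus the paper's $1-1/m$ is an immaterial difference caused by your handling of out-of-range samples). The paper states this construction only at the level of "an online verifier that ignores irrelevant witness bits," whereas you additionally spell out the two-pass implementation and the width bookkeeping; this extra detail is consistent with, and a refinement of, the paper's argument.
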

\begin{proof}
The circuit $C_F$ simulates the execution of an online verifier that does the following: 
First it chooses a number $r$ from the set $\{1,...,m\}$ at random. Then, when reading the 
witness provided by Merlin, it ignores all bits which assign values to variables that do 
not belong to $W_r$, and keeps only those bits which correspond to variables used by $W_r$. 
Subsequently it verifies whether $W_r$ is evaluated to true. If the formula $F$ has a satisfying 
assignment, then the prover can always convince the verifier with probability $1$, since in 
this case the restriction of a satisfying assignment to the variables occurring in $W_r$ will 
cause $W_r$ to evaluate to $1$. Otherwise, if $F$
has no satisfying assignment, then no matter what assignment is provided by the prover,  there will 
be at least one clause of $F$ which will not be satisfied. Thus the verifier will accept with 
probability at most $1-1/m$. $\square$
\end{proof}

\newcommand{\maj}{\mathrm{MAJ}}
\newcommand{\adder}{\mathrm{ADD}}
\newcommand{\comparison}{\mathrm{COMP}}
It remains to show that the same hardness 
result holds if the soundness in Proposition \ref{proposition:VerificationCircuit} is constant, instead of 
inverse polynomial. This can be done by a standard probability amplification argument: First we 
create polynomially many copies of the original verifier, each with access to independent random bits, 
and then we consider the majority vote of the answer of all verifiers.  We claim that if the original verifier 
$C$ has online-width $w$, then the overall amplified circuit has online-width $w+O(\log n)$.

Let $q = n^{O(1)}$, and let $\adder$ be an adder with $c+1$ input bits and $c$ output bits where $c=\lceil \log q \rceil$.
One of the inputs of $\adder$ is a control bit. The other $c$ inputs of $\adder$ form a counter register. The circuit 
$\adder$ increases the value of its counter register by one if the control bit is set to $1$, and does nothing 
with the input otherwise. Let $\comparison$ be a circuit with $c$ input bits and one output bit. The circuit
$\comparison$ accepts if the value represented by its inputs is greater than $\lceil \frac{q}{2} \rceil$, and  
rejects otherwise. Finally let $\maj(x_1,...,x_q)$ be a circuit that accepts an input $x_1x_2...x_q$ if 
$\sum_{i=1}^q x_i \geq \lceil \frac{q}{2} \rceil$ and rejects otherwise. 
Then $\maj(x_1,...,x_q)$ can be implemented as a sequence of sub-circuits $\adder_1,\adder_2,...,\adder_q,\comparison$,
where for each $i\in \{1,...,q\}$, $\adder_i$ is an adder with control bit $x_i$, 
for each $i\in \{1,...,q-1\}$ the outputs of $\adder_i$ are identified with the inputs at the counter register 
of $\adder_{i+1}$, the outputs of $\adder_q$ are identified with the inputs of $\comparison$, 
and the inputs at the counter register of $\adder_1$ are set to $0$ (Fig. \ref{figure:Majority-Vote}). 

Now let $C_F$ be the probabilistic circuit of online width $w= O(\log n)$ considered in  
Proposition \ref{proposition:VerificationCircuit}. Let $C_1,...,C_q$ be $q$ independent copies of the circuit 
$C_F$. Finally, let $C' = \maj(C_1,C_2,...,C_q)$ be the circuit obtained from $\maj(x_1,...,x_q)$ by identifying,
 for each $i\in \{1,...,q\}$, the output bit of $C_i$ with the control bit of the sub-circuit 
$\adder_i$. Then $C'$ accepts if and only if at least 
$\lceil \frac{q}{2} \rceil$ of the circuits $C_i$ accept.
For each constant $\delta$ with $0<\delta < 1$, we can choose a suitable $q=n^{O(1)}$, such that 
$C'$ accepts with probability $1$ if the $3$-SAT formula $F$ 
is satisfiable, and accepts with probability at most $\delta$ if $F$ is unsatisfiable. 
To show that $C'$ has online-width at most $w+O(\log n)$, we note that there is a topological ordering 
of the gates of $C'$ which executes all gates of $C_i$ before all gates of $\adder_i$, all gates 
of $\adder_i$ before all gates of $C_{i+1}$, and all gates of $\adder_q$ before all gates of $\comparison$ 
(Fig. \ref{figure:Majority-Vote}). $\square$

\begin{figure}[t]
\centering
\includegraphics[scale=0.65]{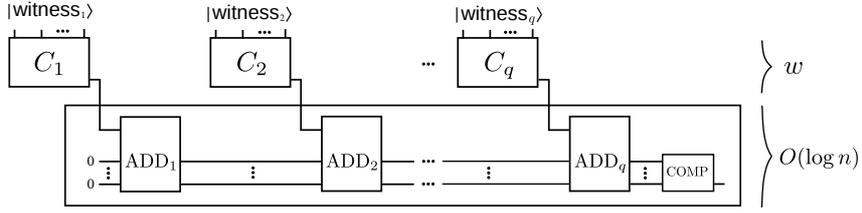}
\caption{Amplification of Probability with a logarithmic increase in online-width. The sub-circuit 
surrounded by the box implements the majority vote in online-width $O(\log n)$.} 
\label{figure:Majority-Vote}
\end{figure}

\section{Quantum Witnesses vs Quantum Verifiers of Logarithmic Width}
\label{section:ProofLogarithmicTreewidthQMA}

In this section we will prove Theorem \ref{theorem:LogarithmicTreewidthQMA}, which 
states that for any constant $\delta$ with $0<\delta< 1/2$, and any quantum circuit $C$ with 
$n$ uninitialized inputs, $\poly(n)$ gates and online-width $O(\log n)$, 
it is $\mathrm{QMA}$-Complete to determine 
whether $\mathit{Pr}^{qu}(C)\geq 1-\delta$ or whether 
${\mathit{Pr}^{qu}(C)\leq \delta}$.
We note that membership in $\mathrm{QMA}$ is trivial, since $\mathrm{QMA}$ is defined in terms of 
the quantum satisfiability of quantum circuits of polynomial online-width. On the 
other hand, the proof of $\mathrm{QMA}$-hardness will be  similar to the proof of $\mathrm{NP}$-hardness 
for the classical satisfiability of quantum circuits of logarithmic online-width given 
in Section \ref{section:ProofLogarithmicTreewidthNP}. The only difference 
is that instead of using  a reduction from 3-SAT, we will 
use a reduction from the $\mathrm{QMA}$-Complete problem $k$-local Hamiltonian \cite{KitaevShenVyalyi2002}.
We note that this completeness result implies that $\mathrm{QMA}[\onlinecutwidth,O(\log n)]=\mathrm{QMA}$.
Since, by Equation \ref{equation:ComparisonMeasures}, the treewidth of a circuit is always upper bounded 
by its online width, we also have that $\mathrm{QMA}[\treewidth,O(\log n)] = \mathrm{QMA}$. 

Let $\hilbert_d= \C^d$. 
An operator $H:\hilbert_d^{\otimes n}\rightarrow \hilbert_d^{\otimes n}$ is called a $k$-local 
Hamiltonian if it is expressible in the form $H=\sum_{j} H_j$ where each $H_j$ 
is an Hermitian operator acting on at most $k$ qubits. Additionally 
we assume a normalizing condition requiring both $H_j$ and $I-H_j$ to be positive semidefinite.

\begin{definition}[$k$-Local Hamiltonian Problem]
\label{definition:LocalHamiltonianProblem}
Let $k=O(1)$, $H=\sum_{i=1}^m H_i$ be a $k$-local Hamiltonian acting on $n$ qubits, and $a,b$ be real numbers 
such that $0\leq a < b$ and $b-a = \Omega(n^{-\alpha})$ for some constant $\alpha>0$. 
The $k$-local Hamiltonian problem consists in determining whether $H$ has an eigenvalue 
not exceeding $a$, or whether all eigenvalues of $H$ are at least $b$. 
\end{definition}

The $k$-local Hamiltonian problem was shown to be $\mathrm{QMA}$ complete in \cite{KitaevShenVyalyi2002}.
In particular, the proof that this problem is in $\mathrm{QMA}$ follows from a reduction to the problem 
of approximating the maximum acceptance probability of a quantum circuit with uninitialized inputs. 
Next, we show that this reduction carry over even if we require the obtained quantum 
circuits to have logarithmic online-width. We note that the exposition given below is similar to 
the one encountered in \cite{KitaevShenVyalyi2002}, except for some adaptations that 
take the online-width of the involved circuits into consideration.

\begin{lemma}[Adaptation from \cite{KitaevShenVyalyi2002}] 
\label{lemma:LocalHamiltonianConversion}
Let $k=O(1)$, $H=\sum_{i=1}^m H_i$ be a $k$-local Hamiltonian acting on $n$ qubits, and $a,b$ be real numbers 
such that $0\leq a < b$ and $b-a = \Omega(n^{-\alpha})$ for some constant $\alpha>0$. Then there is a quantum circuit 
$C_H$ with $n$ uninitialized inputs and online-width $O(\log n)$ satisfying the following conditions. 
\begin{enumerate}
	\item (Completeness) If some eigenvalue of $H$ is smaller than or equal to $a$, then \\
		\hphantom{spacespacespace} ${Pr^{\mathit{qu}}(C_H) \geq 1-m^{-1}a}$.
	\item (Soundness) If all eigenvalues of $H$ are at least $b$, then ${Pr^{\mathit{qu}}(C_H) \leq 1-m^{-1}b}$.
\end{enumerate}
\end{lemma}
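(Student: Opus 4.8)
The plan is to adapt Kitaev's verification circuit for the $k$-local Hamiltonian problem \cite{KitaevShenVyalyi2002}, modified so that the $n$ witness qubits are consumed one at a time and the online-width stays logarithmic. Recall the underlying idea: on a witness $\ket{\varphi}\in\hilbert_d^{\otimes n}$ (with $d=2$), the verifier samples an index $i\in\{1,\dots,m\}$ uniformly at random, performs on the $k$ qubits in the support $S_i$ of $H_i$ the two-outcome measurement $\{I-H_i,\,H_i\}$ (a valid POVM because the normalization forces $0\preceq H_i\preceq I$), and accepts iff the first outcome occurs. Since $H_i$ acts as the identity outside $S_i$, the acceptance probability depends only on $\rho=\ket{\varphi}\bra{\varphi}$ and equals $\frac1m\sum_{i=1}^m\trace\!\big((I-H_i)\rho\big)=1-\frac1m\trace(H\rho)$, so maximizing over pure $\rho$ gives $\mathit{Pr}^{qu}(C_H)=1-\frac1m\,\lambda_{\min}(H)$. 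Completeness and soundness are then immediate: if $\lambda_{\min}(H)\le a$ then $\mathit{Pr}^{qu}(C_H)\ge 1-m^{-1}a$, and if every eigenvalue of $H$ is at least $b$ then $\lambda_{\min}(H)\ge b$ and $\mathit{Pr}^{qu}(C_H)\le 1-m^{-1}b$.

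The work is entirely in realizing this verifier by a circuit of online-width $O(\log n)$; the naive implementation keeps all $n$ witness qubits alive at once and has online-width $\Theta(n)$. First I would write the random index into a classical register $R$ of $\lceil\log m\rceil=O(\log n)$ bits using $O(\log n)$ gates of the form $\rho\mapsto I/2$ on initialized ancillas (with the usual tweak when $m$ is not a power of two). Then I would consume the $n$ uninitialized inputs $v_1,\dots,v_n$ in that order, placing each input vertex $v_j$ in the topological ordering just before the gates that touch it --- legal since input vertices have in-degree $0$. The only data carried past step $j$ are: the register $R$; a work register $W$ of $k=O(1)$ qubits initialized to $\ket{0}$; and $O(1)$ scratch qubits. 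Processing $v_j$ means: compute, controlled on $R$, the slot $p=f(i,j)\in\{0,1,\dots,k\}$ (with $p=0$ signaling $j\notin S_i$ and $p=q\ge1$ signaling that $v_j$ is the $q$-th qubit of $S_i$), multiplexed-swap $v_j$ into slot $p$ of $W$ (or into a scratch qubit when $p=0$), and then trace out $v_j$ and re-initialize the scratch. Each such step is a multiplexed operation wired on $O(\log n)$ lines and built from a constant number of few-qubit gates. Once all $n$ inputs are gone, $W$ holds the reduced state $\rho_{S_i}$ in a canonical qubit order; I would then apply, controlled on $R$, a fixed Naimark dilation unitary $V_i$ on $W$ plus $O(1)$ ancillas realizing $\{I-H_i,H_i\}$ (each $V_i$ depends only on the fixed $2^k\times2^k$ matrix $H_i$), and finally designate a single ancilla, measured in the computational basis, as the unique output vertex of $C_H$, arranged so that it equals $\ket1$ with probability $\trace\!\big((I-H_i)\rho_{S_i}\big)$. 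The multiplexed routing and measurement steps cost $\poly(m)=\poly(n)$ gates, and the whole circuit is clearly polynomial-time constructible from $H$.

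The online-width bound is then read off from this ordering: at every prefix, the cut is crossed only by the wires of $R$ ($O(\log n)$), of $W$ ($O(1)$) and of the scratch ($O(1)$), since the not-yet-processed inputs $v_{j+1},\dots,v_n$ have not yet appeared; the internal sub-routines are themselves wired on $O(\log n)$ lines, so every cut has $O(\log n)$ crossing edges and $\onlinecutwidth(C_H)=O(\log n)$. One loose end is that the $O(1)$-qubit gates $V_i$ and the routing gadgets must come from a fixed finite universal set $\universalgates$; I would replace each by a Solovay--Kitaev approximation using $\mathrm{polylog}(n)$ gates and inverse-polynomial precision, which shifts $\mathit{Pr}^{qu}(C_H)$ by at most an inverse-polynomial additive amount --- harmless, as the promise gap $(b-a)/m$ is itself only inverse-polynomial and the amplification used afterwards to derive Theorem \ref{theorem:LogarithmicTreewidthQMA} absorbs it. The main obstacle is exactly this online-width bookkeeping: arguing that each witness qubit can be routed into the correct slot of the constant-size register $W$ with only $O(\log n)$ wires simultaneously live --- which is what forces $R$ to hold a classical value and the witness qubits to be streamed one at a time, rather than a deeper property of the Hamiltonian.
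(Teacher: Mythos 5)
Your proposal is correct and follows essentially the same route as the paper: choose a random term index with an $O(\log n)$-bit register, stream the witness qubits while discarding those outside the support of $H_i$, apply the two-outcome measurement $\{I-H_i,\,H_i\}$ to the retained qubits, and read off the acceptance probability $1-m^{-1}\bra{\eta}H\ket{\eta}$, which immediately yields completeness and soundness. The paper states the online-width bound with a one-line justification (only the $O(\log n)$ bits of the random index need to be carried across any cut), whereas you spell out the multiplexed-routing and gate-set details explicitly; this is a more careful account of the same construction, not a different one.
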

\begin{proof}
Let $H = \sum_{j=1}^m H_j$ be a $k$-local Hamiltonian.
For each local term $H_j$ we construct a circuit implementing 
the POVM $\{H_j,I-H_j\}$. Since $H_j$ can be rewritten as $H_j=\sum_s\lambda_s \ket{\psi_s}\bra{\psi_s}$,
where $\ket{\psi_s}$ are the eigenvectors of $H_j$, and since $H_j$ acts on a constant number of qubits, 
the mentioned POVM can be implemented by a constant size circuit $W_j$ that acts on the 
qubits affected by $H_j$ and an auxiliary output qubit. The action of $W_j$ on the orthogonal 
system of eigenvectors of $H_j$ is given by 
$$W_j:\ket{\psi_s,0}\rightarrow\ket{\psi_s}\otimes (\sqrt{\lambda_s}\ket{0} + \sqrt{1-\lambda_s}\ket{1}).$$
The probability of measuring $1$ at 
the output bit of $W_j$ is given by 
$$Pr_1(W_j) =\bra{\eta,0}\,W_j^{\dagger} (I\otimes \ket{1}\bra{1})W_j\,\ket{\eta,0} = 1-\bra{\eta}H_j\ket{\eta}.$$

Now consider a circuit $C_H$ which implements the following verification process.  First, the verifier selects a number 
$r\in \{1,...,m\}$ uniformly at random. Subsequently, when reading the witness $\ket{\eta}$ provided by Merlin, the verifier 
ignores all qubits but those which are affected by $W_r$. Finally 
when all relevant qubits have been read, the verifier applies the sub-circuit $W_r$ to these relevant qubits.
The overall acceptance probability of the circuit $C_H$ is given by 
$$Pr(C_H,\ket{\eta}) = \sum_{j}\frac{1}{r}Pr(W_j,\ket{\eta}) = 1 - m^{-1} \bra{\eta} H \ket{\eta}.$$
In particular, if $\ket{\eta}$ is an eigenvector of $H$ with eigenvalue smaller than $a$, then the 
acceptance probability of $C_H$ is greater than $1-m^{-1}a$ while if every eigenvector of $H$
has eigenvalue at least $b$ then the acceptance probability of $C_H$ is at most $1-m^{-1}b$.
The circuit $C_H$ can clearly be implemented in online-width $O(\log n)$  since we just need $O(\log n)$ 
bits to implement the random choice of $r$. $\square$ 
\end{proof}

To prove Theorem \ref{theorem:LogarithmicTreewidthQMA}, it remains to show that both 
the soundness and the completeness in Lemma \ref{lemma:LocalHamiltonianConversion} 
can be amplified with only a logarithmic increase in online-width. 
Let $q = n^{O(1)}$, $p(n) =  (a(n)+b(n))/2$, and let $\maj'(x_1,...,x_q)$ be a circuit that implements the following 
variant of the majority function. 

\begin{equation}
\label{equation:majority}
\maj'(x_1,...,x_q) = \left\{
\begin{array}{lcr}
1 & & \mbox{ if $\sum_{j=1}^q x_j \geq p(n)\cdot q$} \\
0 & & \mbox{ if $\sum_{j=1}^q x_j < p(n)\cdot q$} \\
\end{array}
\right. 
\end{equation}

Let $C_1,...,C_q$ be independent copies of the circuit $C_H$. 
Let ${C' = \maj'(C_1,...,C_q)}$ be the circuit obtained from $\maj'(x_1,...,x_q)$ by 
identifying the output of $C_i$ with the $i$-th input of $\maj'(x_1,...,x_q)$. 
It can be shown (See \cite{KitaevShenVyalyi2002} Lemma $14.1$) that 
if there exists a witness $\ket{\eta}\in \hilbert_d^{\otimes n}$ such that 
$Pr(C_H,\ket{\eta})\geq a(n)$, then there is a witness $\ket{\eta'}\in \hilbert_d^{\otimes q n}$ such that 
$$Pr(C',\ket{\eta'})\geq 1-\exp(-\Omega(\mathit{poly}(n))).$$ 
On the other hand, if for every state $\ket{\eta}\in \hilbert_d^{\otimes n}$, 
$Pr(C_H,\ket{\eta})\leq b(n)$, then for every state $\ket{\eta'}\in \hilbert_d^{\otimes qn}$, 
the verifier accepts with probability at most $\exp(-poly(n))$. 
Similarly to the circuit computing $\maj(x_1,...,x_q)$ described in Section \ref{section:ProofLogarithmicTreewidthNP}, 
the circuit $\maj'(x_1,...,x_q)$ can be implemented in $O(\log_n)$ as a sequence 
$$\adder_1,\adder_2,...,\adder_q,\comparison'$$ of adder circuits followed by a comparator circuit $\comparison'$ which 
accepts if and only if the value at its input register is at least $p(n)\cdot q$. As in Section \ref{section:ProofLogarithmicTreewidthNP}, this implies that the overall circuit $C'$ has online-width at most $w+O(\log n)$, since 
we can consider an ordering of the gates of $C'$ that executes all gates of $C_i$ before the gates of the adder 
circuit $\adder_i$, all gates of $\adder_i$ before the gates of $C_{i+1}$, and all gates of $\adder_q$ before 
all gates of $\comparison'$ (Fig. \ref{figure:Majority-Vote}). $\square$

\section{Conclusion and Open Problems}
\label{section:Conclusion}

In this work we have introduced the notion of {\em feasibility tensor network}. We have shown that the problem of computing 
a classical assignment $y\in \{0,1\}^n$ that maximizes the acceptance probability of a quantum circuit $C$ with $n$ uninitialized inputs and $\poly(n)$ gates can be 
reduced to the problem of finding an initialization of maximum value for a feasibility tensor network. Using this reduction,
we have shown that if $C$ has treewidth $t$, then a $\delta$-optimal assignment for $C$ can be found in time $(n/\delta)^{\exp(O(t))}$. Therefore 
we have provided the first example of quantum optimization problem that can be solved in polynomial time on quantum circuits of constant treewidth.

We have also provided new characterizations of the complexity classes $\mathrm{NP}$ and $\mathrm{QMA}$ 
in terms of Merlin-Arthur protocols in 
which the verifier is a circuit of logarithmic treewidth, by showing that $\mathrm{QCMA}[\treewidth,O(\log n)] = \mathrm{NP}$ and 
that $\mathrm{QMA}[\treewidth,O(\log n)] = \mathrm{QMA}$. In other words, we have shown 
that quantum witnesses are inherently more powerful than classical witnesses for Merlin-Arthur protocols 
with verifiers of logarithmic treewidth, assuming $\mathrm{QMA}\neq \mathrm{NP}$. Our main theorem 
implies that $\mathrm{QCMA}[\treewidth,O(1)] \subseteq \mathrm{P}$. However we were not able to determine 
whether an analog inclusion can be proved 
when the verifier has constant width and the witness is allowed to be an arbitrary quantum state. More precisely,
the following question is left open: Is $\mathrm{QMA}[\treewidth,O(1)]\subseteq \mathrm{P}$?

The $\mathrm{NP}$-hardness of the problem of computing optimal classical assignments for quantum circuits of logarithmic treewidth 
imposes some constraints on the possibility of drastically improving the running time of our algorithm. 
However we leave the following question open: Is the problem of computing $\delta$-optimal classical 
assignments for quantum circuits in FPT with respect to treewidth? More precisely, can this problem be solved in time $f(t)\cdot \poly(n,\delta)$? We observe 
that while in the case of classical circuits one can determine the existence of a satisfying assignment in time $2^{O(t)}\cdot n^{O(1)}$ 
\cite{AlekhnovichRazborov2002,AllenderChenLouPeriklisPapakonstantinouTang2014}, 
the fact that $\mathrm{QCMA}[\treewidth,O(\log n)] = \mathrm{NP}$ 
implies that in the case of quantum circuits the function $f(t)$ should be at 
least double exponential in $t$, assuming the exponential time 
hypothesis (ETH) \cite{ImpagliazzoPaturi2001}.

\bibliographystyle{abbrv}

\begin{thebibliography}{10}

\bibitem{AharonovKitaevNisan1998}
D.~Aharonov, A.~Kitaev, and N.~Nisan.
\newblock Quantum circuits with mixed states.
\newblock In {\em Proc. of the 30th Symposium on Theory of Computing}, pages
  20--30, 1998.

\bibitem{AharonovNaveh2002}
D.~Aharonov and T.~Naveh.
\newblock Quantum {NP} - {A} survey.
\newblock {\em arXiv preprint quant-ph/0210077}, 2002.

\bibitem{AlekhnovichRazborov2002}
M.~Alekhnovich and A.~A. Razborov.
\newblock Satisfiability, branch-width and Tseitin tautologies.
\newblock In {\em Proc. of the 43rd Symposium on Foundations of Computer
  Science}, pages 593--603, 2002.

\bibitem{AllenderChenLouPeriklisPapakonstantinouTang2014}
E.~Allender, S.~Chen, T.~Lou, P.~A. Papakonstantinou, and B.~Tang.
\newblock Width-parametrized {SAT}: Time--space tradeoffs.
\newblock {\em Theory of Computing}, 10(12):297--339, 2014.

\bibitem{ArnborgLagergrenSeese1991}
S.~Arnborg, J.~Lagergren, and D.~Seese.
\newblock Easy problems for tree-decomposable graphs.
\newblock {\em Journal of Algorithms}, 12(2):308--340, 1991.

\bibitem{ArnborgProskurowski1989}
S.~Arnborg and A.~Proskurowski.
\newblock Linear time algorithms for {NP}-hard problems restricted to partial
  $k$-trees.
\newblock {\em Discrete Applied Mathematics}, 23(1):11--24, 1989.

\bibitem{Babai1992}
L.~Babai.
\newblock Bounded round interactive proofs in finite groups.
\newblock {\em SIAM Journal on Discrete Mathematics}, 5(1):88--111, 1992.

\bibitem{Bodlaender1988}
H.~L. Bodlaender.
\newblock Classes of graphs with with bounded treewidth.
\newblock Bulletin of the EATCS, 36:116-126, 1988.

\bibitem{Bodlaender1989}
H.~L. Bodlaender.
\newblock {NC}-algorithms for graphs with small treewidth.
\newblock In {\em Proc. of the 14th International Workshop on Graph-Theoretic
  Concepts in Computer Science}, volume 344 of {\em LNCS}, pages 1--10, Springer, 1989.

\bibitem{BodlaenderFominKosterKratschThilikos2012}
H.~L. Bodlaender, F.~V. Fomin, A.~M. Koster, D.~Kratsch, and D.~M. Thilikos.
\newblock On exact algorithms for treewidth.
\newblock {\em ACM Transactions on Algorithms}, 9(1):12, 2012.

\bibitem{Bookatz2014}
A.~D. Bookatz.
\newblock {QMA}-complete problems.
\newblock {\em Quantum Information {\&} Computation}, 14(5-6):361--383, 2014.

\bibitem{BroeringLokamSatyanarayana2004}
E.~Broering and S.~V. Lokam.
\newblock Width-based algorithms for {SAT} and {CIRCUIT-SAT}.
\newblock In {\em Proc. of the 6th International Conference on Theory and Applications of Satisfiability Testing}, 
volume 2919 of {\em LNCS}, pages  162--171. Springer, 2004.

\bibitem{Courcelle1990Monadic}
B.~Courcelle.
\newblock The monadic second-order logic of graphs {I}. {R}ecognizable sets of
  finite graphs.
\newblock {\em Information and computation}, 85(1):12--75, 1990.

\bibitem{deOliveiraOliveira2015Satisfiability}
M. de Oliveira Oliveira.
\newblock On the satisfiability of quantum circuits of small treewidth. 
\newblock In {\em Proc. of the 10th International Computer Science Symposium in Russia}, volume 9139 of {\em LNCS}, 
pages 157--172, Springer, 2015.

\bibitem{GeorgiouKonstantinosPapakonstantinou2008}
K.~Georgiou and P.~A. Papakonstantinou.
\newblock Complexity and algorithms for well-structured k-SAT instances.
\newblock In {\em Proc. of the 11th International Conference on Theory and
  Applications of Satisfiability Testing}, volume 4996 of {\em LNCS}, pages 105--118, Springer, 2008.

\bibitem{Gottesman1998}
D.~Gottesman.
\newblock The {H}eisenberg representation of quantum computers.
\newblock {\em arXiv preprint quant-ph/9807006}, 1998.

\bibitem{ImpagliazzoPaturi2001}
R.~Impagliazzo and R.~Paturi.
\newblock On the complexity of $k$-{SAT}.
\newblock {\em Journal of Computer and System Sciences} 62(2):367--375, 2001.


\bibitem{JozsaLinden2003}
R.~Jozsa and N.~Linden.
\newblock On the role of entanglement in quantum-computational speed-up.
\newblock {\em Proc. of the Royal Society of London, Series A},
  459(2036):2011--2032, 2003.

\bibitem{KitaevShenVyalyi2002}
A.~Kitaev, A.~Shen, and M.~Vyalyi.
\newblock {\em Classical and Quantum Computation}, volume~47 of {\em Graduate
  Studies in Mathematics}.
\newblock {AMS}, 2002.

\bibitem{MarkovShi2008}
I.~L. Markov and Y.~Shi.
\newblock Simulating quantum computation by contracting tensor networks.
\newblock {\em SIAM Journal on Computing}, 38(3):963--981, 2008.

\bibitem{NielsenChuang2010}
M.~A. Nielsen and I.~L. Chuang.
\newblock {\em Quantum computation and quantum information}.
\newblock Cambridge university press, 2010.

\bibitem{RobertsonSeymour1984}
N.~Robertson and P.~D. Seymour.
\newblock Graph minors {III}. {P}lanar tree-width.
\newblock {\em Journal of Combinatorial Theory, Series B}, 36(1):49--64, 1984.

\bibitem{RobertsonSeymour1995}
N.~Robertson and P.~D. Seymour.
\newblock Graph minors {XIII}. {T}he disjoint paths problem.
\newblock {\em Journal of Combinatorial Theory, Series B}, 63(1):65--110, 1995.

\bibitem{BodlaenderThilikosSerna2000}
D.~M. Thilikos, M.~J. Serna, and H.~L. Bodlaender.
\newblock Constructive linear time algorithms for small cutwidth and
  carving-width.
\newblock In {\em Proc. of the 11th International Conference on Algorithms and
  Computation}, volume 1969 of {\em LNCS}, pages 192--203, Springer, 2000.

\bibitem{Valiant2002}
L.~G. Valiant.
\newblock Quantum circuits that can be simulated classically in polynomial
  time.
\newblock {\em SIAM Journal on Computing}, 31(4):1229--1254, 2002.

\bibitem{Vidal2003}
G.~Vidal.
\newblock Efficient classical simulation of slightly entangled quantum
  computations.
\newblock {\em Physical Review Letters}, 91:147902, Oct 2003.

\bibitem{Watrous2000}
J.~Watrous.
\newblock Succinct quantum proofs for properties of finite groups.
\newblock In {\em Proc. of the 41st Symposium on Foundations of Computer
  Science}, pages 537--546, 2000.

\end{thebibliography}

\end{document}